\thanks{A preliminary version of this paper will appear in the Proceedings of AAMAS 2026 under the title ``Minimax and Preferential Almost-Stable Matchings''.}
\pgfplotsset{compat=1.18}
\newtcolorbox{problemBox}{
  colback=white,
  colframe=black,
  boxrule=0.5pt,
  arc=2pt,
  left=4pt,
  right=4pt,
  top=4pt,
  bottom=4pt
}
\newtheorem{definition}{Definition}[section]
\newtheorem{theorem}[definition]{Theorem}
\newtheorem{corollary}[definition]{Corollary}
\newtheorem{prop}[definition]{Proposition}
\newtheorem{claim}{Claim}
\begin{document}

\title{A Minimax Perspective on Almost-Stable Matchings}

\author{Frederik Glitzner}
\email{f.glitzner.1@research.gla.ac.uk}
\orcid{0009-0002-2815-6368}
\author{David Manlove}
\email{david.manlove@glasgow.ac.uk}
\orcid{0000-0001-6754-7308}
\affiliation{%
  \institution{University of Glasgow}
  \city{Glasgow}
  \country{United Kingdom}
}

\renewcommand{\shortauthors}{F. Glitzner and D. Manlove}

\begin{abstract}
    Stability is crucial in matching markets, yet in many real-world settings -- from hospital residency allocations to roommate assignments -- full stability is either impossible to achieve or can come at the cost of leaving many agents unmatched. When stability cannot be achieved, algorithmicists and market designers face a critical question: how should instability be measured and distributed among participants? Existing approaches to ``almost-stable'' matchings focus on aggregate measures, minimising either the total number of blocking pairs or the count of agents involved in blocking pairs. However, such aggregate objectives can result in concentrated instability on a few individual agents, raising concerns about fairness and incentives to deviate. We introduce a fairness-oriented approach to approximate stability based on the minimax principle: we seek matchings that minimise the maximum number of blocking pairs any agent finds themselves in. Equivalently, we minimise the maximum number of agents that anyone has justified envy towards. This distributional objective protects the worst-off agents from bearing a disproportionate amount of instability. 
    
    We characterise the computational complexity of this notion across fundamental matching settings. Surprisingly, even very modest guarantees with respect to the distribution of instability prove computationally intractable: we show that it is {\sf NP-complete} to decide whether a matching exists in which no agent is in more than one blocking pair, even when preference lists have constant-bounded length. This hardness applies to both {\sc Stable Roommates} and maximum cardinality {\sc Stable Marriage} problems. On the positive side, we provide polynomial-time algorithms when agents rank at most two others, and present approximation algorithms and integer programs for general settings. Our results map the algorithmic landscape and reveal fundamental trade-offs between distributional guarantees on justified envy and computational feasibility in matching market design.
\end{abstract}

\begin{CCSXML}
<ccs2012>
   <concept>
       <concept_id>10002950.10003624.10003633.10003642</concept_id>
       <concept_desc>Mathematics of computing~Matchings and factors</concept_desc>
       <concept_significance>500</concept_significance>
       </concept>
   <concept>
       <concept_id>10003752.10003809</concept_id>
       <concept_desc>Theory of computation~Design and analysis of algorithms</concept_desc>
       <concept_significance>500</concept_significance>
       </concept>
   <concept>
       <concept_id>10003752.10010070.10010099</concept_id>
       <concept_desc>Theory of computation~Algorithmic game theory and mechanism design</concept_desc>
       <concept_significance>500</concept_significance>
       </concept>
   <concept>
       <concept_id>10010405.10010455.10010460</concept_id>
       <concept_desc>Applied computing~Economics</concept_desc>
       <concept_significance>300</concept_significance>
       </concept>
 </ccs2012>
\end{CCSXML}

\ccsdesc[500]{Mathematics of computing~Matchings and factors}
\ccsdesc[500]{Theory of computation~Design and analysis of algorithms}
\ccsdesc[500]{Theory of computation~Algorithmic game theory and mechanism design}
\ccsdesc[300]{Applied computing~Economics}

\keywords{Stable Marriage, Stable Roommates, Almost-Stable Matching, Fairness, Approximation Algorithm, Computational Complexity}


\maketitle

\section{Introduction}

\subsection{Motivation}

Stability in matching markets ensures that no pair of agents would prefer to deviate from their assigned matches by becoming assigned to one another instead. This property underpins trusted real-world mechanisms from medical residency allocations to school and campus housing assignments. Yet, in many real-world settings, stability is either impossible to achieve or comes at a significant trade-off with respect to the number of matched agents. Thus, a central question for economists and computer scientists alike is whether instability can be eliminated, and how it should be measured and distributed among participants if it cannot be avoided.

The two foundational stable matching problems {\sc Stable Marriage} and {\sc Stable Roommates} were first studied by Gale and Shapley \cite{gale_shapley}. In the former, there are two disjoint sets of agents $A_1$ and $A_2$, where each agent ranks a subset of the opposite set. The latter problem generalises this: there is a single set $A$ of agents, each ranking a subset of the remaining agents. We will denote these problems by {\sc smi} and {\sc sri}, respectively, to indicate that incomplete preference lists, i.e., preferences over a strict subset of agents (implying that all unranked agents are not acceptable), are permissible. In both settings, the goal is to find a \emph{matching} consisting of disjoint pairs of mutually acceptable agents, such that no two agents prefer each other to their partners (either or both of which could be none). If such a pair does exist, we refer to it as a \emph{blocking pair}. A \emph{stable matching} avoids such pairs, ensuring that no two agents have a mutual incentive to deviate and become matched to one another. We highlight that in both {\sc smi} and {\sc sri}, the notion of a blocking pair coincides with the notion of \emph{justified envy} \cite{BALINSKI199973}.

{\sc smi} and {\sc sri} underpin many real-world allocation tasks, such as matching doctors to hospitals \cite{irving1998matching}, students to dormitories \cite{abraham_grp}, or users to communication channels \cite{noma}. Much of the literature focuses on efficient algorithms for these centralised schemes \cite{matchup}. While a stable matching for an {\sc smi} instance can always be found in linear time using the Gale-Shapley algorithm \cite{gale_shapley}, not every {\sc sri} instance admits a stable matching (see \cite{gusfield89} for an extensive structural overview). If a stable matching exists, one can be found in linear time using Irving's algorithm \cite{irving_sr}. 

Two key challenges arise in practice. Incomplete preference lists can prevent stable matchings from maximising the number of matched agents, resulting in unfilled positions and unmatched agents despite possible improvements to the size \cite{biro_sm_10}. Secondly, in non-bipartite matching settings, stable matchings often do not exist at all \cite{glitzner2025empirics}. Gale and Sotomayor \cite{galesoto85} and Gusfield and Irving \cite{gusfield89} observed that, under strict preferences, the set of matched agents is identical in all stable matchings. Consequently, there is no flexibility to increase the size of the matching while maintaining stability. A large number of unmatched agents can, in practical settings such as school admissions or job markets, be considered a significant weakness, and experimental works have highlighted the practical importance of relaxing stability for a substantial improvement in this regard (e.g., see \cite{bertsimas2025relaxstabilitymatchingmarkets} for a recent perspective). 

When forced to accept some instability, existing work has largely focused on aggregate instability measures that, for example, minimise the total number of blocking pairs or minimise the number of agents involved in at least one blocking pair \cite{matchup,chen17}. While Eriksson and Häggström \cite{Eriksson2008} argued that counting blocking pairs is the most indicative measure of instability, as each such pair has both the incentive and the opportunity to deviate, aggregate approaches can permit large concentrated instability. An agent who finds themselves in ten blocking pairs, i.e., who has justified envy towards ten other agents, may rightly feel that they have been treated unfairly and, furthermore, has an exceptionally large incentive to motivate other participants to reorganise among themselves and cause unravelling. Thus, both from the perspective of individual fairness, as well as from a robustness perspective, this appears problematic: if each blocking pair represents a legitimate injustice and causes a realistic potential for deviations and unravelling, why should we allow blocking pairs to accumulate in few individuals?

In this paper, we continue research on algorithmic and complexity results related to decision and optimisation problems concerned with matchings and maximum-cardinality matchings that are ``as stable as possible'', which are also referred to as  ``almost-stable'' matchings \cite{abraham06,biro_sm_10}, and resolve open questions on the complexity of such problems \cite{chen17,chen2025fptapproximabilitystablematchingproblems,gupta2020parameterized}.

\subsection{Our Contributions}

Given the practical importance of matchings that minimise instability while maximising size, strong intractability results involving the computation of such matchings are discouraging. We therefore aim to understand the tractability frontier through a different lens. Rather than aggregate measures of instability, we adopt the classical principle of \emph{minimax fairness}: protecting the worst-off individual from excessive injustice, and simultaneously decreasing individuals' incentives to deviate and the risk of unravelling. Specifically, we study a new notion of almost-stability that minimises the maximum number of blocking pairs that any agent finds themselves in. Equivalently, we minimise the maximum number of agents that any single agent has justified envy towards. This distributional objective accepts that full stability may not be achievable, but ensures that no single agent bears a disproportionate burden of instability. Our approach aligns with fairness principles in social choice that prioritise the leximin allocation of bad outcomes, rather than maximising the utilitarian welfare. We characterise the computational complexity of this fairness-constrained notion of almost-stability with surprisingly strong intractability results. Specifically, we show that even deciding whether a matching (respectively maximum-cardinality matching) exists in which no agent is in more than one blocking pair is {\sf NP-complete} for {\sc sri} ({\sc smi}), even when preference lists are of short constant-bounded length. We also give additional strong intractability results and extend these to strong inapproximability results for the corresponding optimisation problems of interest. On the positive side, we present polynomial-time algorithms for instances with short preference lists, an approximation algorithm, and exact linear programming models. Finally, we show through experiments that almost-stable matchings and maximum-cardinality matchings that distribute the burden of instability well among agents are very likely to exist.

\subsection{Structure of the Paper}

First, in Section \ref{sec:background}, we define crucial concepts of this paper and highlight relevant related works. Then, we start by introducing the minimax almost-stability notion in Section \ref{sec:minimax}. There, we also provide crucial observations about their relationship to other, previously studied, notions, and structural characterisations. In Section \ref{sec:intractability}, we highlight strong intractability results for the computation of minimax almost-stable matchings in various settings. On the positive side, we provide polynomial-time algorithms in Section \ref{sec:short} that compute optimal solutions for instances with preference lists of length at most 2. Furthermore, in Section \ref{sec:approx}, we investigate the approximability of our optimisation problems of interest. In Section \ref{sec:ilp}, we provide integer programming formulations to solve our problems to optimality, and present experimental results. Finally, we summarise new and known complexity results in Table \ref{table:results} and pose open questions in Section \ref{sec:conclusion}.

\section{Background}
\label{sec:background}

\subsection{Formal Definitions}

We start by rigorously defining the models that we already introduced informally.

\begin{definition}[{\sc sri} Instance]
    Let $I=(A,\succ)$ be a {\sc Stable Roommates with Incomplete Lists} ({\sc sri}) \emph{instance}, where $A=\{ a_1, a_2, \dots, a_n \}$ is a set of $n\in \mathbb{N}$ agents, and $\succ$ is a tuple of strict ordinal \emph{preference rankings} $\succ_i$ of agents $a_i$ over arbitrary subsets of $A\setminus\{a_i\}$. $a_i$ \emph{prefers} $a_j$ to $a_k$ if $a_j\succ_ia_k$. We write $a_j\succeq_ia_k$ if $a_j\succ_i a_k$ or $j=k$. $a_i$ and $a_j$ are \emph{acceptable} if $a_i$ appears in $\succ_j$ and vice versa. Throughout, we assume that acceptability is symmetric.

    We can equivalently treat an {\sc sri} instance $I=(A,\succ)$ as a graph $G=(A,E)$: let every agent $a_i\in A$ be a vertex in the graph and let all acceptable pairs of agents $\{a_i,a_j\}$ be in $E$. We refer to $G$ as the \emph{acceptability graph} and say that $I$ has \emph{complete preferences} if $G$ is isomorphic to the complete graph $K_n$, and \emph{incomplete preferences} if not. We also assume that every agent ranks themselves last.
\end{definition}

\begin{definition}[{\sc smi} Instance]
    Let $I=(A,\succ)$ be an {\sc sri} instance and let $G$ be the acceptability graph of $I$. If $G$ is bipartite, i.e., there exist disjoint sets $A_1$ and $A_2$ such that $A=A_1\dot{\cup}A_2$ and no agents within $A_1$ or $A_2$ are mutually acceptable, then we refer to $I$ as a {\sc Stable Marriage with Incomplete Lists} ({\sc smi}) instance. Furthermore, we say that {\sc smi} instance $I$ has \emph{complete preferences} if $G$ is isomorphic to $K_{\vert A_1\vert,\vert A_2\vert}$, and \emph{incomplete preferences} if not.
\end{definition}

We adopt the following notation and assume the following concepts for matchings.

\begin{definition}[Matchings]
    Let $I=(A,\succ)$ be an {\sc sri} (or {\sc smi}) instance. A \emph{matching} $M$ of $I$ is a set of unordered pairs of agents such that no agent $a_i$ is contained in more than one pair. If $\{a_i,a_j\}\in M$, then we refer to $a_i$ as \emph{matched} and to $a_j$ as the \emph{partner} of $a_i$, denoted by $M(a_i)$, and vice versa. If $a_i$ is not paired with any other agent, we say that $a_i$ is \emph{unmatched} and set $M(a_i)=a_i$. 
    
    We denote the set of all matchings of $I$ by $\mathcal M$ and refer to some $M\in \mathcal M$ as a \emph{maximum-cardinality matching} if, for any $M'\in \mathcal M$, $\vert M\vert\geq \vert M'\vert$. We denote the set of all maximum-cardinality matchings by $\mathcal{M}^+$. We refer to $M\in \mathcal{M}^+$ as \emph{perfect} if every agent is matched to some agent other than themselves, and denote the set of such matchings by $\mathcal{M}^p$.
\end{definition}

We adopt the following classical stability notion.

\begin{definition}[Stability]
    Let $I=(A,\succ)$ be an {\sc sri} (or {\sc smi}) instance and let $M$ be a matching of $I$. A \emph{blocking pair} admitted by $M$ is a pair of distinct agents $a_i, a_j\in A$ such that $a_i\succ_j M(a_j)$ and $a_j\succ_i M(a_i)$. We denote the \emph{set of blocking pairs} by $bp(M)$, and the subset thereof that involves an agent $a_i$ by $bp_{a_i}(M)$. Furthermore, the set of \emph{blocking agents} $ba(M)$ consists of all agents involved in blocking pairs.
    
    If $bp(M)=\varnothing$, then $M$ is referred to as \emph{stable}, and as \emph{unstable} otherwise. We refer to $I$ as \emph{solvable} if it admits at least one stable matching and as \emph{unsolvable} otherwise.
\end{definition}

In the absence of stable matchings in {\sc sr}, we previously argued that one might want to find an ``almost-stable'' matching that minimises the number of blocking pairs \cite{abraham06}. A definition of this problem follows below.

\begin{problemBox}
{\sc MinBP-AlmostStable-sri} \\
\textbf{Input:} {\sc sri} instance $I$. \\
\textbf{Output:} A matching $M$ of $I$ such that $\vert bp(M)\vert=\min_{M'\in \mathcal M}\vert bp(M')\vert$.
\end{problemBox}

As noted before, even when considering only {\sc smi} instances, the set of stable matchings might not intersect with $\mathcal {M}^+$. Then, we may want to find a maximum-cardinality matching that is ``as stable as possible'', i.e., minimises the number of blocking pairs \cite{biro_sm_10}. This problem is defined as follows.

\begin{problemBox}
{\sc MinBP-AlmostStable-Max-smi} \\
\textbf{Input:} {\sc smi} instance $I$. \\
\textbf{Output:} A maximum-cardinality matching $M$ of $I$ such that $\vert bp(M)\vert=\min_{M'\in \mathcal M^+}\vert bp(M')\vert$.
\end{problemBox}

For our complexity reductions, we will take advantage of the fact that the problem {\sc (2,2)-e3-sat}, which we formally define below, is known to be {\sf NP-complete} \cite{Berman2003Max3SAT}.

\begin{problemBox}
{\sc (2,2)-e3-sat} \\[4pt]
\textbf{Input:} Boolean formula $B$ in conjunctive normal form, where every variable occurs exactly twice unnegated and twice negated, and every clause contains exactly three literals. \\[2pt]
\textbf{Question:} Does there exist a satisfying truth assignment $f$ of $B$?
\end{problemBox}

To illustrate this problem for 3 variables $V_i$, $B=(V_1\lor V_2\lor V_3)\land(\bar{V_1}\lor \bar{V_2}\lor \bar{V_3})\land(V_1\lor \bar{V_2}\lor V_3)\land(\bar{V_1}\lor V_2\lor \bar{V_3})$ is a yes-instance to {\sc (2,2)-e3-sat} because it admits satisfying truth assignments $f:V\rightarrow \{T,F\}$. For example, the assignment $f(V_1)=T,f(V_2)=F$ and $f(V_3)=F$ satisfies $B$.

\subsection{Related Work}
\label{sec:related}

{\sc MinBP-AlmostStable-sri} was shown to be {\sf NP-hard} regardless of whether preference lists are of length at most 3 or complete \cite{abraham06,biro12}, {\sf NP-hard} to approximate within $n^{1/2-\varepsilon}$ (for any $\varepsilon>0$, where $n$ is the number of agents) \cite{abraham06}, and {\sf W[1]-hard} with respect to $\min_{M\in \mathcal M}\vert bp(M)\vert$ even when preference lists are of length at most 5 \cite{chen17}.

Biró, Manlove and Mittal \cite{biro_sm_10} conducted a thorough study of the computational complexity of the trade-off between matching size and stability in {\sc smi}, where stability is measured in terms of the number of blocking pairs or blocking agents. They showed that {\sc MinBP-AlmostStable-Max-smi} (and similarly for blocking agents) is {\sf NP-hard} to approximate within $n^{1-\varepsilon}$ (for any $\varepsilon>0$) \cite{biro_sm_10}. Hamada et al. \cite{hamada09} strengthened this result to the case where all preference lists are of length at most 3. On the positive side, the problem is known to be solvable in polynomial time when one side of the bipartition has preference lists of length at most 2 \cite{biro_sm_10}. Gupta et al. \cite{gupta2020parameterized} showed that {\sc MinBP-AlmostStable-Max-smi} remains {\sf W[1]-hard} for the combined parameter $(\beta,t,d)$, where $\beta$ is the number of blocking pairs, $t$ is the increase in size to a stable matching, and $d$ is the maximum length of a preference list.

In a very recent preprint, Chen, Roy and Simola \cite{chen2025fptapproximabilitystablematchingproblems} proved that it is {\sf W[1]-hard} with respect to the optimal value to even approximate {\sc MinBP-AlmostStable-sri} and {\sc MinBP-AlmostStable-Max-smi} within any computable function depending only on this optimal value.

We will denote the decision problems ``Does there exist a matching $M$ of $I$ such that $\vert bp(M)\vert\leq k$?'' and ``Does there exist a perfect matching $M$ of bipartite instance $I$ such that $\vert bp(M)\vert\leq k$?'' by {\sc $k$-BP-AlmostStable-sri} and {\sc $k$-BP-AlmostStable-Perfect-smi}, respectively. Both problems are known to be {\sf NP-complete} \cite{abraham06,biro12,biro_sm_10,chen2019computational}, but in {\sf XP} with respect to $k$ \cite{abraham06,biro_sm_10}. It has been shown that minimising the number of blocking agents instead also remains computationally intractable \cite{biro_sm_10,chen17,chen2019computational}.

Different definitions of almost-stability have also been considered from a communication-complexity perspective, e.g., by Floréen et al. \cite{Floreen2010} and Ostrovsky and Rosenbaum \cite{OstrovskyRosenbaum2015}. Work on almost-stable matchings for other problems has also been conducted (e.g., see \cite{gupta2020parameterized,minbp_hrc_17,couples24}). Other alternative solution concepts for {\sc sri} instances that are different from almost-stable matchings have also been explored (see, for example, \cite{gupta_popsr_21, faenza_pop,tan91_2,herings25,vandomme2025locally,glitzner24sagt,glitzner25sagt}). An alternative perspective on minimising instability in {\sc smi} was offered by Biermann \cite{Biermann2011Measure}, who argued that, instead of counting blocking pairs, one should count a subset thereof that comprises a possible and economically reasonable transformation of the market. Similar perspectives have been studied for {\sc sri} by Atay, Mauleon and Vannetelbosch \cite{ATAY2021102465}, for example.

\section{Introducing Minimax Almost-Stability}
\label{sec:minimax}

\subsection{Motivation and Problem Definitions}

In many real-world matching markets, agents contained in multiple blocking pairs not only have an increased opportunity to deviate and potentially cause unravelling, but they may also feel unfairly treated, knowing that they are in a position in which they could be better off after a decentralised reorganisation. To address this, we initiate the study of minimax almost-stable matchings, which optimise the solution for the worst-off agent by minimising the maximum number of blocking pairs any agent is involved in. This approach provides strong individual-level fairness guarantees and is particularly relevant in high-stakes multi-agent settings where agents may feel a critical need to be well-off. The two natural optimisation problems that we newly introduce and study in this paper are formally defined below. The first one, denoted by {\sc Minimax-AlmostStable-sri}, aims for a matching of an {\sc sri} instance with a minimal maximum number of blocking pairs for any agent.

\begin{problemBox}
{\sc Minimax-AlmostStable-sri} \\
\textbf{Input:} {\sc sri} instance $I=(A,\succ)$. \\
\textbf{Output:} A matching $M\in\mathcal M$ such that $\max_{a_i\in A}\vert bp_{a_i}(M)\vert=\min_{M' \in \mathcal M}\max_{a_i\in A}\vert bp_{a_i}(M')\vert$.
\end{problemBox}

The second problem, denoted by {\sc Minimax-AlmostStable-Max-smi}, is motivated naturally by the setting where the central requirement is to maximise the size of the matching and, subject to this, we aim to minimise the maximum number of blocking pairs that any agent is contained in. We study this in the restricted setting of {\sc smi}, as we will show that even in this setting the problem turns out to be highly intractable. However, the problem could also be defined more generally in the {\sc sri} context, which we will denote by {\sc Minimax-AlmostStable-Max-sri}.

\begin{problemBox}
{\sc Minimax-AlmostStable-Max-smi} \\
\textbf{Input:} {\sc smi} instance $I=(A,\succ)$. \\
\textbf{Output:} A matching $M\in \mathcal{M}^+$ such that $\max_{a_i\in A}\vert bp_{a_i}(M)\vert =\min_{M'\in \mathcal{M}^+} \max_{a_i\in A}\vert bp_{a_i}(M')\vert$.
\end{problemBox}

Now, in Section \ref{sec:comp}, we will first look at the compatibility of different notions of almost-stability and show that minimax almost-stability is fundamentally different from the others. We will then, in Section \ref{sec:structure}, provide further structural insights into the structure of these problems.

\subsection{Differentiating Almost-Stability Notions}
\label{sec:comp}

Minimax almost-stability differs fundamentally from aggregate stability metrics (such as minimising the total number of blocking pairs \cite{abraham06,biro_sm_10,chen17}) by prioritising the worst-off agent (in terms of the number of favourable deviation opportunities). Furthermore, it contrasts the minimisation of blocking agents \cite{biro_sm_10,chen17}, which likely requires fewer agents with a stronger incentive to deviate, therefore causing a greater risk of unravelling.

We give the following examples to highlight the differences between these three notions. These are for the {\sc sri} setting, but similar examples apply for {\sc Max-smi}.

\begin{prop}
\label{prop:minbpminba}
    A matching that admits a minimum number of blocking agents might not admit a minimum number of blocking pairs, and vice versa.
\end{prop}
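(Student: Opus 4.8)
The plan is to prove this separation by exhibiting explicit counterexamples, since the two objectives can be forced to disagree. The mechanism I would exploit is that the blocking pairs admitted by a matching are naturally the edges of a ``blocking graph'' whose non-isolated vertices are exactly the blocking agents: thus $\vert bp(M)\vert$ counts edges while $\vert ba(M)\vert$ counts vertices. A matching whose blocking pairs are vertex-disjoint spends two agents per pair, whereas one whose blocking pairs share endpoints concentrates instability on few agents. Concretely, I would construct a single {\sc sri} instance $I$ admitting a matching $M_1$ whose blocking graph is a pair of disjoint edges, so that $\vert bp(M_1)\vert=2$ and $\vert ba(M_1)\vert=4$, together with a matching $M_2$ whose blocking graph is a triangle, so that $\vert bp(M_2)\vert=3$ and $\vert ba(M_2)\vert=3$. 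A short case check on the two graph shapes shows that $(2,4)$ versus $(3,3)$ is in fact the smallest possible divergent pattern.

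Given such an instance, both directions follow at once provided I can show $\min_{M\in\mathcal M}\vert bp(M)\vert=2$ and $\min_{M\in\mathcal M}\vert ba(M)\vert=3$. Then $M_1$ attains the minimum number of blocking pairs but has $\vert ba(M_1)\vert=4>3$, so it does not minimise blocking agents; symmetrically, $M_2$ attains the minimum number of blocking agents but has $\vert bp(M_2)\vert=3>2$, so it does not minimise blocking pairs. Note that it is harmless if $I$ also admits some further matching that is simultaneously optimal for both objectives: the proposition only asserts the existence of a matching optimal for one objective but not the other, which $M_1$ and $M_2$ already witness. One could instead split the argument across two smaller instances, one per direction, but a single unified instance is cleaner.

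For the construction, I would realise the triangle from three agents $x,y,z$ with cyclic preferences, arranged so that under $M_2$ each of them is matched to a partner ranked below the other two, forcing all of $\{x,y\}$, $\{y,z\}$ and $\{z,x\}$ to block simultaneously; I would give each such partner a top choice of its assigned agent, so the partners themselves introduce no further blocking pairs. Independently, I would attach a gadget which, under the alternative matching $M_1$, realigns $x,y,z$ into stable positions while triggering two unrelated, vertex-disjoint blocking pairs elsewhere. Short cyclic ``roommates'' sub-instances of the kind used throughout the almost-stability literature are well suited to this, and keeping all preference lists of length at most three fits the paper's emphasis on bounded lists while keeping $\mathcal M$ small.

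The main obstacle I anticipate is verification rather than ideation: I must certify the two claimed minima. Here a useful simplification is that $\min_{M\in\mathcal M}\vert ba(M)\vert\leq 2$ holds if and only if $\min_{M\in\mathcal M}\vert bp(M)\vert\leq 1$, since two blocking agents can span at most one blocking pair and a single blocking pair involves exactly two agents. Hence the entire verification reduces to the single claim that $I$ admits no matching with at most one blocking pair; combined with $M_1$ and $M_2$ this pins down both minima as $2$ and $3$ respectively. Establishing that claim still requires a careful, albeit finite, case analysis over $\mathcal M$, confirming that $I$ is unsolvable and that the triangle and the disjoint pairs cannot be eliminated together. Keeping the instance as small as possible, ideally on the order of six to eight agents, is what makes this exhaustive check tractable and the counterexample convincing.
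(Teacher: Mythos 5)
There is a genuine gap: your argument is a plan whose load-bearing object --- the single instance $I$ with a matching $M_1$ whose blocking graph is two disjoint edges, a matching $M_2$ whose blocking graph is a triangle, and $\min_{M\in\mathcal M}\vert bp(M)\vert=2$ --- is never actually exhibited, and the construction you sketch cannot work as described. If $x,y,z$ have genuinely cyclic preferences with each other at the top of their lists (which is what you need for all three of $\{x,y\},\{y,z\},\{z,x\}$ to block under $M_2$ while the pendant partners stay quiet), then \emph{no} matching puts all of $x,y,z$ ``into stable positions'': any matching pairs at most one of the three pairs, and the cyclic structure forces at least one blocking pair inside $\{x,y,z\}$ in every matching. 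So $M_1$ cannot have both its blocking edges ``elsewhere''. Worse, your verification hinges on $\min\vert bp\vert\geq 2$, i.e.\ a second unit of \emph{unavoidable} instability beyond the one the cyclic triple supplies; but any such unavoidable second source (e.g.\ another odd cycle) would also contribute a blocking pair under $M_2$, contradicting $\vert ba(M_2)\vert=3$. Conversely, if the extra gadget gives $x,y,z$ better partners so they can be stabilised in $M_1$, those partners are matched strictly worse in $M_2$ and block with $x,y,z$ there, again destroying the triangle-only blocking graph. Escaping this tension requires the two sources of unavoidable instability to \emph{overlap} on the triangle's agents (so that in $M_2$ both sources are witnessed by triangle edges, while in $M_1$ they split into two disjoint edges), and nothing in your sketch engineers that.

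Your surrounding logic is sound --- the edge/vertex view of the blocking graph, the observation that $\min\vert ba\vert\leq 2$ iff $\min\vert bp\vert\leq 1$, the point that a matching simultaneously optimal for both objectives is harmless, and the forced $(2,4)$ vs.\ $(3,3)$ pattern for a single-instance witness are all correct. But the paper avoids the difficulty entirely by giving \emph{two} separate explicit instances, one per direction of the claim: in the first, two disjoint cyclic triples pin $\min\vert ba\vert$ at $4$ and a matching attaining it has $3$ blocking pairs while another matching (with the same blocking agents) has only $2$; in the second, a matching with the minimum of $2$ blocking pairs has $4$ blocking agents while another matching with $2$ blocking pairs has only $3$. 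Until you either supply and verify a concrete instance realising your $(2,4)/(3,3)$ pattern or fall back to the two-instance route, the proposition is not proved.
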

\begin{proof}
    Consider the {\sc sri} instance with preferences as follows:
    \begin{align*}
        a_1 &: a_2 \; a_3 \; a_4 \\
        a_2 &: a_3 \; a_1 \\
        a_3 &: a_1 \; a_2 \\
        a_4 &: a_5 \; a_6 \; a_1 \\
        a_5 &: a_6 \; a_4\\
        a_6 &: a_4 \; a_5 
    \end{align*}
    Then $M=\{\{a_2,a_3\},\{a_5,a_6\}\}$ admits four blocking agents ($a_1,a_3,a_4,a_6$), which is minimal (as agents $a_1,a_2,a_3$ form a directed cycle in the first and second preference list positions, and $a_4,a_5,a_6$ do too), but three blocking pairs ($\{a_1,a_2\},\{a_1,a_3\},\{a_4,a_6\}$). $M'=\{\{a_1,a_4\},\{a_2,a_3\},\{a_5,a_6\}\}$, on the other hand, admits the same blocking agents, but only admits two blocking pairs (this is easy to see as $M\subset M$, so $bp(M')\subseteq bp(M)$, and the blocking pair $\{a_1,a_4\}\in bp(M)$ is a match in $M'$).

    Now consider the {\sc sri} instance with preferences as follows:
    
    \begin{align*}
        a_1 &: a_2 \; a_3 \; a_4 \; a_5 \; a_6 \\
        a_2 &: a_3 \; a_1 \; a_4 \; a_5 \; a_6\\
        a_3 &: a_1 \; a_2 \; a_4 \; a_5 \; a_6\\
        a_4 &: a_5 \; a_1 \; a_4 \; a_5 \; a_6\\
        a_5 &: a_1 \; a_2 \; a_3 \; a_4 \; a_6 \\
        a_6 &: a_1 \; a_2 \; a_3 \; a_4 \; a_5 
    \end{align*}
    Then $M=\{\{a_1,a_4\},\{a_2,a_3\},\{a_5,a_6\}\}$ admits two blocking pairs ($\{a_1,a_3\},\{a_4,a_5\}$), which is minimal (this can be verified computationally), but four blocking agents ($a_1,a_3,a_4,a_5$). On the other hand, the matching $M'=\{\{a_1,a_6\},\{a_2,a_3\},\{a_4,a_5\}\}$ also admits two blocking pairs ($\{a_1,a_3\},\{a_1,a_5\}$), but only admits three blocking agents ($a_1,a_3,a_5$).
\end{proof}

\begin{prop}
    A matching that admits a minimum number of blocking agents might not admit a minimax number of blocking pairs per agent, and vice versa.
\end{prop}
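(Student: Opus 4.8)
The plan is to establish both non-implications by explicit {\sc sri} counterexamples, exactly in the style of the proof of Proposition~\ref{prop:minbpminba}: for a handful of candidate matchings I would read off the sets $bp_{a_i}(M)$, the blocking-agent set $ba(M)$, and the per-agent maximum $\max_{a_i\in A}|bp_{a_i}(M)|$ directly from the preference lists, and then compare the two objectives. The guiding intuition that dictates the construction is that $|ba(M)|$ is driven down by \emph{concentrating} several blocking pairs onto one shared agent (which inflates that agent's per-agent count), whereas the minimax value is driven down by \emph{spreading} blocking pairs across distinct agents (which inflates $|ba(M)|$). The two notions therefore pull in opposite directions, and each half of the statement is witnessed by an instance engineered to make one pull unavoidable.

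For the first direction (a matching minimising $|ba(M)|$ need not minimise $\max_{a_i}|bp_{a_i}(M)|$), I would reuse the first {\sc sri} instance from the proof of Proposition~\ref{prop:minbpminba}. There the matching $M=\{\{a_2,a_3\},\{a_5,a_6\}\}$ attains the minimum of four blocking agents, but its blocking pairs $\{a_1,a_3\}$, $\{a_1,a_4\}$, $\{a_4,a_6\}$ place both $a_1$ and $a_4$ in two blocking pairs, so $\max_{a_i}|bp_{a_i}(M)|=2$. The alternative $M'=\{\{a_1,a_4\},\{a_2,a_3\},\{a_5,a_6\}\}$ has the same four blocking agents, but its two blocking pairs $\{a_1,a_3\}$, $\{a_4,a_6\}$ are disjoint, giving $\max_{a_i}|bp_{a_i}(M')|=1$. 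Since that instance admits no stable matching (a short case analysis on where $a_1$ is matched, using the two odd rotations $a_1a_2a_3$ and $a_4a_5a_6$, shows every matching is blocked), the minimax optimum is exactly $1$, witnessed by $M'$. Hence $M$ minimises the number of blocking agents yet is not minimax-optimal.

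For the converse (a minimax-optimal matching need not minimise $|ba(M)|$), I would take the disjoint union of an unsolvable $3$-cycle and a small gadget admitting a stable sub-matching:
\begin{align*}
a_1 &: a_2 \; a_3 & b_1 &: b_2 \; b_3 \\
a_2 &: a_3 \; a_1 & b_2 &: b_1 \; b_3 \\
a_3 &: a_1 \; a_2 & b_3 &: b_1 \; b_2
\end{align*}
The $a$-component is the classical unsolvable instance, so the whole instance has no stable matching; the $b$-component has the stable sub-matching $\{b_1,b_2\}$. The matching $M_1=\{\{a_1,a_2\},\{b_1,b_2\}\}$ admits the single blocking pair $\{a_2,a_3\}$, so $ba(M_1)=\{a_2,a_3\}$ and $\max_{a_i}|bp_{a_i}(M_1)|=1$; thus the minimax optimum equals $1$ and the blocking-agent minimum equals $2$. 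The matching $M_2=\{\{a_1,a_2\},\{b_1,b_3\}\}$, however, admits the two \emph{disjoint} blocking pairs $\{a_2,a_3\}$ and $\{b_1,b_2\}$, so it remains minimax-optimal ($\max_{a_i}|bp_{a_i}(M_2)|=1$) while involving four blocking agents. Hence $M_2$ is minimax-optimal but fails to minimise the number of blocking agents.

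The main obstacle is not the arithmetic on any single matching but certifying the two global optima on which the argument rests: that the first instance has no stable matching (so its minimax optimum is $1$ rather than $0$, which I would confirm by the rotation case analysis, reusing the blocking-agent minimum of $4$ already established in Proposition~\ref{prop:minbpminba}), and that in the second instance the blocking-agent minimum is genuinely $2$ while the minimax optimum is genuinely $1$. The latter is immediate once no stable matching exists, since any matching has either $0$ blocking agents or at least $2$, and $M_1$ realises $2$. Once these two optima are pinned down, both non-implications follow directly from the exhibited matchings.
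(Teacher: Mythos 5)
Your proof is correct, and for the first non-implication it is identical to the paper's: the paper's own proof simply points back to the two examples in the proof of Proposition~\ref{prop:minbpminba}, and your first half reuses exactly the first of those instances, with the same matchings $M$ and $M'$ (you even correct a typo in the paper's listing of $bp(M)$, which should be $\{a_1,a_3\},\{a_1,a_4\},\{a_4,a_6\}$ rather than containing $\{a_1,a_2\}$). For the converse direction you diverge: the paper reuses the second example from Proposition~\ref{prop:minbpminba} (the six-agent instance with longer lists, where the minimality of two blocking pairs is only ``verified computationally''), whereas you build a fresh instance as the disjoint union of an unsolvable $3$-cycle and a solvable triple. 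Your construction is easy to certify by hand — unsolvability of the whole instance follows from the $a$-component alone, so the minimax optimum is $1$ and the blocking-agent minimum is $2$, both witnessed by $M_1$, while $M_2$ spreads its two blocking pairs over four agents — so it trades the paper's economy (no new instance) for a cleaner, fully self-contained verification. Both routes are valid; your second-direction gadget is arguably the more transparent of the two.
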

\begin{proof}
    This follows immediately from the examples in the proof of Proposition \ref{prop:minbpminba}: the first example shows that matching $M$ with a minimum number of blocking agents may have an agent in two blocking pairs, while there exists a matching $M'$ in which no agent is in more than one blocking pair. Similarly, in the second example, $M$ is a matching where no agent is in more than one blocking pair, but $M'$ has fewer blocking agents.
\end{proof}

\begin{prop}
    A matching that admits a minimax number of blocking pairs per agent might not admit a minimum number of blocking pairs.
\end{prop}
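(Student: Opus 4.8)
The plan is to exhibit a single {\sc sri} instance together with two matchings, one of which, $M$, attains the minimax value (no agent lies in more than one blocking pair) yet fails to minimise the total number of blocking pairs, while the other, $M'$, has strictly fewer blocking pairs. Concretely, I would take the classical unsolvable three-agent ``odd triangle'' $a_1: a_2\ a_3$, $a_2: a_3\ a_1$, $a_3: a_1\ a_2$ and adjoin a single isolated mutually-acceptable pair $a_4: a_5$, $a_5: a_4$. The triangle is exactly the structure already used in the first example of Proposition~\ref{prop:minbpminba}, and it guarantees unsolvability, which is what pins down the minimax optimum.

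First I would establish that the minimax value of this instance equals $1$. Since the triangle admits no stable matching, the whole instance is unsolvable, so $\min_{M'\in\mathcal M}\max_{a_i}\vert bp_{a_i}(M')\vert\geq 1$; and the matching $M'=\{\{a_2,a_3\},\{a_4,a_5\}\}$ admits only the single blocking pair $\{a_1,a_3\}$, so the value is exactly $1$ and $M'$ attains it with $\vert bp(M')\vert=1$.

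Next I would take $M=\{\{a_2,a_3\}\}$, leaving $a_1,a_4,a_5$ unmatched. A direct check of the four acceptable pairs shows $bp(M)=\{\{a_1,a_3\},\{a_4,a_5\}\}$: the triangle again contributes $\{a_1,a_3\}$ (as $a_3$ prefers the unmatched $a_1$ to its partner $a_2$), while the two unmatched, mutually-acceptable agents $a_4,a_5$ now form a blocking pair. These two blocking pairs are vertex-disjoint, so every agent lies in at most one of them, and hence $M$ also attains the minimax value $1$. However $\vert bp(M)\vert=2>1=\vert bp(M')\vert$, so $M$ is a minimax-optimal matching that does not minimise the number of blocking pairs, which is exactly the claim.

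The only delicate point is ensuring that the ``extra'' instability present in the minimax matching does not push any agent's blocking-pair count above $1$; this is precisely why the second component must be kept disjoint from the triangle, so that the avoidable blocking pair $\{a_4,a_5\}$ lands on otherwise-uninvolved agents and leaves the per-agent maximum at $1$. The construction thus exploits that minimax-optimal matchings need not be maximal and may tolerate avoidable blocking pairs on agents not already involved in instability, which is the behaviour that separates the minimax objective from the aggregate one.
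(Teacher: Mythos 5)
Your proposal is correct and uses essentially the same construction as the paper: the unsolvable three-agent preference triangle plus a disjoint mutually-acceptable pair, with the minimax-optimal matching $\{\{a_2,a_3\}\}$ leaving $a_4,a_5$ unmatched so that the avoidable blocking pair $\{a_4,a_5\}$ lands on otherwise-uninvolved agents. The verification of the blocking pairs and of the minimax value being $1$ matches the paper's reasoning, so there is nothing to add.
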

\begin{proof}
    Consider the {\sc sri} instance with preferences as follows:
    \begin{align*}
        a_1 &: a_2 \; a_3 \\
        a_2 &: a_3 \; a_1 \\
        a_3 &: a_1 \; a_2 \\
        a_4 &: a_5 \\
        a_5 &: a_4 
    \end{align*}
    Clearly the collection of matchings $M_r=\{\{a_r,a_{r+1}\}, \{a_4,a_5\}\}$ where $1\leq r\leq 3$ and addition taken modulo 3 are the matchings with a minimum number of blocking pairs, but notice that any $M_r$, but also any $M_r'=\{\{a_r,a_{r+1}\}\}$, satisfies the minimax constraint, while no $M_r'$ matching satisfies minimality with respect to the number of blocking pairs. 
\end{proof}

We are not aware of an instance where a minimum number of blocking pairs does not have a minimax number of blocking pairs per agent.

\subsection{Further Structural Results}
\label{sec:structure}

While it is not challenging to construct a family of {\sc sri} instances where the minimum total number of blocking pairs grows linearly with the number of agents, it is not straightforward to construct instances where the minimum maximum number of blocking pairs that any agent is contained in is higher than 1. Hence, one might falsely conclude that this number is constant or even just 1. However, we will now give a construction that shows that, with sufficiently many agents, this number can be arbitrarily large.

\begin{prop}
\label{prop:lbsri}
    For every positive integer $k$, there exists an {\sc sri} instance $I=(A,\succ)$ with $n=3^k$ agents and complete preference lists such that $\min_{M\in\mathcal M}\max_{a_i\in A}\vert bp_i(M)\vert= k$.
\end{prop}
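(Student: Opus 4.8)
The plan is to prove the statement by induction on $k$, building the instance $I_k$ recursively so that the number of agents triples at each step, giving exactly $n=3^k$. For the base case $k=1$ I would reuse the three-agent cyclic instance appearing earlier in the paper (agents $a_1,a_2,a_3$ preferring cyclically, i.e. $a_1$ ranks $a_2$ then $a_3$, $a_2$ ranks $a_3$ then $a_1$, $a_3$ ranks $a_1$ then $a_2$). A direct inspection of its three matchings shows that each leaves exactly one blocking pair with no agent in more than one, while no stable matching exists; hence $\min_{M\in\mathcal M}\max_{a_i\in A}|bp_{a_i}(M)|=1$, as required.

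For the inductive step, given $I_{k-1}$ on $3^{k-1}$ agents with minimax value $k-1$, I would form $I_k$ from three disjoint copies $B_1,B_2,B_3$ of $I_{k-1}$ linked cyclically through one designated \emph{gateway} agent $g_j$ per copy. The preferences are arranged so that: (i) every agent ranks the agents of its own copy first, in the inherited $I_{k-1}$ order, so that all internal blocking pairs of a copy persist in $I_k$ whenever that copy is matched within itself; and (ii) the gateways $g_1,g_2,g_3$ are tied together by a cyclic preference pattern at the top of their lists, mirroring the $k=1$ triangle one level up. Remaining list entries are filled arbitrarily so that $I_k$ has complete preferences on $3^k$ agents. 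The linking has to be tuned carefully so that it injects at most one cross-copy blocking pair per gateway (needed for the upper bound), yet makes it impossible for all three copies to be matched internally without some cross blocking pair forming (needed for the lower bound).

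For the upper bound $\min\le k$, I would exhibit a matching $M^*$ that uses inside each copy an internal matching attaining the inductive maximum $k-1$, chosen if possible so that the gateway is not the uniquely most-blocked internal agent. The cyclic linking then adds at most one blocking pair to any single agent: non-gateway agents rank all foreign agents last and gain none, while each gateway gains at most one. Hence $\max_{a_i}|bp_{a_i}(M^*)|\le(k-1)+1=k$.

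The lower bound $\min\ge k$ is the crux and where I expect the main difficulty. I would argue inductively that every matching $M$ of $I_k$ forces some agent into at least $k$ blocking pairs, by case analysis on how $M$ meets the three copies. If some copy $B_j$ is matched entirely within itself, the inductive hypothesis yields an internal agent with at least $k-1$ blocking pairs, and the cyclic gateway structure must contribute a further blocking pair incident to that copy, pushing some agent to $k$. The problematic configurations are the \emph{leaky} ones, where gateways or other agents are matched across copies: such matches break a copy apart, so the inductive hypothesis does not apply verbatim. Here I would exploit the fact that a non-gateway agent matched to a foreign partner ranks that partner below all of its own-copy agents, which forces additional internal blocking pairs and lets me recover a count of at least $k-1$ inside the disrupted copy, to which the cyclic structure again adds one. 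The delicate point — and the step I expect to demand the most care — is guaranteeing that the extra, cyclically-forced blocking pair lands on an agent already carrying $k-1$ blocking pairs rather than on a fresh agent; calibrating the gateway preferences so that these two contributions provably coincide on a single agent, uniformly across all matchings, is the heart of the argument.
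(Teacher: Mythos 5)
Your high-level strategy (induction on $k$, tripling the instance, linking the three copies cyclically) is the same as the paper's, but your specific construction diverges in a way that leaves the lower bound unproved, and you have correctly identified the unresolved point yourself: nothing in your setup guarantees that the one extra cross-copy blocking pair lands on the \emph{same} agent that already carries $k-1$ blocking pairs inside its copy. With only three gateway agents linked, and the gateway cycle placed at the top of the gateways' lists, the inductively-guaranteed agent with $k-1$ blocking pairs in a copy need not be the gateway, so a matching could plausibly achieve a maximum of $\max(k-1,2)$ rather than $k$. The bare inductive hypothesis ``minimax value is $k-1$'' is too weak to force the coincidence; you would need to strengthen it, and your construction does not obviously support any useful strengthening. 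The ``fill the remaining entries arbitrarily'' step is also unsafe for the upper bound: each copy has an odd number of agents, so any internal matching leaves someone unmatched, and an unmatched agent blocks with \emph{every} foreign agent who prefers them to their partner --- with arbitrary completion this is uncontrolled.

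The paper's construction resolves exactly this. The inter-copy cycle is placed at the \emph{end} of every agent's list and runs over the three copies as whole sets (every agent of $B_1$ prefers every agent of $B_2$ to every agent of $B_3$, and so on cyclically), so the full instance is a tower of nested odd preference cycles. The strengthened invariant is then that the agent accumulating blocking pairs is an \emph{unmatched} agent: since $3^{k}$ is odd, every matching leaves someone unmatched, and an unmatched agent picks up at least one blocking pair from each of the $k$ nesting levels because at every level the odd cycle forces some neighbour to prefer them to its partner. In the inductive step, the three unmatched agents (one per copy) themselves sit on the top-level odd cycle, so at least one of them stays unmatched in $I_{k+1}$ and gains a $(k+1)$-st blocking pair --- the two contributions coincide on a single agent automatically. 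The matching upper bound is then shown with the explicit matching pairing consecutive agents, which leaves exactly one agent unmatched and is analysed level by level. To repair your proof you would need to import both ingredients: the parity/unmatched-agent invariant and the end-of-list, whole-set cyclic linking.
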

\begin{proof}

    Suppose that $k\geq 1$. We will first describe the construction and then prove the desired invariant inductively. The main idea is to build a symmetrical preference system consisting of partitions of sizes that are multiples of 3 (up until a single partition consisting of all agents of size $3^k$), such that we have many nested \emph{preference cycles}, where a preference cycle $(f_{c_1},f_{c_2},\dots,f_{c_r})$ of length $r$ has the property that each agent $f_{c_k}$ prefers their successor $f_{c_{k+1}}$ to their predecessor $f_{c_{k-1}}$ (where addition and subtraction is taken modulo $r$) in the cycle. Notice that odd-length cycles where every agent ranks their successor and their predecessor in the first and second position of their preference list, respectively, make an instance unsolvable. Here, we draw inspiration from the concept of stable partitions, which characterise the existence of stable matchings in {\sc sri} instances (see \cite{matchup} for more details).

    We construct the instance, which we will denote by $I_k$, as follows: first, create $3^k$ agents $A=\{a_1,a_2,\dots,a_{3^k}\}$. Now partition the agents $a_i$ into sets of 3 in increasing order of $i$ (starting from 1) and initialise the preference lists such that $(a_{3r+1}, a_{3r+2}, a_{3r+3})$ form a preference cycle at the top of their preference list for all $0\leq r\leq \frac{n}{3}-1$. Next, partition the agents $a_i$ into sets of 9 in increasing order of $i$ (starting from 1) and append the preference lists such that 
    $$(\{a_{9r+1}, a_{9r+2}, a_{9r+3}\},
    \{a_{9r+4}, a_{9r+5}, a_{9r+6}\},
    \{a_{9r+7}, a_{9r+8}, a_{9r+9}\})$$ 
    form a preference cycle for all $0\leq r\leq \frac{n}{9}-1$, i.e., such that all agents $a_{9r+1}, a_{9r+2}, a_{9r+3}$ prefer all of $a_{9r+4}, a_{9r+5}, a_{9r+6}$ to all of $a_{9r+7}, a_{9r+8}, a_{9r+9}$, and so on (we may assume that agents within these sets are listed in increasing indicial order in the preference lists). Continue this process for increasing partition sizes $3^j$ up until $3^{k}$, in which case we append the preference lists such that 
    $$(\{a_{1}, a_{2}, \dots, a_{\frac{n}{3}}\},
    \{a_{\frac{n}{3}+1}, a_{\frac{n}{3}+2}, \dots, a_{\frac{2}{3}n}\},
    \{a_{\frac{2}{3}n+1}, a_{\frac{2}{3}n+2}, \dots, a_{n}\})$$
    form a preference cycle (notice that at each step $j$, each agent adds $2\cdot3^{j-1}$ other agents to the end of their preference list). Figure \ref{fig:lbsri} illustrates the construction of $I_3$ schematically for (i.e., $k=3$, so we have 27 agents).

    \begin{figure}[!tbh]
        \centering
        \includegraphics[width=12cm]{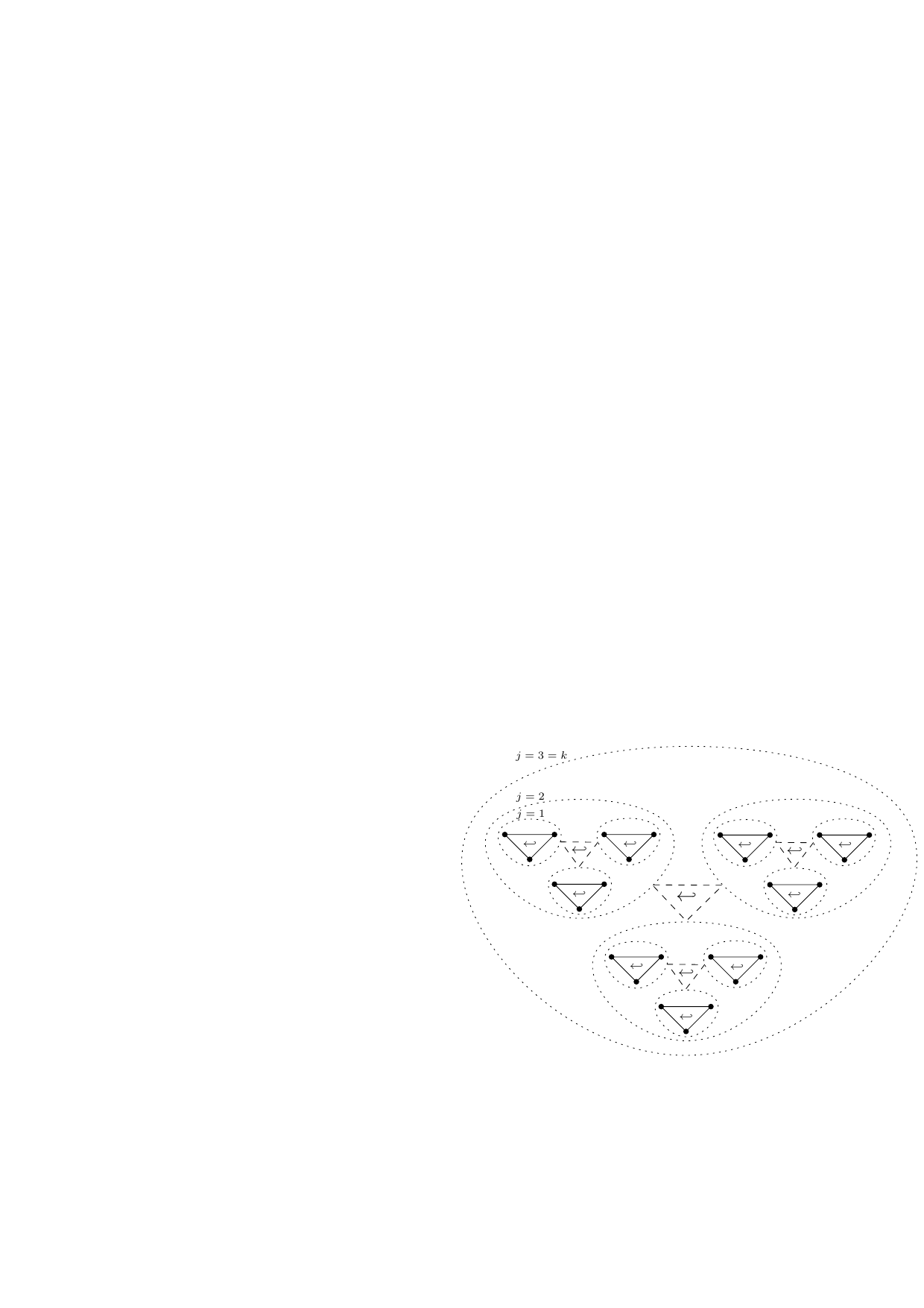}
        \caption{Schematic illustration of the {\sc sri} lower-bound construction for 27 agents (i.e,. $k=3$). Agents are partitioned into 3 nested sets containing $3^j$ (for $1\leq i\leq k$) agents each. Arrows indicate preference cycles within and between the nested levels.}
        \label{fig:lbsri}
        \Description{Illustration of a graph instance with nested preference cycles.}
    \end{figure}

    Notice that, at the end of this construction, all preference lists are complete. Furthermore, notice that, at every step $j$ with partitions of size $3^j$, we create nested odd-length cycles such that, from every set in the partition, we are forced to match at least one agent to someone of rank at least $3^j$ on their preference list (or leave the agent unmatched). However, due to the preference cycles, this incurs at least $j$ blocking pairs for this agent in every matching of $I$. 
    
    Formally, let $P(k)$ be the statement that, for the instance $I_k$ constructed as described above with $3^k$ agents, it is the case that $\min_{M\in\mathcal {M}(I_k)}\max_{a_i\in A}\vert bp_i(M)\vert\geq k$ (where $\mathcal{M}(I_{k})$ indicates the sets of all matchings of $I_{k}$). 
    
    Consider the base case where $k=1$: clearly, with three agents that form a preference cycle, any one of the three maximum-cardinality matchings of the instance leaves one agent unmatched and minimises the maximum number of blocking pairs that any agent is contained in. However, the unmatched agent must prefer both matched agents to being unmatched, and exactly one of the matched agents must prefer the unmatched agent to their partner. Thus, $\min_{M\in\mathcal {M}(I_k)}\max_{a_i\in A}\vert bp_i(M)\vert\geq 1=k$. 
    
    Now suppose that $P(k)$ holds for some integer $k\geq 1$ and consider $P(k+1)$. Let $I_k$ and $I_{k+1}$ be the respective instance constructions. By symmetry of the construction, $I_{k+1}$ consists of three components that mirror $I_k$, and then edges between these three components at the end of the preference lists. Notice that, because $I_k$ contains $3^k$ agents, any matching of $I_k$ must leave at least one agent unmatched. By the inductive hypothesis, we know that any matching of $I_k$ leaves at least one agent in at least $k$ blocking pairs. It is easy to see that this must apply to the unmatched agent, as being unmatched incurs at least one blocking pair from each of the $k$ nested cycles. Thus, for any union of three matchings of the distinct $I_k$ components occurring in $I_{k+1}$, at least three agents are left unmatched, each of whom is in at least $k$ blocking pairs (note that any blocking pair within an $I_k$ component is also blocking in the larger $I_{k+1}$ instance, as all agents within the smaller component prefer each other to all other agents outside their component, and the relative preference order for agents within the component is maintained). Furthermore, by construction, these three agents must form another preference cycle, introducing at least one more blocking pair among these agents. Therefore, at least one of the agents must remain unmatched in any matching of $I_{k+1}$, and they are contained in at least $k+1$ blocking pairs. Thus, it must hold that $\min_{M\in\mathcal {M}(I_{k+1})}\max_{a_i\in A}\vert bp_i(M)\vert\geq \min_{M\in\mathcal {M}(I_k)}\max_{a_i\in A}\vert bp_i(M)\vert +1 \geq k+1$ as required. Thus, $P(k)$ holds for any integer $k\geq 1$ by induction.

    Now, to see that this lower bound is tight, i.e., that $\min_{M\in\mathcal {M}(I_{k})}\max_{a_i\in A}\vert bp_i(M)\vert=k$, consider the following matching: $M_k=\{\{a_1,a_2\},\{a_3,a_4\},\dots,\{a_{3^k-2},a_{3^k-1}\}\}$. Let $Q(k)$ be the statement that, for the instance construction $I_k$ above and this associated matching $M_k$, it is the case that $\max_{a_i\in A}\vert bp_i(M_k)\vert\leq k$. 
    
    Consider the base case where $k=1$: we have a simple instance with three agents that rank each other in a cyclic fashion. Here, $M_k$ clearly admits the blocking pair $\{a_2,a_3\}$ and no others, so $\max_{a_i\in A}\vert bp_i(M_k)\vert=1=k$. Thus, $Q(1)$ holds. 
    
    Now, for the inductive step, suppose that $Q(k)$ holds for some $k\geq 1$ and consider $Q(k+1)$. We already know that $I_{k+1}$ consists of three symmetrical copies of $I_k$ and circular connecting preferences between these components appended at the end of the lists. Further, we can see by inspection that $M_{k+1}$ matches $3^k$ agents to other agents in their respective $I_k$ component in the same way that $M_k$ does (with the indexing possibly shifted but symmetrical), with exactly two agents from different $I_k$ components matched to each other and one agent from the third $I_k$ component remaining unmatched. Clearly, for the $3^k$ ``ordinarily matched'' agents who are matched within their own component, the inductive hypothesis applies, and all of these agents are in at most $k$ blocking pairs. Furthermore, by the same inductive hypothesis, the three ``unordinarily matched'' agents are blocking with at most $k$ agents from their own component. None of these three agents are blocking with ``ordinarily matched'' agents from components other than their own (as these all prefer their partner to the agents in question by construction of the preference lists). Thus, by matching two of these ``unordinarily matched`` agents to each other, and given that their preferences over each other are necessarily cyclic, these three agents can form at most one further blocking pair among each other. Thus, none of these agents are in more than $k+1$ blocking pairs, proving $Q(k+1)$. Correctness of $Q$ for all positive integers $k$ follows. Thus, putting the lower and upper bounds that we proved inductively together, we have that $\min_{M\in\mathcal M}\max_{a_i\in A}\vert bp_i(M)\vert=k$ as required. 
\end{proof}

With this construction, we can immediately observe the following worst-case lower bound for solutions to {\sc Minimax-AlmostStable-sri}.

\begin{theorem}
    In the worst-case, \emph{OPT}$ =\Omega(\log_2(n))$, where \emph{OPT} is the optimal solution value of {\sc Minimax-AlmostStable-sri} and $n$ is the number of agents, even when all preference lists are complete.
\end{theorem}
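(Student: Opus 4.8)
The plan is to observe that this theorem is an immediate corollary of Proposition \ref{prop:lbsri}. First I would note that, by definition of the problem, the optimal solution value of {\sc Minimax-AlmostStable-sri} on an instance $I$ is exactly the quantity $\min_{M \in \mathcal{M}} \max_{a_i \in A} \vert bp_i(M)\vert$. Proposition \ref{prop:lbsri} exhibits, for every positive integer $k$, an {\sc sri} instance with complete preference lists and $n = 3^k$ agents for which this quantity equals $k$. Hence OPT $= k$ on each such instance, and these instances already satisfy the complete-preferences restriction asserted in the theorem.

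Next I would carry out the change of base on the logarithm. Since $n = 3^k$, we have $k = \log_3 n = \frac{\log_2 n}{\log_2 3}$. Because $\log_2 3$ is a fixed positive constant, it follows that OPT $= k = \frac{1}{\log_2 3}\,\log_2 n = \Omega(\log_2 n)$ along this family. As the family is infinite, containing one instance for every $k$, this genuinely witnesses a worst-case asymptotic growth of the optimum rather than a statement about a single instance, which is precisely what the $\Omega(\cdot)$ claim requires.

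There is no real obstacle here; the only points worth confirming are bookkeeping rather than substantive. One should check that the objective value minimised by {\sc Minimax-AlmostStable-sri} coincides term-for-term with the quantity $\min_{M}\max_{a_i}\vert bp_i(M)\vert$ bounded in Proposition \ref{prop:lbsri}, which is immediate from the problem definition. One should also note that $n$ ranges only over powers of $3$ in the construction; since a worst-case $\Omega(\cdot)$ bound need only be witnessed along a subsequence of instance sizes, this is harmless, and for intermediate values of $n$ one could in any case pad the instance with a disjoint solvable gadget without affecting the optimum contributed by the core construction.
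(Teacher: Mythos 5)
Your proposal is correct and follows essentially the same route as the paper: both invoke the infinite family from Proposition \ref{prop:lbsri} with $n=3^k$ and complete lists, then apply the change-of-base identity $k=\log_3 n=\log_2 n/\log_2 3$ to conclude OPT $=\Omega(\log_2 n)$. Your extra remark about the bound being witnessed only along powers of $3$ is a harmless refinement the paper leaves implicit.
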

\begin{proof}
    This follows from the infinite family of instances constructed in the proof of Proposition \ref{prop:lbsri}. Notice that there, $n\leq3^{\text{OPT}}$, so, by logarithm rules, OPT $=\log_3(n)=\frac{\log_2(n)}{\log_2(3)}=\Omega(\log_2(n))$ as required.
\end{proof}

We will now present a simpler construction which demonstrates that, when prioritising size over stability, the minimum maximum number of blocking pairs that any agent is contained in increases arbitrarily, even in the {\sc smi} setting.

\begin{prop}
\label{prop:lbmaxsmi}
    For every non-negative integer $k$, there exists an {\sc smi} instance $I=(A,\succ)$ with $n=2(k+1)$ agents such that $\min_{M\in\mathcal {M}^+}\max_{a_i\in A}\vert bp_i(M)\vert=k$.
\end{prop}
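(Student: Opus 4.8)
The plan is to exhibit a single explicit {\sc smi} instance in which the set $\mathcal{M}^+$ of maximum-cardinality matchings is a \emph{singleton} consisting of one perfect matching, and in which that forced matching concentrates exactly $k$ blocking pairs on one agent. Making $\mathcal{M}^+$ a singleton is the key simplification: it collapses the outer minimisation $\min_{M\in\mathcal M^+}$ to a single evaluation, so the whole statement reduces to computing $\max_{a_i\in A}\vert bp_i(M)\vert$ for that one matching. Concretely, I would take $k+1$ men $m_1,\dots,m_{k+1}$ and $k+1$ women $w_1,\dots,w_{k+1}$ (so $n=2(k+1)$), with preferences
\begin{align*}
m_i &: w_i && (1\le i\le k), \\
m_{k+1} &: w_1 \; w_2 \; \cdots \; w_k \; w_{k+1}, \\
w_i &: m_{k+1} \; m_i && (1\le i\le k), \\
w_{k+1} &: m_{k+1}.
\end{align*}
The design idea is that $m_{k+1}$ is acceptable to every woman and would be each woman's top choice, but in any perfect matching he is forced down to his worst partner $w_{k+1}$, whereupon he blocks with all of $w_1,\dots,w_k$ simultaneously.

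Next I would verify that $\mathcal M^+$ is exactly $\{M^*\}$ for $M^*=\{\{m_{k+1},w_{k+1}\}\}\cup\{\{m_i,w_i\}:1\le i\le k\}$. Since this matching is perfect, it is maximum-cardinality, and every maximum-cardinality matching must likewise be perfect; uniqueness then follows by a short forcing argument reading off the degree-one vertices. Indeed $w_{k+1}$ is acceptable only to $m_{k+1}$, forcing $\{m_{k+1},w_{k+1}\}\in M$; each remaining $w_i$ ($i\le k$) is then acceptable only to $m_i$ (as $m_{k+1}$ is taken), forcing $\{m_i,w_i\}\in M$. Hence $M^*$ is the only perfect, and thus the only maximum-cardinality, matching.

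It then remains to count $bp(M^*)$. For $1\le i\le k$, the pair $\{m_{k+1},w_i\}$ blocks because $m_{k+1}$ prefers $w_i$ to his partner $w_{k+1}$ and $w_i$ prefers $m_{k+1}$ to her partner $m_i$; this gives $k$ blocking pairs, all containing $m_{k+1}$. I would then argue no other pairs block: each $m_i$ ($i\le k$) finds only $w_i$ acceptable, so cannot block with any other woman, and $\{m_{k+1},w_{k+1}\}$ is matched. Consequently $\vert bp_{m_{k+1}}(M^*)\vert=k$ while every other agent lies in at most one blocking pair, so $\max_{a_i\in A}\vert bp_i(M^*)\vert=k$, and since $\mathcal M^+=\{M^*\}$ the required equality $\min_{M\in\mathcal M^+}\max_{a_i}\vert bp_i(M)\vert=k$ follows. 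The case $k=0$ is the degenerate instance with the single acceptable pair $\{m_1,w_1\}$, which is stable, giving value $0$.

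There is no serious obstacle here; this is precisely why the construction is billed as ``simpler'' than that of Proposition \ref{prop:lbsri}. The only points demanding care are the two I would emphasise in the write-up: (i) confirming that uniqueness of the perfect matching really does make the minimisation over $\mathcal M^+$ vacuous, and (ii) checking exhaustively that no stray blocking pair inflates the count beyond the intended $k$, which is immediate from the length-one lists of $m_1,\dots,m_k$ and $w_{k+1}$.
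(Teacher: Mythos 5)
Your construction is, up to relabelling ($m_{k+1}\leftrightarrow a_{k+1}$, $w_i\leftrightarrow a_i$, $m_i\leftrightarrow a_i'$, $w_{k+1}\leftrightarrow a_{k+1}'$), exactly the instance the paper uses, and your verification — uniqueness of the perfect matching, the $k$ blocking pairs all landing on $m_{k+1}$, and no stray blocking pairs — matches the paper's argument while being slightly more explicit about why $\mathcal M^+$ is a singleton. The proof is correct and takes essentially the same approach.
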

\begin{proof}
    We give a construction with $2(k+1)$ agents $A=\{a_1,a_1',a_2,a_2',\dots, a_{k+1}'\}$. Furthermore, for all $j$ ($1\leq j\leq k$, the agent $a_j$ ranks $a_{k+1}$ in first position and $a_j'$ in second position, while $a_j'$ only ranks $a_j$ (so, necessarily, in first position). Finally, $a_{k+1}$ ranks all $a_j$ (for $1\leq j\leq k$) agents in position $j$, and $a_{k+1}'$ in position $k+1$. $a_{k+1}'$ only ranks $a_{k+1}$. We illustrate the construction in Figure \ref{fig:lbmaxsmi}.
    
    \begin{figure}[!tbh]
        \centering
        \includegraphics[width=8cm]{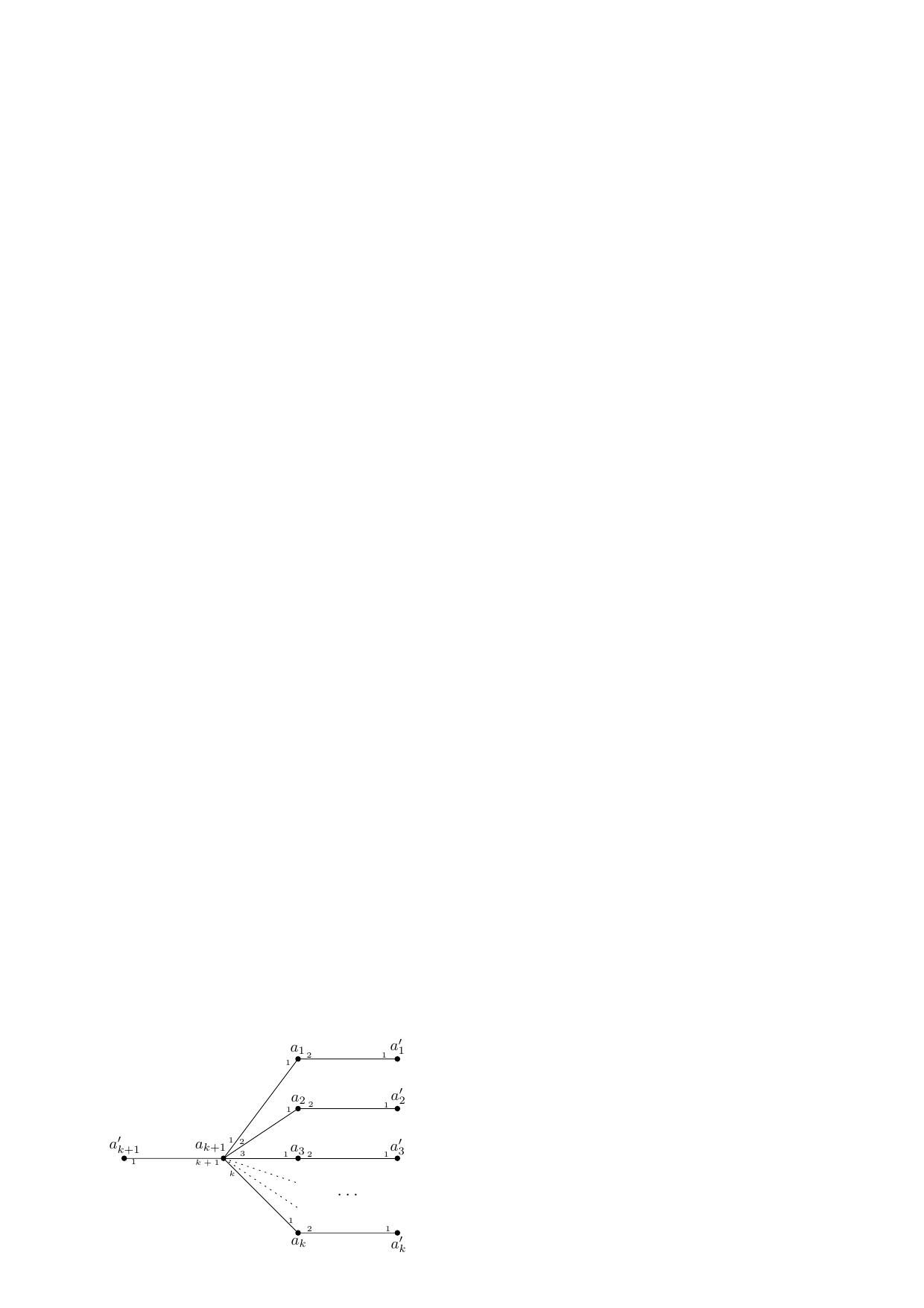}
        \caption{Illustration of the lower-bound construction for maximum-cardinality matchings in {\sc smi}.}
        \label{fig:lbmaxsmi}
        \Description{An instance given in graph form with $2k+1$ agents / vertices.}
    \end{figure}

    Clearly, the only maximum-cardinality matching is the perfect matching $M=\{\{a_j,a_j'\} \;\vert\; 1\leq j\leq k+1\}$. However, it is easy to observe that $bp(M)=\{\{a_{k+1},a_j\} \;\vert\; 1\leq j\leq k\}$. Therefore $bp_{k+1}(M)=\{a_j\;\vert\; 1\leq j\leq k \}$ and so $\min_{M'\in\mathcal {M}^+}\max_{a_i\in A}\vert bp_i(M')\vert=\vert bp_{k+1}(M)\vert=k$ as required.
\end{proof}

Using this construction, we can observe an even stronger lower bound and hence observe the following tight asymptotic bound for solutions to {\sc Minimax-AlmostStable-Max-smi} (and hence also for {\sc Minimax-AlmostStable-Max-sri}).

\begin{theorem}
    In the worst-case, \emph{OPT} $=\Theta(n)$, where \emph{OPT} is the optimal solution value of {\sc Minimax-AlmostStable-Max-smi} and $n$ is the number of agents.
\end{theorem}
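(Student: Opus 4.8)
The plan is to sandwich OPT between two linear bounds, one of which is essentially immediate and the other of which has already been done in Proposition \ref{prop:lbmaxsmi}.

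For the upper bound OPT $=O(n)$, I would argue by a trivial counting bound that needs no construction. In any matching $M$ of an instance on $n$ agents, a fixed agent $a_i$ can form a blocking pair only with one of the $n-1$ other agents, and with each such agent at most once, so $\vert bp_{a_i}(M)\vert\leq n-1$. Hence $\max_{a_i\in A}\vert bp_{a_i}(M)\vert\leq n-1$ for \emph{every} matching $M$, and in particular for every maximum-cardinality matching. Taking the minimum over $\mathcal M^+$ preserves this bound, giving OPT $\leq n-1=O(n)$.

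For the lower bound OPT $=\Omega(n)$, I would appeal directly to the family constructed in Proposition \ref{prop:lbmaxsmi}. That proposition exhibits, for each non-negative integer $k$, an {\sc smi} instance on $n=2(k+1)$ agents whose unique maximum-cardinality matching forces the agent $a_{k+1}$ into exactly $k$ blocking pairs, so that OPT $=k$. Substituting $k=\tfrac{n}{2}-1$ yields OPT $=\tfrac{n}{2}-1=\Omega(n)$, which is a linear, and hence strictly stronger than the earlier logarithmic, worst-case lower bound. Combining the two directions gives OPT $=\Theta(n)$ for {\sc Minimax-AlmostStable-Max-smi}. For the parenthetical extension to {\sc Minimax-AlmostStable-Max-sri}, I would note that every {\sc smi} instance is an {\sc sri} instance (the bipartite special case), so the same family witnesses the $\Omega(n)$ lower bound in the more general setting, while the counting argument for the upper bound makes no use of bipartiteness and applies verbatim; thus the $\Theta(n)$ bound transfers.

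There is essentially no hard step here, and I would present the result as a corollary rather than a standalone argument: the substantive work, namely showing that a linear number of blocking pairs can be forced onto a single agent in every maximum-cardinality matching, is already carried out in Proposition \ref{prop:lbmaxsmi}. The only point worth checking carefully is that the relationship $n=2(k+1)$ genuinely yields a linear, rather than (say) logarithmic, dependence of OPT on $n$, which it plainly does. The main conceptual takeaway to emphasise is the contrast with the preceding $\Omega(\log_2 n)$ bound for {\sc Minimax-AlmostStable-sri}: prioritising cardinality over stability allows the instability to be concentrated on a single agent to a far greater, linear extent.
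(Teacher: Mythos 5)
Your proposal is correct and matches the paper's proof essentially verbatim: the lower bound is read off from the family of Proposition \ref{prop:lbmaxsmi} via $n=2(\mathrm{OPT}+1)$, and the upper bound is the same trivial observation that no agent can block with more than $n-1$ others. Nothing further is needed.
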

\begin{proof}
    This follows from the infinite family of instances constructed in the proof of Proposition \ref{prop:lbmaxsmi}. Notice that there, $n=2(\text{OPT}+1)$, so, OPT $=\frac{n}{2}-1=\Omega(n)$. Clearly, OPT $\leq n-1=O(n)$ as no agent can block with more agents than they have on their preference list. Thus, \emph{OPT}$=\Theta(n)$ as required.
\end{proof}

\section{Computational Intractability Results}
\label{sec:intractability}

\subsection{Decision Problems and Preliminary Results}

As noted in the previous sections, we now proceed to proving surprisingly strong intractability results for the computation of minimax almost-stable matchings. For our investigation, we first define decision problems associated with our central optimisation problems of interest. The following decision problem corresponds to {\sc Minimax-AlmostStable-sri}.

\begin{problemBox}
{\sc $k$-Max-AlmostStable-sri} \\
\textbf{Input:} {\sc sri} instance $I=(A,\succ)$ and a non-negative integer $k$. \\
\textbf{Question:} Does there exist a matching $M\in \mathcal M$ such that $\max_{a_i\in A}\vert bp_{a_i}(M)\vert\leq k$?
\end{problemBox}

It is easy to see that if {\sc $k$-Max-AlmostStable-sri} cannot be solved in polynomial time, then {\sc Minimax-AlmostStable-sri} cannot be solved in polynomial time. Notice that $I$ is a yes-instance to {\sc 0-Max-AlmostStable-sri} if and only if $I$ is solvable, so this case is tractable \cite{irving_sr}. Notice, furthermore, that {\sc $k$-Max-AlmostStable-sri} is trivial if $k$ exceeds the maximum preference list length. In Section \ref{sec:srihard}, we will highlight, though, that {\sc $k$-Max-AlmostStable-sri} is {\sf NP-complete} even in the very restricted case where $k=1$, and regardless of whether preference lists are bounded or complete. We will extend this result to any positive constant $k$ (but with weaker bounds on the preference list lengths) in Section \ref{sec:furtherhard}.

We will then turn our attention to {\sc Minimax-AlmostStable-Max-smi}. For our complexity analysis, we will consider a restricted decision version of this problem (where maximum-cardinality matchings are perfect), which we define as follows. 

\begin{problemBox}
{\sc $k$-Max-AlmostStable-Perfect-smi} \\
\textbf{Input:} {\sc smi} instance $I=(A,\succ)$ and a non-negative integer $k$. \\
\textbf{Question:} Does there exist a matching $M\in \mathcal{M}^p$ such that $\max_{a_i\in A}\vert bp_{a_i}(M)\vert\leq k$?
\end{problemBox}

It is easy to see that if {\sc $k$-Max-AlmostStable-Perfect-smi} cannot be solved in polynomial time, then {\sc Minimax-AlmostStable-Max-smi} cannot be solved in polynomial time. Notice that $I$ is a yes-instance to {\sc 0-Max-AlmostStable-Perfect-smi} if and only if there exists a perfect stable matching of $I$. This can be decided in polynomial time by computing an arbitrary stable matching $M$ of $I$ using the classical Gale-Shapley algorithm \cite{gale_shapley}, and checking whether $M$ is a perfect matching. It is well-known that all stable matchings of a given {\sc smi} instance have the same size \cite{galesoto85}, so if $M$ is not perfect, then $I$ does not admit any perfect stable matching. In Section \ref{sec:smihard}, we will highlight, though, that {\sc $k$-Max-AlmostStable-Perfect-smi} is {\sf NP-complete} even in the very restricted case where $k=1$ and preference lists are very short. We will extend this result to any positive constant $k$ and even to a linear dependence (within some arbitrarily small $\varepsilon$) on the number of agents in Section \ref{sec:furtherhard}.

\subsection{Hardness in General Bounded-Degree Graphs and Cliques}
\label{sec:srihard}

The result below states that it is impossible to distinguish in polynomial time even between instances that admit a matching in which every agent is in at most one blocking pair, and instances where every matching requires at least one agent to be in more than one blocking pair (unless {\sf P}={\sf NP}), even when preference lists are of bounded length.

\begin{theorem}
\label{thm:1-Maxbounded}
    {\sc $k$-Max-AlmostStable-sri} is {\sf NP-complete}, even if $k=1$ and all preference lists are of length at most 10.
\end{theorem}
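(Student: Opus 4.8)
My plan is to establish membership in {\sf NP} and then prove {\sf NP}-hardness by a polynomial reduction from {\sc (2,2)-e3-sat}. Membership is immediate: given a candidate matching $M$, one can compute $bp(M)$ and hence $\max_{a_i\in A}\vert bp_{a_i}(M)\vert$ in polynomial time, so verifying that every agent lies in at most one blocking pair is a polynomial check. The substance is the reduction. Given a formula $B$ with variable set $V$ and clause set $C$, in which every variable occurs exactly twice positively and twice negatively and every clause contains exactly three literals, I would construct an {\sc sri} instance $I_B$ by gluing together a \emph{variable gadget} for each variable and a \emph{clause gadget} for each clause, connected through a small number of shared ``port'' agents. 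The guiding principle, following the odd-preference-cycle mechanism exploited in the proof of Proposition~\ref{prop:lbsri}, is that an odd cycle in the top preference positions forces at least one blocking pair; the difficulty lies in controlling \emph{where} this unavoidable instability lands. I would engineer the gadgets so that a satisfying assignment of $B$ corresponds exactly to a matching of $I_B$ in which the forced instability can be confined to at most one blocking pair per agent.

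\textbf{Variable gadget.} For each variable $V_i$ I would build a small even-length cyclic structure admitting precisely two ``locally clean'' matchings, one representing $V_i=T$ and the other $V_i=F$. Each gadget exposes four port agents, one for each of the two positive and two negative literal occurrences of $V_i$ mandated by {\sc (2,2)-e3-sat}. The gadget is designed so that the only way for every internal agent to avoid a second blocking pair is to select one of the two canonical matchings, and whichever is chosen simultaneously fixes the ``availability'' of all four ports in a mutually consistent way: the positive ports are free to help clauses exactly when $V_i=T$, and the negative ports exactly when $V_i=F$. This global consistency enforcement, rather than independent per-port choices, is what guarantees that a well-defined truth assignment can be read back from any valid matching.

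\textbf{Clause gadget.} For each clause $c=(\ell_1\lor \ell_2\lor \ell_3)$ I would attach a gadget, built around a three-agent (odd) preference cycle, to the three literal ports incident to $c$. The gadget should satisfy the property that if at least one of the three literals is true (so its port is available), then the forced blocking pair of the odd cycle can be absorbed and no gadget agent is in more than one blocking pair; but if all three literals are false, the cycle's instability together with the three unavailable ports forces some agent into a second blocking pair. Consequently $I_B$ admits a matching with $\max_{a_i\in A}\vert bp_{a_i}\vert\le 1$ if and only if $B$ is satisfiable.

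\textbf{Correctness and list lengths.} The forward direction takes a satisfying assignment, sets each variable gadget to its canonical matching, routes each clause through a true literal, and verifies directly that every agent is in at most one blocking pair. The reverse direction reads a truth assignment from the variable gadgets of any matching achieving $\max_{a_i\in A}\vert bp_{a_i}\vert\le 1$, uses the consistency property to ensure it is well-defined, and invokes the clause-gadget property to conclude that every clause is satisfied. Since each variable occurs only four times and each clause has only three literals, every agent participates in a constant number of gadget roles, so a careful accounting of acceptable partners keeps all preference lists of length at most $10$. The main obstacle I anticipate is the joint design of the gadgets under the extremely tight $k=1$ budget: I must rule out ``cheating'' matchings that violate variable consistency or leave a clause unsatisfied yet still keep every agent in at most one blocking pair. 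This requires arguing that the blocking pairs forced by the odd cycles cannot be rerouted or cancelled in unintended ways, all while respecting the length-$10$ constraint.
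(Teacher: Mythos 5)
Your high-level plan coincides with the paper's: membership in {\sf NP} is immediate, and hardness is shown by reducing {\sc (2,2)-e3-sat} to the $k=1$ case using two-state variable gadgets, odd three-agent preference cycles as clause gadgets whose one unavoidable blocking pair can be absorbed exactly when some literal port is ``available'', and the nested-odd-cycle mechanism of Proposition~\ref{prop:lbsri} as the engine of forced instability. The gap is that you defer precisely the step on which everything rests. In {\sc sri} there is no cardinality constraint, so an ``even-length cyclic structure admitting two locally clean matchings'' forces nothing by itself: internal agents can be left unmatched or matched across a port, and you must prove that every such deviation pushes some agent into two blocking pairs. The paper resolves this with an explicit \emph{forcing gadget}: eight extra agents that, together with the designated agent $v$, form three odd 3-cycles nested into one larger odd cycle (the depth-2 instance of the Proposition~\ref{prop:lbsri} construction), appended below $v$'s original list $P_v$. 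One must verify two properties simultaneously: (i) if $v$ is not matched into $P_v$, the nested odd cycles force some internal agent into at least two blocking pairs; and (ii) if $v$ is matched into $P_v$, the remaining eight agents can be paired so that no internal agent exceeds one blocking pair. Property (ii) is not automatic --- the gadget's own residual instability must not concentrate on any single agent --- and it is exactly this gadget that yields the stated bound $10=\vert P_v\vert+8$ on list lengths. Without such a construction, your claims that cheating matchings are excluded and that consistency can be read off the variable gadgets remain assertions.

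A secondary mismatch: the paper's {\sc sri} variable gadget exposes only two clause-facing agents ($v_i^T$ serving both unnegated occurrences and $v_i^F$ both negated ones), with forcing gadgets hung on two auxiliary agents $v_i^1,v_i^2$; your four-port design is essentially the gadget the paper uses in the {\sc smi} reduction (Theorem~\ref{thm:1-Maxsmi}), where the perfectness requirement does the forcing for free. Transplanting it to {\sc sri} would require forcing machinery on each port, which makes the length-10 budget substantially harder to meet.
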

\begin{proof}
    We first note that, given a matching $M$ of an {\sc sri} instance, we can verify in polynomial time whether $M$ is a solution to {\sc $k$-Max-AlmostStable-sri} by iterating through the preference lists of the agents and keeping track of the maximum number of blocking pairs that any agent is contained in. Clearly, blocking pairs can be detected efficiently by comparison of preference list entries to partners in $M$. Hence, {\sc $k$-Max-AlmostStable-sri} is in {\sf NP}.
    
    To show that is {\sc $k$-Max-AlmostStable-sri} {\sf NP-hard}, we will reduce from the previously defined {\sf NP-complete} problem {\sc (2,2)-e3-sat} to {\sc $1$-Max-AlmostStable-sri}. Suppose that $B$ is a boolean formula given as an instance of {\sc (2,2)-e3-sat}, where $V=\{V_1,V_2,\dots,V_n\}$ is the set of variables and $C=\{C_1,C_2,\dots,C_m\}$ is the set of clauses. We will denote a clause by $C_s=(C_s^1\lor C_s^2\lor C_s^3)$, where each $C_s^l$ (for $1\leq s\leq m$ and $1\leq l\leq 3$) is either an unnegated or a negated occurrence of some variable $V_i$. Now we construct an instance $I$ of {\sc $1$-Max-AlmostStable-sri} as follows. Let $A=A_F\cup A_V \cup A_C$ be the set of $20n+3m$ agents, where 
    \begin{align*}
        A_F&=\{f_i^{\alpha,z} \;;\; 1\leq i\leq n \land 1\leq \alpha\leq 8\land 1\leq z\leq 2\},\\
        A_V&=\{ v_i^T, v_i^F,v_i^1,v_i^2 \;;\; 1\leq i\leq n\}, \text{ and}\\
        A_C&=\{ x_j^s \;;\; 1\leq j\leq m \land 1\leq s\leq 3\}.
    \end{align*}
    These three disjoint sets of agents correspond to those from forcing gadgets, variable gadgets, and clause gadgets, respectively, all of which we will introduce now.

    Specifically, for every variable $V_i\in V$, we create a \textbf{variable gadget} $G_{V_i}$ consisting of the four vertices $v_i^T, v_i^F, v_i^1, v_i^2$ (and more inherited from the use of forcing gadgets). We construct the preferences of the variable gadget as follows:
    \begin{align*}
        v_i^T &: v_i^1 \; x_1(v_i^T) \; x_2(v_i^T) \; v_i^2 \\
        v_i^F &: v_i^1 \; x_1(v_i^F) \; x_2(v_i^F) \; v_i^2 \\
        v_i^1 &: v_i^T \; v_i^F \; G_{F(v_i^1)} \\
        v_i^2 &: v_i^T \; v_i^F \; G_{F(v_i^2)} \\
        (\text{+ agents inherited } &\text{from use of $G_{F(v_i^z)}$ gadgets})
    \end{align*}
    where $x_1(v_i^T)$ and $x_2(v_i^T)$ will be specified later. The construction is illustrated in Figure \ref{fig:gadgetV}.

    \begin{figure}[!tbh]
        \centering
        \includegraphics[width=8cm]{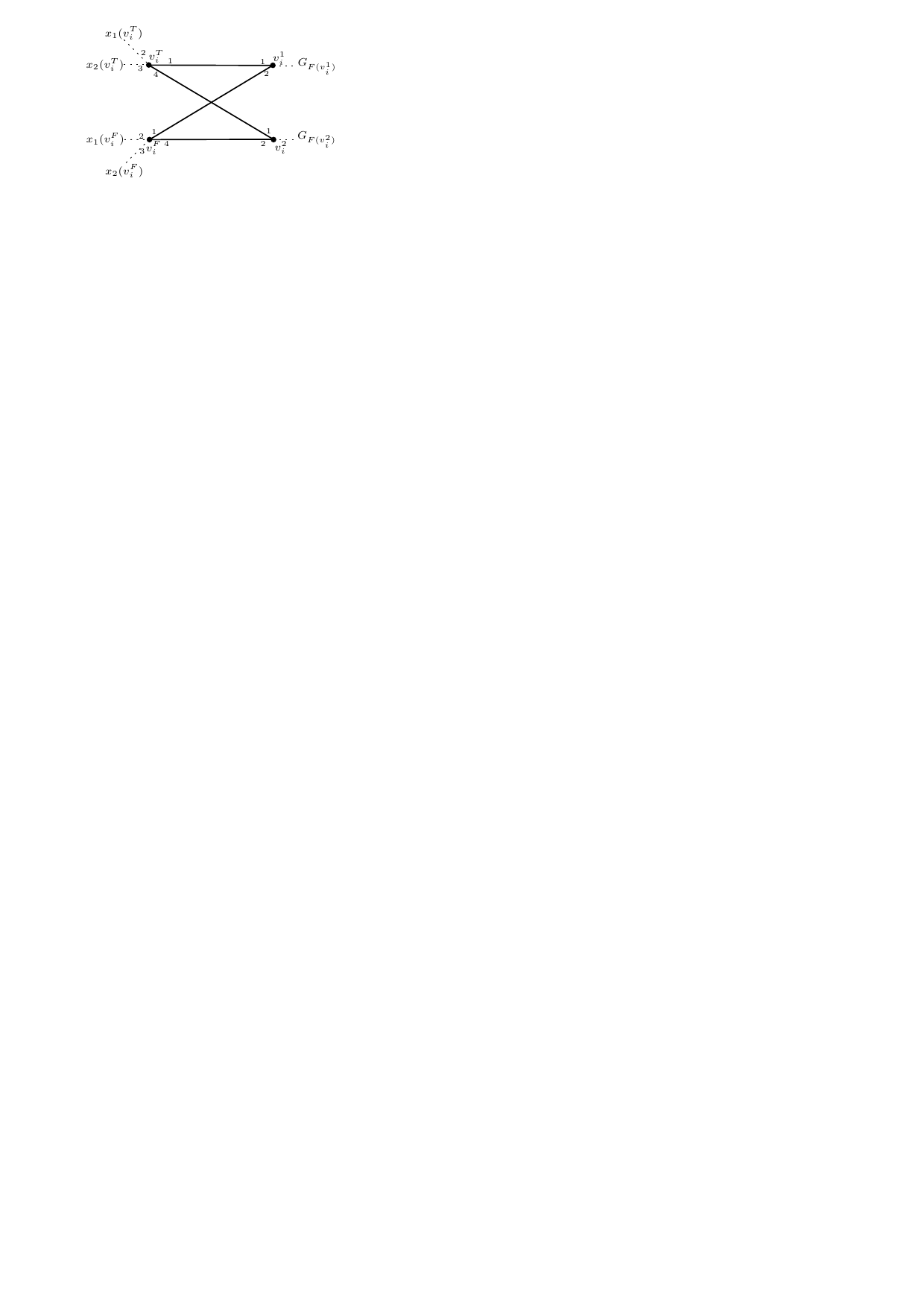}
        \caption{Illustration of the variable gadget construction}
        \label{fig:gadgetV}
        \Description{Four agents connected in an hour-glass construction and preferences as described in the main body.}
    \end{figure}
    
    The \textbf{forcing gadget}, denoted by $G_{F(v_i^z)}$ for a given agent $v_i^z$ (where $z\in\{1,2\}$), attaches to each agent $v_i^z$ and contributes the eight agents $f_{i}^{1,z}, f_i^{2,z},\dots, f_i^{8,z}$. Let $v_i^z$'s preference list before attachment of the gadget be denoted by $P_{v_i^z}$ (in the variable gadget above, $P_{v_i^z}=(v_i^T \; v_i^F)$. Then, after attachment of the gadget, let $v_i^z$'s preference list consist of $P_{v_i^z}$ followed by all other agents $f_i^{\alpha,z}$ within the gadget in increasing order of $\alpha$:   
    $$ v_i^z : P_{v_i^z} \; f_i^{1,z} \; f_i^{2,z} \; f_i^{3,z} \; f_i^{4,z} \; f_i^{5,z} \; f_i^{6,z} \; f_i^{7,z} \; f_i^{8,z} $$
    The main idea of this gadget is to force agent $v_i^z$ to be matched to someone on $P_{v_i^z}$ in any matching where no agent is in more than one blocking pair. While $v_i^z$ has this special function and preference list, the remaining agents have symmetric preference lists. The construction is strongly inspired by the construction in Proposition \ref{prop:lbsri}. Specifically, the preference lists of the remaining agents in $G_{F(v_i^z)}$ are constructed such that they form nested preference cycles. We construct the remainder of the gadget such that $(v_i^z, f_i^{1,z}, f_i^{2,z})$ is a preference cycle at the top of the preference lists of the new $f_i^{\alpha,z}$ agents, and similarly for $(f_i^{3,z}, f_i^{4,z}, f_i^{5,z})$ and $(f_i^{6,z}, f_i^{7,z}, f_i^{8,z})$. Subsequently, the remainder of the preference lists are constructed such that we get the preference cycles $(\{v_i^z,f_i^{1,z},f_i^{2,z}\}, \{f_i^{3,z},f_i^{4,z},f_i^{5,z}\}, \{f_i^{6,z},f_i^{7,z},f_i^{8,z}\})$, i.e., all agents $v_i^z,f_i^{1,z}$ and $f_i^{2,z}$ prefer all of $f_i^{3,z},f_i^{4,z}$ and $f_i^{5,z}$ to all of $f_i^{6,z},f_i^{7,z}$ and $f_i^{8,z}$, and so on. The lists look as follows:

    $$ f_i^{1,z} : f_i^{2,z} \; v_i^z \; f_i^{3,z} \; f_i^{4,z} \; f_i^{5,z} \; f_i^{6,z} \; f_i^{7,z} \; f_i^{8,z} $$
    $$ f_i^{2,z} : v_i^z \; f_i^{1,z} \; f_i^{3,z} \; f_i^{4,z} \; f_i^{5,z} \; f_i^{6,z} \; f_i^{7,z} \; f_i^{8,z} $$
    and
    $$ f_i^{3,z} : f_i^{4,z} \; f_i^{5,z} \; f_i^{6,z} \; f_i^{7,z} \; f_i^{8,z} \; v_i^z \; f_i^{1,z} \; f_i^{2,z}$$
    $$ f_i^{4,z}: f_i^{5,z} \; f_i^{3,z} \; f_i^{6,z} \; f_i^{7,z} \; f_i^{8,z} \; v_i^z \; f_i^{1,z} \; f_i^{2,z}$$
    $$ f_i^{5,z}: f_i^{3,z} \; f_i^{4,z} \; f_i^{6,z} \; f_i^{7,z} \; f_i^{8,z} \; v_i^z \; f_i^{1,z} \; f_i^{2,z}$$
    and
    $$ f_i^{6,z} : f_i^{7,z} \; f_i^{8,z} \; v_i^z \; f_i^{1,z} \; f_i^{2,z} \; f_i^{3,z} \; f_i^{4,z} \; f_i^{5,z} $$
    $$ f_i^{7,z} : f_i^{8,z} \; f_i^{6,z} \; v_i^z \; f_i^{1,z} \; f_i^{2,z} \; f_i^{3,z} \; f_i^{4,z} \; f_i^{5,z}$$
    $$ f_i^{8,z} : f_i^{6,z} \; f_i^{7,z} \; v_i^z \; f_i^{1,z} \; f_i^{2,z} \; f_i^{3,z} \; f_i^{4,z} \; f_i^{5,z}$$
    A visual example of the forcing gadget (without edge labels) is shown in Figure \ref{fig:gadgetF}. This concludes the construction of the variable and forcing gadgets.

    \begin{figure}[!tbh]
        \centering
        \includegraphics[width=6cm]{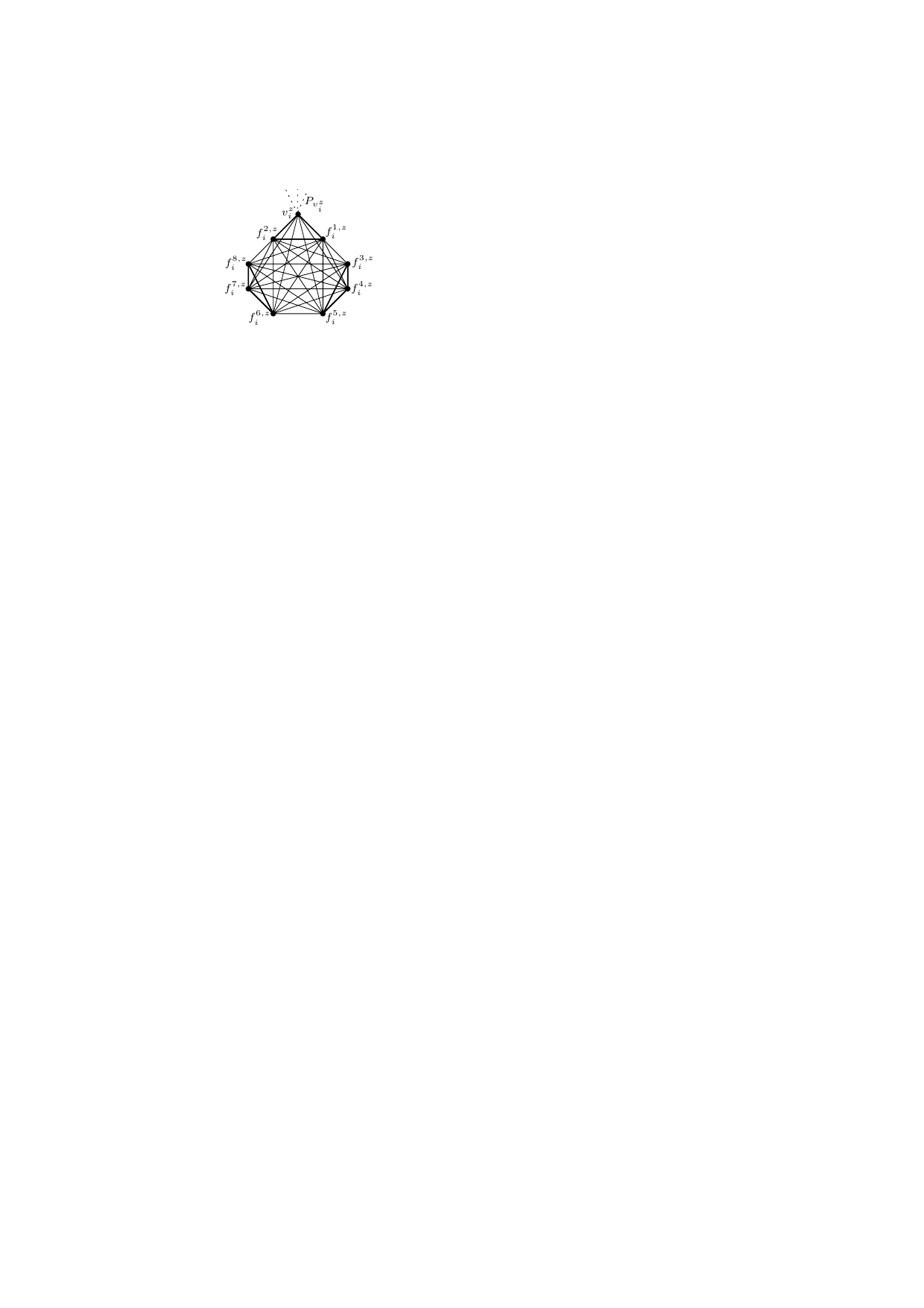}
        \caption{Illustration of the forcing gadget construction (edge labels omitted for clarity)}
        \label{fig:gadgetF}
        \Description{Nine agents forming a fully connected graph.}
    \end{figure}

    Finally, for each clause $C_j=(C_j^1\lor C_j^2\lor C_j^3)\in C$, we construct a \textbf{clause gadget} $G_{C_j}$ consisting of the three agents $x_j^1, x_j^2, x_j^3$, respectively. These agents are connected to variable gadgets, but we do not consider these occurrences of variable gadgets to be part of the clause gadget (in contrast to the forcing gadgets being part of the variable gadgets). The preferences are constructed as follows:
    \begin{align*}
        x_j^1 &: x_j^2 \; x_j^3 \; v(x_j^1) \\
        x_j^2 &: x_j^3 \; x_j^1 \; v(x_j^2) \\
        x_j^3 &: x_j^1 \; x_j^2 \; v(x_j^3)
    \end{align*}
    where $v(x_j^s)$ will be specified below. This concludes the construction of the clause gadgets. The clause gadget construction is illustrated in Figure \ref{fig:gadgetC}.
    
    \begin{figure}[!tbh]
        \centering
        \includegraphics[width=5cm]{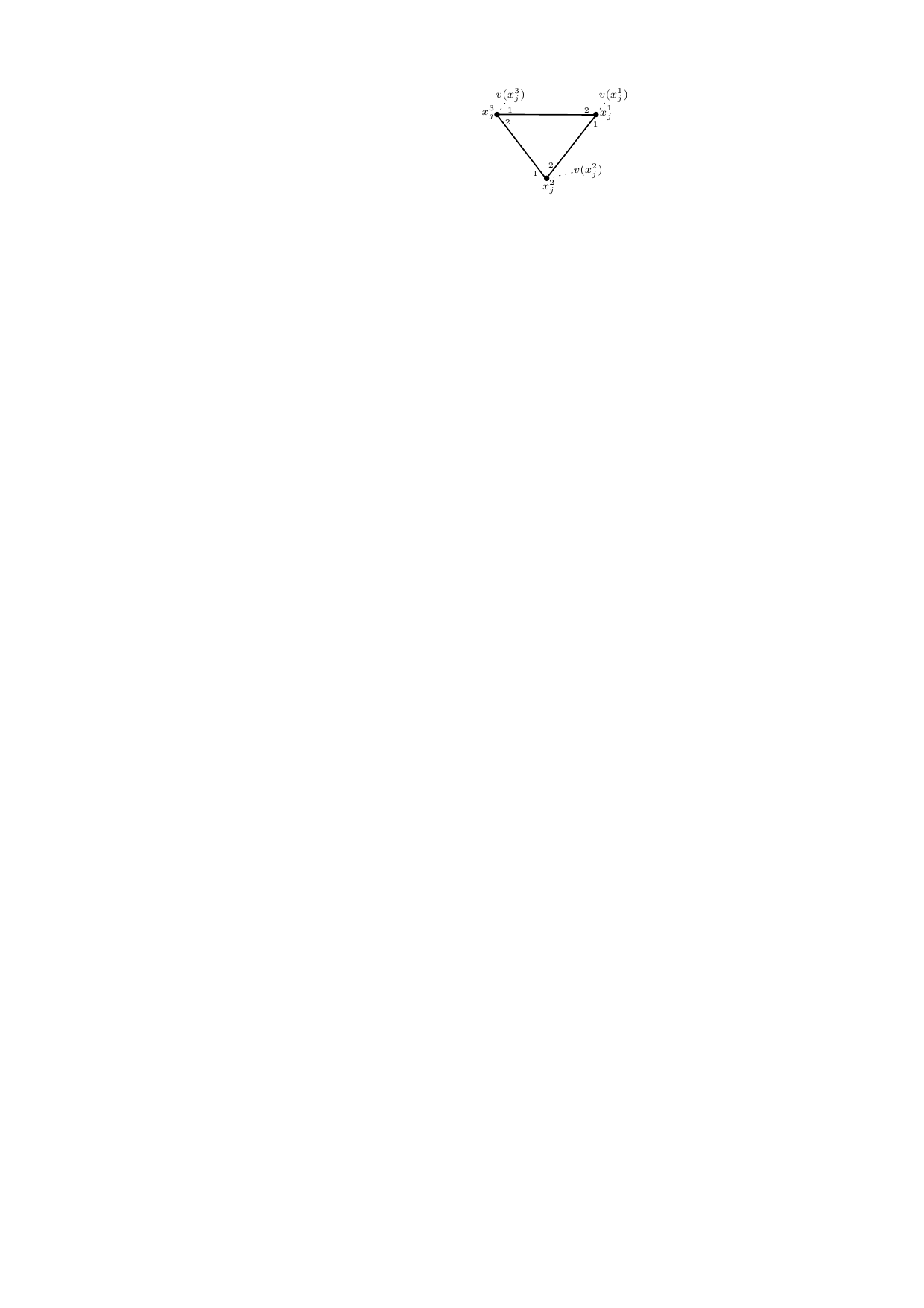}    
        \caption{Illustration of the clause gadget construction}
        \label{fig:gadgetC}
        \Description{Three agents forming a preference cycle.}
    \end{figure}

    Now we link up the variable and clause gadgets as follows: for each unnegated literal $C_j^d=V_i$, let $v(x_j^d)=v_i^T$ and let $x_1(v_i^T)=x_j^d$ if this is the first occurrence of this literal or $x_2(v_i^T)=x_j^d$ if this is the second occurrence of this literal. Similarly, for each negated literal $C_j^d=\bar{V_i}$, let $v(x_j^d)=v_i^F$ and let $x_1(v_i^F)=x_j^d$ if this is the first occurrence of this literal or $x_2(v_i^F)=x_j^d$ if this is the second occurrence of this literal. Recall that by definition of {\sc (2,2)-e3-sat}, there are exactly two unnegated and two negated occurrences of each literal in $B$. Notice that the choice of first or second occurrence in the preference list construction within the variable gadgets will be irrelevant, and that this is an arbitrary choice we make. This completes the construction of $I$. Notice that the construction of each gadget can be performed in constant time due to the bounded size of each gadget, so $I$ can be constructed from $B$ in polynomial time. 

    We will now show that the gadget $G_{F(v_i^z)}$ does indeed force $v_i^z$ to be matched to someone on $P_{v_i^z}$ for any matching in which no agent is in more than one blocking pair.

    \begin{claim}
        \label{claim:mmsri1}
        Let $G_{F(v_i^z)}$ be an occurrence of the forcing gadget, and let $M$ be a matching of $I$. If $M(v_i^z)\notin P_{v_i^z}$, then $\max_{a_r\in A[G_{F(v_i^z)}]}\vert bp_{a_r}(M)\cap E[G_{F(v_i^z)}]\vert> 1$.\footnote{We use the notation $A[G_{F(v_i^z)}]$ to refer to the agents contained in $G_{F(v_i^z)}$ and the notation $E[G_{F(v_i^z)}]$ to refer to the edges of the acceptability graph induced by $G_{F(v_i^z)}$.} 
    \end{claim}    
    \begin{proof}[Proof of Claim \ref{claim:mmsri1}]
    \renewcommand{\qedsymbol}{$\blacksquare$}
        Suppose that $M(v_i^z)\notin P_{v_i^z}$. Then at least one agent in each of $\{v_i^z,f_i^{1,z},f_i^{2,z}\}$, $\{f_i^{3,z},f_i^{4,z},f_i^{5,z}\}$ and $\{f_i^{6,z},f_i^{7,z},f_i^{8,z}\}$ must be matched to someone worse than both other members of their preference cycle. Without loss of generality (by symmetry in the preference construction), let these be $f_i^{2,z},f_i^{5,z}$ and $f_i^{8,z}$. Notice that immediately we must have $\{\{f_i^{1,z},f_i^{2,z}\}$, $\{f_i^{4,z},f_i^{5,z}\}$, $\{f_i^{7,z},f_i^{8,z}\}\}\subseteq bp(M)$. Notice, furthermore, that we can match at most two of the agents $f_i^{2,z},f_i^{5,z}$ and $f_i^{8,z}$ to each other. However, again, these three agents form a preference cycle of odd length. Without loss of generality, suppose that $f_i^{8,z}$ remains unmatched, then $\{f_i^{5,z},f_i^{8,z}\}\in bp(M)$. Hence, $\vert bp_{f_i^{5,z}}(M)\vert=\vert bp_{f_i^{8,z}}(M)\vert= 2>1$ and both of these blocking pairs are within $G_{F(v_i^z)}$, i.e., it must be the case that $\max_{a_r\in A[G_{F(v_i^z)}]}\vert bp_{a_r}(M)\cap E[G_{F(v_i^z)}]\vert> 1$.
    \end{proof}

    Notice that, by the contrapositive of Claim \ref{claim:mmsri1}, if $\max_{a_r\in A[G_{F(v_i^z)}]}\vert bp_{a_r}(M)\cap E[G_{F(v_i^z)}]\vert\leq 1$, then $M(v_i^z)\in P_{v_i^z}$. Conversely, if $\{\{f_i^{1,z},f_i^{2,z}\},\{f_i^{3,z},f_i^{4,z}\},\{f_i^{5,z},f_i^{6,z}\},\{f_i^{7,z},f_i^{8,z}\}\}\subseteq M$ and $M(v_i^z)\in P_{v_i^z}$, then, within the gadget $G_{F(v_i^z)}$, $M$ admits precisely the blocking pairs $\{f_i^{4,z},f_i^{5,z}\}$ and $\{f_i^{6,z},f_i^{8,z}\}$ (and possibly blocking pairs involving $v_i^z$ and agents outside of $G_{F(v_i^z)}$), so no agent within $G_{F(v_i^z)}$ is in more than one blocking pair with other agents within $G_{F(v_i^z)}$.

    Next, we will establish several properties about the full variable gadgets construction.

    \begin{claim}
    \label{claim:mmsri2}
        Let $G_{V_i}$ be an occurrence of the variable gadget, and let $M$ be a matching of $I$. Then
        \begin{itemize}
            \item if $\max_{a_r\in A[G_{V_i}]}\vert bp_{a_r}(M)\cap E[G_{V_i}]\vert\leq 1$ then $\{\{v_i^T,v_i^1\},\{v_i^F,v_i^2\}\}\subset M$ or $\{\{v_i^T,v_i^2\},\{v_i^F,v_i^1\}\}\subset M$;
            \item if $\{\{v_i^T,v_i^1\},\{v_i^F,v_i^2\}\}\not\subset M\land\{\{v_i^T,v_i^2\},\{v_i^F,v_i^1\}\}\not\subset M$, then $\max_{a_r\in A[G_{V_i}]}\vert bp_{a_r}(M)\cap E[G_{V_i}]\vert> 1$;
            \item if $\{\{v_i^T,v_i^1\},\{v_i^F,v_i^2\}\}\subseteq M$, then $bp(M)\vert_{G_{V_i}}=bp(M)\vert_{G_{F(v_i^1)}}\cup bp(M)\vert_{G_{F(v_i^2)}}$;\footnote{We use the notation $bp(M)\vert_{G_{F(v_i^s)}}$ to indicate that we restrict the set of blocking pairs admitted by $M$ to those that involve only pairs both contained in the part $G_{F(v_i^s)}$ of $I$.}
            \item if $\{\{v_i^T,v_i^2\},\{v_i^F,v_i^1\}\}\subseteq M$, then $bp(M)\vert_{G_{V_i}}=\{\{v_i^T,v_i^1\}\}\cup bp(M)\vert_{G_{F(v_i^1)}}\cup bp(M)\vert_{G_{F(v_i^2)}}$.
        \end{itemize}  
    \end{claim} 
    \begin{proof}[Proof of Claim \ref{claim:mmsri2}]
    \renewcommand{\qedsymbol}{$\blacksquare$}
        For the first part of the statement, suppose that $\max_{a_r\in A[G_{V_i}]}\vert bp_{a_r}(M)\cap E[G_{V_i}]\vert\leq 1$. By the contrapositive of Claim \ref{claim:mmsri1}, the use of forcing gadgets attached to $v_i^1$ and $v_i^2$ implies that we must have $M(v_i^1)\in \{v_i^T,v_i^F\}$ and $M(v_i^2)\in \{v_i^T,v_i^F\}$. Given that $M$ is a matching, it immediately follows that either $\{\{v_i^T,v_i^1\},\{v_i^F,v_i^2\}\}\subseteq M$ or $\{\{v_i^T,v_i^2\},\{v_i^F,v_i^1\}\}\subseteq M$. The strict subset notation of the claim follows from the fact that if equality held, the agents within each forcing gadget would all block with each other, therefore violating $\max_{a_r\in A[G_{V_i}]}\vert bp_{a_r}(M)\cap E[G_{V_i}]\vert\leq 1$. 
        
        For the second part of the statement, if it is the case that neither $\{\{v_i^T,v_i^1\},\{v_i^F,v_i^2\}\}\subset M$ nor $\{\{v_i^T,v_i^2\}, \{v_i^F,v_i^1\}\}\subset M$, then it must be the case that at least one of $v_i^1,v_i^2$ is matched to some agent within their respective copy of the forcing gadget or is unmatched, in which case Claim \ref{claim:mmsri1} states that $\max_{a_r\in A[G_{V_i}]}\vert bp_{a_r}(M)\cap E[G_{V_i}]\vert> 1$.

        The third part of the statement follows trivially: $v_i^T$ and $v_i^1$ are each other's first choices, and they are matched to each other, and $v_i^2$ is matched to $v_i^F$ whom they prefer to any agent in their forcing gadget. Hence, the only blocking pairs within the $G_{V_i}$ gadget are those admitted by $M$ within the copies of the forcing gadgets. The fourth part of the statement follows from a similar argument, with the additional blocking pair $\{v_i^T,v_i^1\}$ incurred as these agents are each other's first choices.
    \end{proof}

    We will now establish how agents in the clause gadget must be matched for the matching to have desirable properties, and vice versa.

    \begin{claim}
    \label{claim:mmsri3}
        Let $G_{C_j}$ be an occurrence of the clause gadget, and let $M$ be a matching of $I$. If it is true that $\max_{a_r\in A[G_{C_j}]}\vert bp_{a_r}(M)\cap (E[G_{C_j}]\cup\{\{x_j^s,v(x_j^s)\}:1\leq s\leq 3\})\vert\leq 1$, then there exists at least one $x_j^d$ in $G_{C_j}$ such that $M(v(x_j^d))\succeq_{v(x_j^d)}x_j^d$. Conversely, if there exists at least one $x_j^d$ in $G_{C_j}$ such that $M(v(x_j^d))\succeq_{v(x_j^d)}x_j^d$ and, furthermore, $\{x_{j}^{d-1},x_{j}^{d+1}\}\in M$ (addition modulo 3), then $\max_{a_r\in A[G_{C_j}]}\vert bp_{a_r}(M)\cap (E[G_{C_j}]\cup\{\{x_j^s,v(x_j^s)\}:1\leq s\leq 3\})\vert\leq 1$.
    \end{claim}
    \begin{proof}[Proof of Claim \ref{claim:mmsri3}]
    \renewcommand{\qedsymbol}{$\blacksquare$}
        For the first part of the statement, notice that for at least one $x_j^d$ agent in $G_{C_j}$, it must be the case that $M(x_j^d)\in \{v(x_j^d),x_j^d\}$. Then due to the preference cycle, it must necessarily be the case that $\{x_{j}^d,x_{j}^{d-1}\}\in bp(M)$. Now suppose that $\max_{a_r\in A[G_{C_j}]}\vert bp_{a_r}(M)\cap (E[G_{C_j}]\cup\{\{x_j^s,v(x_j^s)\}:1\leq s\leq 3\})\vert\leq 1$. If $M(x_j^d)=v(x_j^d)$, then $v(x_j^d)$ is matched to someone exactly as good as $x_j^d$. If $M(x_j^d)=x_j^d$ (i.e., $x_j^d$ remains unmatched in $M$), then $v(x_j^d)$ must be matched to someone better than $x_j^d$, otherwise $\{x_j^d,v(x_j^d)\}\in bp(M)$ and so $\vert bp_{x_j^d}(M)\vert>1$.

        For the second part of the statement, notice from the preference construction within $G_{C_j}$ that if $\{x_{j}^{d-1},x_{j}^{d+1}\}\in M$, then $M$ admits the blocking pair $\{x_j^d,x_{j}^{d-1}\}$ but $\{x_j^d,x_{j}^{d+1}\}$ is not a blocking pair because $x_{j}^{d+1}$ prefers $x_{j}^{d-1}$ to $x_{j}^d$. Similarly, both of $x_{j}^{d-1}$ and $x_{j}^{d+1}$ prefer each other to $v(x_j^{d-1})$ and $v(x_j^{d+1})$, respectively. Lastly, if $v(x_j^d)$ is matched to someone at least as good as $x_j^d$, then $\{v(x_j^d),x_j^d\}$ is not a blocking pair. Hence, no agent within $G_{C_j}$ is contained in more than one blocking pair.
    \end{proof}

    We now establish one direction of the relationship between satisfiability and minimax almost-stable matchings.

    \begin{claim}
        \label{claim:mmsri4}
        If $B$ is satisfiable then there exists a matching $M$ of $I$ such that $\max_{a_r\in A}\vert bp_{a_r}(M)\vert\leq 1$.
    \end{claim}
    \begin{proof}[Proof of Claim \ref{claim:mmsri4}]
    \renewcommand{\qedsymbol}{$\blacksquare$}
        To establish this claim, we will construct such a matching $M$ explicitly, starting from the empty set $\varnothing$. Consider a satisfying truth assignment $f: V\rightarrow \{T,F\}$. 
        
        For every variable $V_i\in V$, if $f(V_i)=T$ then, in the corresponding variable gadget $G_{V_i}$, add $\{v_i^T,v_i^1\}$ and $\{v_i^F,v_i^2\}$ to $M$. If $f(V_i)=F$ instead, then add $\{v_i^T,v_i^2\}$ and $\{v_i^F,v_i^1\}$ to $M$.

        For every forcing gadget $G_{F(v_i^z)}$ that was used in the construction of the variable gadgets, add $\{f_i^{1,z},f_i^{2,z}\}$, $\{f_i^{3,z},f_i^{4,z}\}$, $\{f_i^{5,z},f_i^{6,z}\}$ and $\{f_i^{7,z},f_i^{8,z}\}$ to $M$.

        For every clause $C_j=(C_j^1\lor C_j^2 \lor C_j^3)$ in $B$, pick one literal $C_j^s$ such that either $C_j^s$ is an unnegated variable $V_s$ and $f(V_s)=T$, or $C_j^s$ is a negated variable $\bar{V_s}$ and $f(V_s)=F$ (i.e., $C_j^s$ satisfies the clause $C_j$; by the assumption that $B$ is satisfiable at least one such literal must exist in each clause). In the corresponding clause gadget $G_{C_j}$, add $\{x_j^{s-1},x_j^{s+1}\}$ (addition and subtraction taken modulo 3) to $M$. This completes the construction of $M$.

        By Claim \ref{claim:mmsri1} and its subsequent discussion, within all forcing gadgets (attached to some agent $v_i^z$), $M$ admits the blocking pairs $\{f_i^{4,z},f_i^{5,z}\}$, $\{f_i^{6,z},f_i^{8,z}\}$ and no others, i.e., no agent is contained in more than one blocking pair unless $v_i^z$ blocks with agents in $P_{v_i^z}$. However, by construction of $I$, each $v_i^z$ is either $v_i^1$ or $v_i^2$, and by construction of $M$, all such agents are matched either to $v_i^T$ or to $v_i^F$.

        By Claim \ref{claim:mmsri2}, within all variable gadgets $G_{V_i}$, where $\{\{v_i^T,v_i^1\},\{v_i^F,v_i^2\}\}\subseteq M$, $M$ admits only the blocking pairs coming from the forcing gadgets. For those variable gadgets where, instead, it is the case that $\{\{v_i^T,v_i^2\},\{v_i^F,v_i^1\}\}\subseteq M$, $M$ admits the blocking pair $\{v_i^T,v_i^1\}$ in addition to the blocking pairs coming from the forcing gadgets. Either way, no agent is contained in more than one blocking pair within any variable gadget. It remains to show that we cannot have additional blocking pairs between agents $v_i^T$ and $v_i^F$ with agents outside of the gadget (i.e., agents in the clause gadgets). 
        
        As shown in the proof of Claim \ref{claim:mmsri3}, within each clause gadget $G_{C_j}$, $M$ admits the blocking pair $\{x_j^d, x_j^{d-1}\}$ (where $x_j^d$ is the agent not matched within the clause gadget and subtraction is taken modulo 3) and no others, i.e., no agent is contained in more than one blocking pair. Similarly, we constructed $M$ such that $M(v(x_j^d))\succ_{v(x_j^d)}x_j^d$, and such that $M(x_j^{d+1})=x_j^{d-1}\succ_{x_j^{d+1}}v(x_j^{d+1})$ and $M(x_j^{d-1})=x_j^{d+1}\succ_{x_j^{d-1}}v(x_j^{d-1})$, so none of the links $\{\{x_j^s,v(x_j^s)\}:1\leq s\leq 3\}$ between variable and clause gadgets is blocking.

        Hence, $M$ is a matching such that no agent from $I$ is contained in more than one blocking pair, i.e., $\max_{a_r\in A}\vert bp_{a_r}(M)\vert\leq 1$ as required.
    \end{proof}

    We will now establish the converse of Claim \ref{claim:mmsri4}.

    \begin{claim}
        \label{claim:mmsri5}
        If there exists a matching $M$ of $I$ such that $\max_{a_r\in A}\vert bp_{a_r}(M)\vert\leq 1$, then $B$ is satisfiable.
    \end{claim}
    \begin{proof}[Proof of Claim \ref{claim:mmsri5}]
    \renewcommand{\qedsymbol}{$\blacksquare$}
        Suppose that there exists a matching $M$ such that $\max_{a_r\in A}\vert bp_{a_r}(M)\vert\leq 1$. We will construct a satisfying truth assignment $f: V\rightarrow \{T,F\}$. By Claim \ref{claim:mmsri2}, for every variable $V_i\in V$ of $B$, it must be the case for the corresponding variable gadget $G_{V_i}$ that either $\{\{v_i^T,v_i^1\},\{v_i^F,v_i^2\}\}\subseteq M$ or $\{\{v_i^T,v_i^2\},\{v_i^F,v_i^1\}\}\subseteq M$ (and clearly not both because $M$ is a matching). In the former case, set $f(V_i)=T$ and, in the latter case, set $f(V_i)=F$.

        Clearly, $f$ is a truth assignment. Now suppose, for the sake of contradiction, that $f$ does not satisfy $B$. Then there exists a clause $C_j=(C_j^1\lor C_j^2 \lor C_j^3)$ in $B$ such that all literals $C_j^d$ are false with respect to the assignment $f$. However, because $\max_{a_r\in A}\vert bp_{a_r}(M)\vert\leq 1$, by Claim \ref{claim:mmsri3}, it must be the case that for at least one $C_j^s$, agent $v(x_j^s)$ must be matched to someone at least as good as $x_j^s$. By Claim \ref{claim:mmsri2}, though, $\{x_j^s,v(x_j^s)\}\notin M$ because each $v(x_j^s)$ is matched to either $v_i^1$ (where $V_i$ is the variable such that either $C_j^s=V_i$ or $C_j^s=\bar{V_i}$) or $v_i^2$, since $\max_{a_r\in A}\vert bp_{a_r}(M)\vert\leq 1$. Because $v(x_j^s)$ prefers $x_j^s$ to $v_i^2$ (again, where $V_i$ is the variable such that either $C_j^s=V_i$ or $C_j^s=\bar{V_i}$), the only better options than $x_j^s$ on $\succ_{v(x_j^s)}$ are $v_i^1$ and possibly the first occurrence of the literal at $x_1(v(x_j^s))$ (if $C_j^s$ corresponds to the second occurrence of the variable, i.e., $x_j^s=x_2(v(x_j^s))$). However, due to the forcing gadget, $v(x_j^s)$ and $x_1(v(x_j^s))$ cannot be matched unless at least one agent is in more than one blocking pair, so the only possible match for $v(x_j^s)$ better than $x_j^s$ is $v_i^1$. Thus, $\{v(x_j^s),v_i^1\}\in M$, in which case literal $C_j^s$ satisfies the clause, contradicting our assumptions about $C_j$. Thus, there cannot exist any unsatisfied clause $C_j$ with respect to truth assignment $f$, so $f$ is a satisfying truth assignment of $B$, i.e., $B$ is satisfiable as required.
    \end{proof}

    Together, Claims \ref{claim:mmsri4}-\ref{claim:mmsri5} establish that {\sc 1-Max-AlmostStable-sri} is {\sf NP-hard} because {\sc (2,2)-e3-sat} is {\sf NP-hard}. This finishes the proof that {\sc $k$-Max-AlmostStable-sri} is {\sf NP-complete}. Finally, observe that, in our reduction, the agents $v_i^z$ (where $z\in\{1,2\}$) connecting forcing and variable gadgets have the longest preference lists. Specifically, for each such $v_i^z$, length$(\succ_{v_i^z})=\vert (v_{i}^T\;v_i^F)\vert+\vert(f_i^{1,z}\;f_i^{2,z}\dots f_i^{8,z})\vert=10$. This finishes the proof of Theorem \ref{thm:1-Maxbounded}.
\end{proof}

We can also extend this intractability result to complete preference lists.

\begin{theorem}
\label{thm:1-Maxcomplete}
    {\sc $k$-Max-AlmostStable-sri} is {\sf NP-complete}, even if $k=1$ and all preference lists are complete.
\end{theorem}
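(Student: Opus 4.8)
The plan is to reduce once more from {\sc (2,2)-e3-sat}, reusing the gadget architecture of Theorem~\ref{thm:1-Maxbounded} but completing every preference list. The forcing gadgets are already internally complete (each is a $9$-agent clique), so the only genuinely new edges are (i) the cross-gadget ``tail'' edges obtained by appending, to each agent's list, every agent not yet ranked, and (ii) edges needed to accommodate the clause agent that the yes-matching of Theorem~\ref{thm:1-Maxbounded} leaves unmatched. To handle (ii), I would augment each clause gadget $G_{C_j}$ with three private \emph{escape agents} $p_j^1,p_j^2,p_j^3$, inserting $p_j^s$ into $x_j^s$'s list immediately after $v(x_j^s)$ but \emph{before} the tail, setting each $p_j^s$ to rank $x_j^s$ first and the remaining two escapes of the clause in a cyclic order thereafter, and completing every list by appending all remaining agents in one fixed global order $\pi$.

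First I would prove a \emph{tail-inertness} lemma: in any matching $M$ of the completed instance $I'$ with $\max_{a_r}|bp_{a_r}(M)|\le 1$, no appended tail edge $\{a_i,a_j\}$ can block whenever both $a_i$ and $a_j$ are matched to partners appearing on their \emph{original} short lists, since each such agent strictly prefers its partner to every tail agent. The role of the escape agents and of placing them before the tail is exactly to guarantee that, in the intended matchings, every ``active'' agent (all $f$-, $v$-, and within-clause $x$-agents, and the free $x$-agent matched to its escape) is matched to a partner it prefers to all tail agents; this confines every blocking pair to the original gadget edges, so the analysis of Claims~\ref{claim:mmsri1}--\ref{claim:mmsri3} transfers essentially verbatim.

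For the forward direction I would extend the matching built in Claim~\ref{claim:mmsri4}: keep the forcing- and variable-gadget pairs, match the free clause agent $x_j^s$ to its escape $p_j^s$ (its top choice, so $p_j^s$ never blocks), and match the two remaining escapes of each clause to one another, where their cyclic preference order makes this pair non-blocking (the third escape, matched to its top choice, never blocks). One then checks that $x_j^s$, now matched to $p_j^s$, still blocks only with $x_j^{s-1}$ exactly as in the incomplete case, and that tail-inertness rules out any further blocking pair; hence $\max_{a_r}|bp_{a_r}(M)|\le 1$. For the reverse direction I would argue that a matching of $I'$ with minimax value $\le 1$ still satisfies all original gadget constraints: the forcing-gadget cliques are untouched by tail edges, so Claim~\ref{claim:mmsri1} still forces $M(v_i^z)\in P_{v_i^z}$ and Claim~\ref{claim:mmsri2} still yields the truth-value dichotomy, while a short case analysis of the augmented clause gadget (if all three $v(x_j^s)$ were unhappy, then whether the $x_j^s$ escape, remain unmatched, or match within the clause, some $x$-agent is forced into two blocking pairs) reproduces Claim~\ref{claim:mmsri3} and yields a satisfying assignment.

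The hard part will be the tail-inertness lemma together with the placement of the escape agents: I must choose $\pi$ and the escape positions so that completing the lists creates no mutually-preferring ``far'' pair in good matchings -- in particular so that the previously-unmatched clause agents (now matched low on their lists to an escape) and the agents $v_i^T,v_i^F$ cannot acquire a second blocking pair through the tail -- while simultaneously keeping the absorbing escape structure rigid enough that it cannot be exploited in the reverse direction to meet the minimax constraint without a genuine satisfying assignment.
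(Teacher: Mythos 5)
Your proposal is correct and follows essentially the same route as the paper: complete the preference lists by appending all unranked agents in arbitrary order, observe that appended (tail) entries cannot create blocking pairs once every agent is matched to someone on its original short list, and augment each clause gadget with absorber agents placed above the tail so that the clause agent left unmatched in the yes-matching of Theorem~\ref{thm:1-Maxbounded} acquires such a partner. The only differences are cosmetic: the paper uses a single shared absorber $x_j^4$ per clause (ranked fourth by all three clause agents and backed by a forcing gadget) instead of your three private escape agents, and its reverse direction simply restricts the matching back to the bounded-list instance and invokes Theorem~\ref{thm:1-Maxbounded} rather than re-deriving the gadget claims in the completed instance.
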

\begin{proof}        
    To show this, we will modify the clause gadget in the reduction given in the proof of Theorem \ref{thm:1-Maxbounded} slightly. Let $I$ be the {\sc sri} instance constructed in the previous reduction, and let $I'=(A',\succ')$ be a copy of $I$ that we will modify. For each clause gadget $G_{C_j}$ in $I$, we introduce one additional agent $x_j^4$ to the respective clause gadget $G_{C_j}'$ in $I'$ and append it to the end of the preference lists of each agent $x_j^1,x_j^2$ and $x_j^3$ (i.e., in the fourth position of their preference lists). Agent $x_j^4$ shall rank the other agents $x_j^1,x_j^2$ and $x_j^3$ of $G_{C_j}'$ in arbitrary order, followed by the attachment of a forcing gadget $G_{F(x_j^4)}$ contributing eight additional agents $\{f_{j}^{1},f_{j}^{2},\dots,f_{j}^{8}\}$. Specifically,
    $$x_j^4 : x_j^1 \; x_j^2 \; x_j^3 \; G_{F(x_j^4)}$$
    A visual example of the gadget construction is shown in Figure \ref{fig:gadgetC2}.
    
    \begin{figure}[!tbh]
        \centering
        \includegraphics[width=7cm]{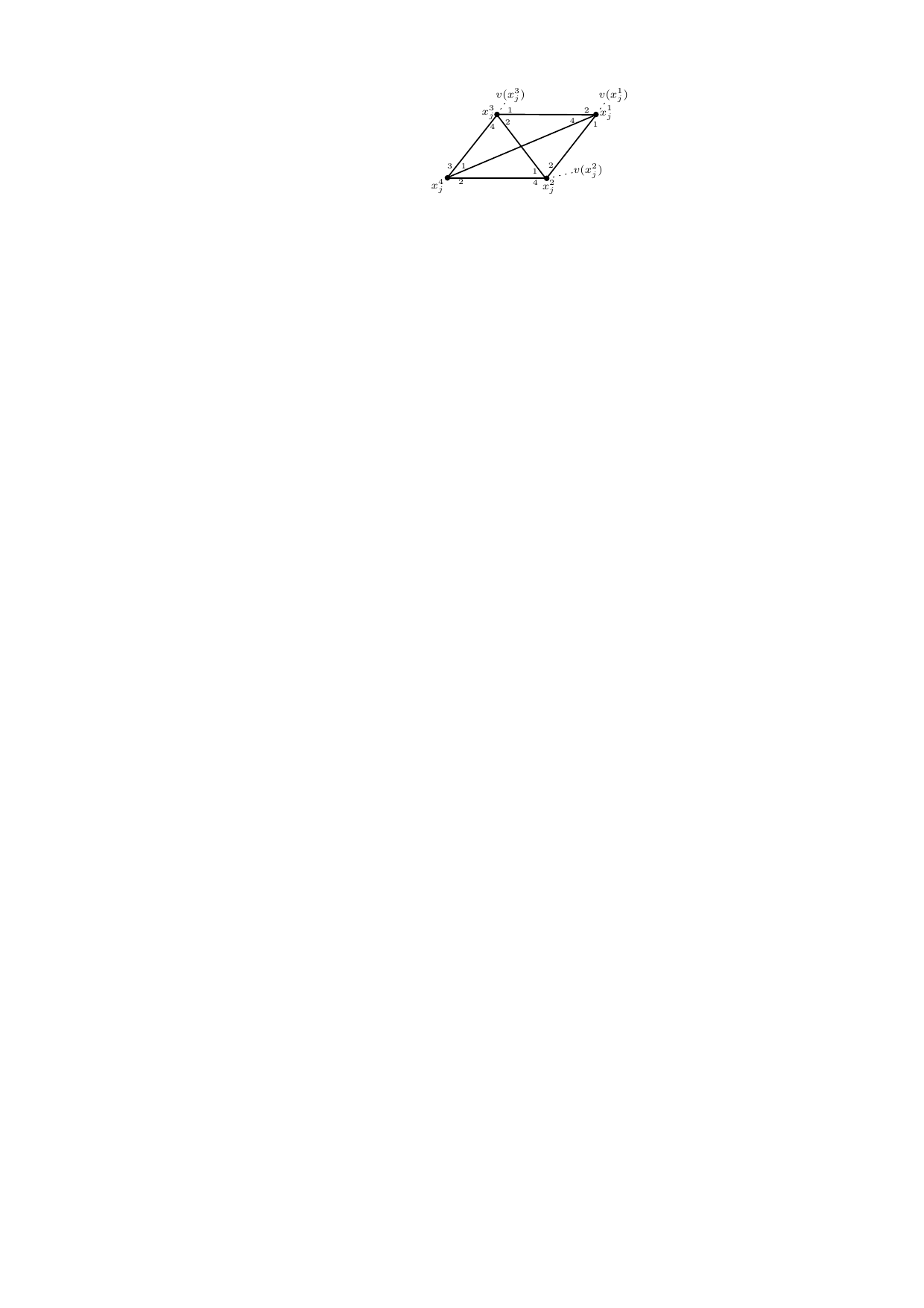}
        \caption{Illustration of the modified clause gadget construction}
        \label{fig:gadgetC2}
        \Description{Three agents forming a preference cycle, with an additional fourth agent connected to all three.}
    \end{figure}
    
    We can now complete all preference lists of all agents in the instance by appending all previously unranked agents in arbitrary order to the end of the preference lists. The resulting instance $I'$, constructed from an original Boolean formula $B$ with $n$ variables and $m$ clauses, consists of $\vert A'\vert=20n+12m$ agents. We now extend the main reduction argument as follows.

    \begin{claim}
        \label{claim:mmsric}
        There exists a matching $M$ of $I$ such that $\max_{a_r\in A}\vert bp_{a_r}(M)\vert\leq 1$ if and only if there exists a matching $M'$ of $I'$ such that $\max_{a_r\in A}\vert bp_{a_r}(M')\vert\leq 1$.
    \end{claim}
    \begin{proof}[Proof of Claim \ref{claim:mmsric}]
    \renewcommand{\qedsymbol}{$\blacksquare$}
        First, suppose that there exists a matching $M$ of $I$ with $\max_{a_r\in A}\vert bp_{a_r}(M)\vert\leq 1$. To start, let $M'=M$. We have shown in the proof of Theorem \ref{thm:1-Maxbounded} that, except for agents in clause gadgets, every other agent is matched in $M$. Furthermore, we showed that for every clause gadget $G_{C_j}$, there exists at least one agent $x_j^s$ such that $M(x_j^s)=x_j^s$. If there exists exactly one such agent in $G_{C_j}$, then we can update $M'$ such that $M'(x_j^s)=x_j^4$. If all agents in the clause gadget are unmatched in $M$ instead, then we can match $M'(x_j^1)=x_j^2$ and $M'(x_j^3)=x_j^4$, for example. In either case, for every agent $x_j^l$ in $G_{C_j}$, it is the case that $\vert bp_{x_j^l}(M')\vert\leq \vert bp_{x_j^l}(M)\vert$. Furthermore, for the forcing gadget $G_{F(x_j^4)}$ attached to $x_j^4$ in $I'$, we can add $\{f_j^1,f_j^2\}$, $\{f_j^3,f_j^4\}$, $\{f_j^5,f_j^6\}$ and $\{f_j^7,f_j^8\}$ to $M'$. Then $\vert bp_{x_j^4}(M')\vert = 0$. Doing this for every clause gadget establishes that $\max_{a_r\in A}\vert bp_{a_r}(M')\vert\leq 1$.

        Conversely, suppose that there exists a matching $M'$ of $I'$ where $\max_{a_r\in A}\vert bp_{a_r}(M')\vert\leq 1$. Let $M=M'\cap E$, where $E$ contains the edges in the acceptability graph of $I$, and let $M(a_i)=a_i$ for all agents of $I$ that are left unmatched in $M'\cap E$. We claim that $\max_{a_r\in A}\vert bp_{a_r}(M)\vert\leq 1$. Note that $M$ is clearly a matching of $I$. Furthermore, for every agent $a_i$ in $I$, we have that $M(a_i)=M'(a_i)$ if $M'(a_i)$ is an agent in $I$ (i.e., $M'(a_i)\neq x_j^4$ for some $j$) and $\{a_i,M'(a_i)\}$ is acceptable in $I$, and $M(a_i)=a_i$ otherwise. Then $bp_{a_i}(M)\subseteq bp_{a_i}(M')$, so $\max_{a_r\in A}\vert bp_{a_r}(M)\vert\leq\max_{a_r\in A}\vert bp_{a_r}(M')\vert\leq 1$ as required.
    \end{proof}
    
    This finishes the proof of \ref{thm:1-Maxcomplete} as the correspondence established in the claim above shows that {\sc $k$-Max-AlmostStable-sri} is {\sf NP-hard} even if $k=1$ and all preference lists are complete, and {\sc $k$-Max-AlmostStable-sri} clearly remains in {\sf NP} under these restrictions too.
\end{proof}

This strong intractability highlights the computational limitations of enforcing individual-level stability guarantees in these multi-agent settings. Intractability of {\sc Minimax-AlmostStable-sri} follows immediately, but we will state it again formally below. Note that a problem is {\sf para-NP-hard} with respect to a parameter $\kappa$ if it is {\sf NP-hard} already for a constant value of $\kappa$, and a problem is in {\sf XP} with respect to $\kappa$ if there exists an $O(n^{f(\kappa)})$ algorithm. If a problem is {\sf para-NP-hard} with respect to a parameter $\kappa$, then it is not in {\sf XP} with respect to $\kappa$ unless {\sf P}$=${\sf NP} (we refer to \cite{flumgrohe} for an introduction to parametrised complexity theory).

\begin{corollary}
\label{cor:srihard}
    The problem {\sc Minimax-AlmostStable-sri} is {\sf para-NP-hard} with respect to parameter $\kappa=\min_{M\in \mathcal M}\max_{a_i\in A}\vert bp_{a_i}(M)\vert$, regardless of whether preference lists are of bounded length at most 10 or whether preferences are complete.
\end{corollary}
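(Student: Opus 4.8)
The plan is to obtain this corollary directly from Theorems~\ref{thm:1-Maxbounded} and~\ref{thm:1-Maxcomplete}, using the observation already recorded in the text that solving the optimisation problem {\sc Minimax-AlmostStable-sri} lets us recover its optimal value $\kappa=\min_{M\in\mathcal M}\max_{a_i\in A}\vert bp_{a_i}(M)\vert$, and that the decision problem {\sc 1-Max-AlmostStable-sri} is precisely the question of whether $\kappa\leq 1$. First I would recall that, by Claims~\ref{claim:mmsri4} and~\ref{claim:mmsri5}, the instance $I$ produced by the reduction from {\sc (2,2)-e3-sat} admits a matching with $\max_{a_r\in A}\vert bp_{a_r}(M)\vert\leq 1$ if and only if the formula $B$ is satisfiable, and that the analogous statement holds for the complete-preference instances of Theorem~\ref{thm:1-Maxcomplete} via Claim~\ref{claim:mmsric}.

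The crucial additional ingredient is to pin down the exact value of $\kappa$ on both sides of the reduction. Since every forcing gadget contains odd-length preference cycles, the discussion following Claim~\ref{claim:mmsri1} shows that no matching of $I$ can be stable, so $\kappa\geq 1$ for every reduction instance. Combined with the equivalence above, a satisfiable $B$ therefore yields an instance with $\kappa=1$, whereas an unsatisfiable $B$ yields an instance with $\kappa\geq 2$. Thus the reduction in fact separates instances of optimal value exactly $1$ from those of optimal value at least $2$, so that distinguishing $\kappa=1$ from $\kappa\geq 2$ is already {\sf NP-hard}.

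I would then conclude via the {\sf XP} characterisation of {\sf para-NP-hardness} recalled before the statement. If {\sc Minimax-AlmostStable-sri} were in {\sf XP} with respect to $\kappa$, i.e., solvable in time $O(n^{f(\kappa)})$, then on every instance with $\kappa=1$ it would run in the fixed polynomial time $O(n^{f(1)})$ and return a matching attaining value $1$. Running this algorithm for $O(n^{f(1)})$ steps and inspecting the output then decides {\sc 1-Max-AlmostStable-sri} in polynomial time: a satisfiable $B$ is recognised by termination with a value-$1$ matching, whereas an unsatisfiable $B$ is recognised because the truncated run either does not terminate within the bound or returns a matching of value at least $2$. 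Since {\sc 1-Max-AlmostStable-sri} is {\sf NP-complete}, this would force {\sf P}$=${\sf NP}, establishing that the optimisation problem is {\sf NP-hard} already at the constant value $\kappa=1$; the bounded-length and complete-preference restrictions follow from Theorems~\ref{thm:1-Maxbounded} and~\ref{thm:1-Maxcomplete}, respectively.

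I expect the only delicate point to be this final truncation step: one must argue that although the function $f$ and the hidden constant in the {\sf XP} running time are not known explicitly, their mere existence suffices, since fixing $\kappa=1$ collapses $O(n^{f(\kappa)})$ into a concrete polynomial in $n$. This is exactly the standard argument that a {\sf para-NP-hard} problem cannot lie in {\sf XP} unless {\sf P}$=${\sf NP}, and here it is what reconciles hardness already at $\kappa=1$ with the existence of instances (from the preceding section) whose optimal value grows without bound.
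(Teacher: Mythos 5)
Your proposal is correct and follows essentially the same route as the paper, which simply derives the corollary ``immediately'' from Theorems~\ref{thm:1-Maxbounded} and~\ref{thm:1-Maxcomplete}; you have merely unpacked the standard reasoning behind that one-line derivation. The extra care you take -- observing that the forcing gadgets' odd preference cycles make every reduction instance unsolvable, so that satisfiable formulas yield $\kappa=1$ exactly while unsatisfiable ones yield $\kappa\geq 2$, and then running the truncated {\sf XP} argument -- is sound and, if anything, makes explicit why {\sf NP-hardness} at the constant parameter value $1$ is what the definition of {\sf para-NP-hardness} requires.
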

\begin{proof}
    This follows immediately from Theorems \ref{thm:1-Maxbounded}-\ref{thm:1-Maxcomplete}.
\end{proof}

\subsection{Hardness of Maximum Matchings in Bipartite Bounded-Degree Graphs}
\label{sec:smihard}

In a similar fashion to the previous subsection, the result below states that it is impossible to distinguish in polynomial time even between instances that admit a perfect matching in which every agent is in at most one blocking pair and instances where every perfect matching requires at least one agent to be in more than one blocking pair (unless {\sf P}={\sf NP}).

\begin{theorem}
\label{thm:1-Maxsmi}
    {\sc $k$-Max-AlmostStable-Perfect-smi} is {\sf NP-complete}, even if $k=1$ and all preference lists are of length at most 3.
\end{theorem}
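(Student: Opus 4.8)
The problem is clearly in {\sf NP}: given a perfect matching $M$, we scan every agent's (constant-length) preference list, compute $|bp_{a_i}(M)|$ for each $a_i$, and accept iff the maximum is at most $k$; this runs in polynomial time. The plan for hardness is to reduce from {\sc (2,2)-e3-sat} to {\sc 1-Max-AlmostStable-Perfect-smi}, mirroring the architecture of the proof of Theorem \ref{thm:1-Maxbounded} (variable gadgets carrying a binary choice, clause gadgets detecting satisfaction, and links between them), but redesigned so that (i) the acceptability graph is bipartite, (ii) only perfect matchings are feasible, and (iii) no preference list exceeds length $3$.

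The central difficulty is that the {\sc sri} construction forces structure using odd preference cycles (the triangle in the clause gadget and the nested odd cycles of the forcing gadget), and these are unavailable in a bipartite graph. My plan is therefore to replace odd-cycle forcing with forcing driven by the perfect-matching requirement itself. Concretely, I would attach degree-one pendant agents wherever the {\sc sri} proof used a forcing gadget: an agent whose preference list names a single partner must be matched to that partner in every perfect matching, which pins the partner's match and propagates determinism through the gadget exactly as Claim \ref{claim:mmsri1} did, but without introducing any cycle. For the variable gadget I would use a small balanced bipartite component (for instance a four-cycle on $v_i^T,v_i^F,v_i^1,v_i^2$ together with pendant pins) whose two perfect matchings correspond to the two truth values of $V_i$, so that the choice of perfect matching selects $f(V_i)$ and sends a consistent signal to all four literal occurrences. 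The clause gadget must realise a bipartite three-way \emph{or}: I would build it so that, under the forced structure, a perfect matching keeping every incident agent in at most one blocking pair exists if and only if at least one of the three literal links is ``active'' (its variable gadget frees the corresponding port), and so that an unsatisfied clause leaves one designated agent matched to a strictly inferior partner while still blocking with two others, i.e.\ in two blocking pairs. This is the bipartite analogue of the role played by Claim \ref{claim:mmsri3}.

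With the gadgets in place, correctness would follow the two-directional template of Theorem \ref{thm:1-Maxbounded}. For the forward direction I would take a satisfying assignment $f$, select the corresponding perfect matching in each variable gadget, pick in each clause one literal that $f$ makes true, and match the clause gadget around the freed port; I would then verify, gadget by gadget, that every agent lies in at most one blocking pair, so the resulting perfect matching witnesses a yes-instance. For the reverse direction I would assume a perfect matching $M$ with $\max_{a_i\in A}|bp_{a_i}(M)|\le 1$, read off $f$ from which of the two perfect matchings each variable gadget uses, and argue by contradiction: if some clause were unsatisfied then, by the clause-gadget analysis (the pins plus the bipartite \emph{or} property), some agent would necessarily be in two blocking pairs, contradicting minimax value $1$. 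Finally I would confirm that the instance is built in polynomial time, that the bipartition is genuine (all gadget edges cross it), and that the longest list -- which should occur at the clause agent ranking its three literal ports -- has length exactly $3$.

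The step I expect to be the main obstacle is the design of the clause gadget: in the {\sc sri} proof the three-way disjunction and the ``exactly one blocking pair when satisfied, at least two when not'' behaviour came essentially for free from the odd triangle, whereas in the bipartite perfect-matching setting I must engineer the same tension using only even structures, pendant pins, and the global perfectness constraint, while simultaneously respecting the list-length bound of $3$. Getting this gadget to force the second blocking pair onto a \emph{single} agent -- rather than spreading two separate blocking pairs over distinct agents, which would not violate the minimax-$1$ condition -- is the delicate part, and it is exactly what must be checked in the bipartite analogues of Claims \ref{claim:mmsri3}--\ref{claim:mmsri5}.
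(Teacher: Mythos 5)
Your NP-membership argument and the overall reduction template (from {\sc (2,2)-e3-sat}, with variable gadgets encoding truth values, clause gadgets encoding disjunction, and communication links) match the paper's strategy. However, there is a genuine gap: your proposal never actually constructs the gadgets, and the part you yourself flag as ``the main obstacle'' --- a bipartite clause gadget that, under the perfect-matching and length-$3$ constraints, concentrates a \emph{second} blocking pair on a \emph{single} agent exactly when the clause is unsatisfied --- is precisely where all the difficulty of the theorem lives. A plan that says ``I would build a gadget with property $X$'' without exhibiting it does not establish {\sf NP}-hardness; in particular it is not obvious that your proposed four-cycle variable gadget with pendant pins even admits two perfect matchings each incurring exactly one internal blocking pair, which is needed so that the variable agents have no ``budget'' left for a blocking communication edge. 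The quantitative bookkeeping (each gadget forces exactly one unavoidable internal blocking pair, and that pair must sit on the same agent that would otherwise block along the communication edge) is the crux, and it is absent.

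The paper avoids designing new gadgets altogether: it reuses, verbatim, the construction of Bir\'o, Manlove and Mittal for {\sc $k$-BP-AlmostStable-Perfect-smi} (an $8$-agent cycle per variable and an $8$-agent clause gadget with agents $q_j,z_j$ of degree $3$), together with their result that $B$ is satisfiable iff that instance admits a perfect matching with at most $n+m$ blocking pairs. The new content is purely an observation about that fixed construction: every perfect matching already incurs at least one blocking pair inside each of the $n+m$ gadgets, so $\vert bp(M)\vert\le n+m$ holds iff no communication edge blocks, which in turn holds iff no agent is in more than one blocking pair. This chain of equivalences transfers the known total-count hardness to the minimax objective with essentially no new gadget engineering. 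If you want to salvage your from-scratch approach, you would need to write down explicit variable and clause gadgets and prove the bipartite analogues of Claims \ref{claim:mmsri1}--\ref{claim:mmsri3}; the far shorter path is to notice that an existing construction already has the property that ``few blocking pairs in total'' and ``few blocking pairs per agent'' coincide.
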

\begin{proof}
    First, note that membership of {\sc $k$-Max-AlmostStable-Perfect-smi} in {\sf NP} follows from the fact that, given a matching $M$, we can efficiently verify whether $M$ is perfect by checking that the size of $M$ is half the number of agents, and whether the maximum number of blocking pairs that any agent is a part of is at most $k$ by iterating through the preference lists.

    We will show that {\sc $k$-Max-AlmostStable-Perfect-smi} is {\sf NP-hard} by reducing from the {\sf NP-complete} problem {\sc (2,2)-e3-sat} to {\sc $1$-Max-AlmostStable-Perfect-smi}. For the reduction itself, we use the same gadgets and leverage a central claim of \citet{biro_sm_10} in their reduction from {\sc (2,2)-e3-sat} to {\sc $k$-BP-AlmostStable-Perfect-smi}. For completeness, we will outline their reduction below.

    Let $B$ be a {\sc (2,2)-e3-sat} instance with the variables $V=\{V_1,V_2,\dots,V_n\}$ and the clauses $C=\{C_1,C_2,\dots,C_m\}$. We will construct an {\sc smi} instance $I=(A,\succ)$ from $B$, where the set of agents $A$ consists of the disjoint union of agents $A_v=\{x_i^r,y_i^r\;;\;1\leq i\leq n\land 1\leq r\leq 4\}$ from the variable gadgets and $A_c=\{c_j^s,p_j^s\;;\;1\leq j\leq m\land 1\leq s\leq 3\}\cup\{q_j,z_j\;;\;1\leq j\leq m\}$ from the clause gadgets. Altogether, $\vert A\vert=8(n+m)$.

    Now, for every variable $V_i\in V$, we construct a \textbf{variable gadget} $G_{V_i}$ consisting of the eight agents $x_i^1,x_i^2,x_i^3,x_i^4,y_i^1,y_i^2,y_i^3,y_i^4$, with preferences as follows: 
   \[
    \begin{minipage}{0.45\textwidth}
    \begin{align*}
    x_i^1 &: y_i^1 \; c(x_i^1) \; y_i^2 \\
    x_i^2 &: y_i^2 \; c(x_i^2) \; y_i^3 \\
    x_i^3 &: y_i^4 \; c(x_i^3) \; y_i^3 \\
    x_i^4 &: y_i^1 \; c(x_i^4) \; y_i^4
    \end{align*}
    \end{minipage}
    \hfill
    \begin{minipage}{0.45\textwidth}
    \begin{align*}
    y_i^1 &: x_i^1 \; x_i^4 \\
    y_i^2 &: x_i^1 \; x_i^2 \\
    y_i^3 &: x_i^2 \; x_i^3 \\
    y_i^4 &: x_i^3 \; x_i^4
    \end{align*}
    \end{minipage}
    \]
    where the $c(x_i^r)$ entries will be specified later. The construction is illustrated in Figure \ref{fig:gadgetvsmi}. Notice that each $G_{V_i}$ gadget only admits two perfect matchings:
    \begin{align*}
        M^1_i&=\{\{x_i^1,y_i^1\},\{x_i^2,y_i^2\},\{x_i^3,y_i^3\},\{x_i^4,y_i^4\}\}\text{ and}\\
        M^2_i&=\{\{x_i^1,y_i^2\},\{x_i^2,y_i^3\},\{x_i^3,y_i^4\},\{x_i^4,y_i^1\}\},
    \end{align*}
    which admit the blocking pairs $\{x_i^3,y_i^4\}$ and $\{x_i^1,y_i^1\}$, respectively, within the gadget (not counting blocking pairs that may also involve $c(x_i^r)$ entries), respectively. Intuitively, these matchings will later correspond to $V_i=T$ and $V_i=F$, respectively.

    \begin{figure}[!tbh]
        \centering
        \includegraphics[width=6cm]{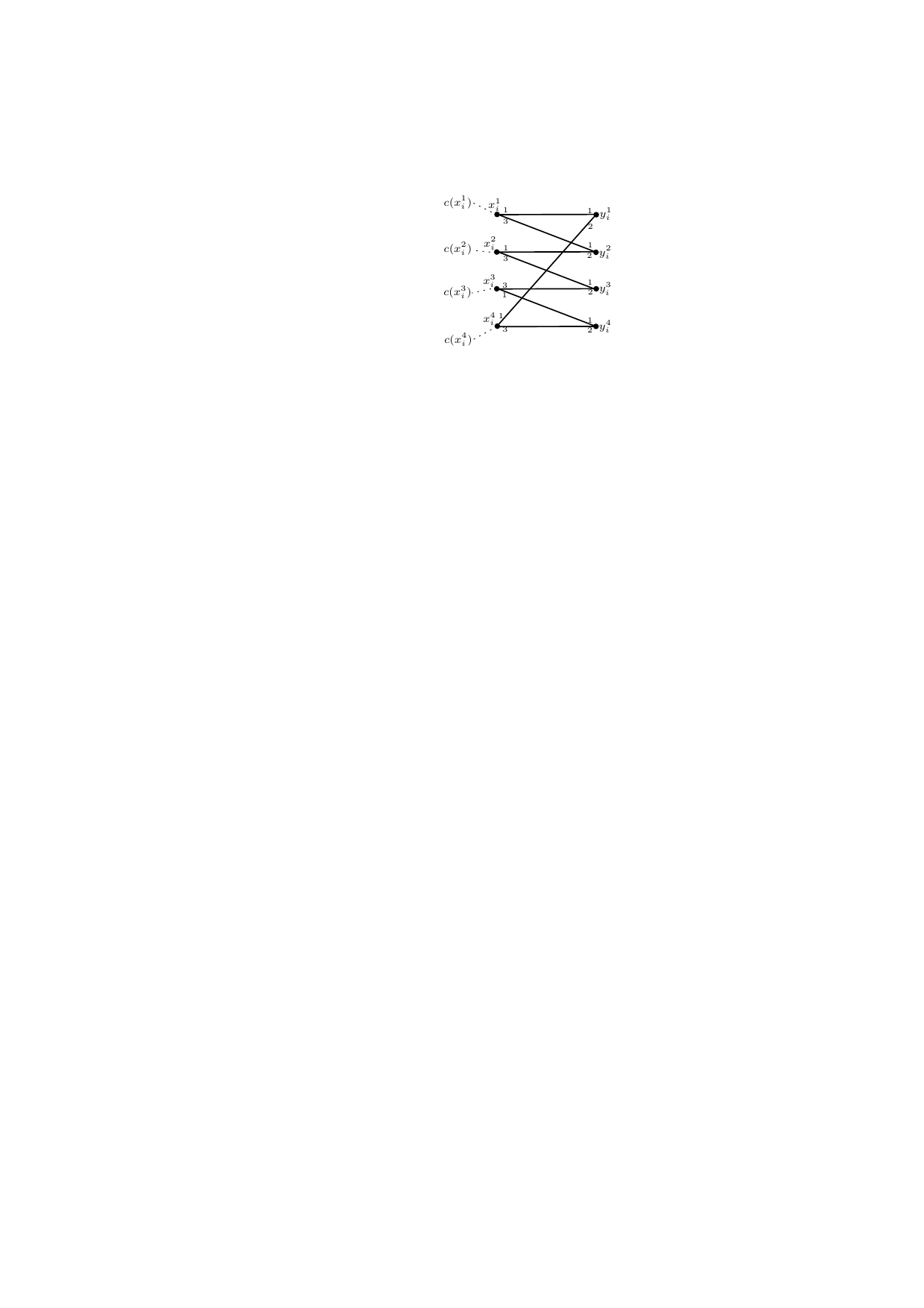}
        \caption{Illustration of the variable gadget construction}
        \label{fig:gadgetvsmi}
        \Description{Eight agents - four on each side, connected in a cycle.}
    \end{figure}

    For every clause $C_j=(C_j^1\lor C_j^2 \lor C_j^3)\in C$, we construct a \textbf{clause gadget} $G_{C_j}$ consisting of the eight agents $c_j^1, c_j^2, c_j^3, p_j^1, p_j^2, p_j^3, q_j, z_j$, with preferences constructed as follows: 
    \[
    \begin{minipage}{0.45\textwidth}
    \begin{align*}
    c_{j}^1 &: p_{j}^1 \; x(c_{j}^1) \; q_{j} \\
    c_{j}^2 &: p_{j}^2 \; x(c_{j}^2) \; q_{j} \\
    c_{j}^3 &: p_{j}^3 \; x(c_{j}^3) \; q_{j} \\
    z_{j} &: p_{j}^1 \; p_{j}^2 \; p_{j}^3
    \end{align*}
    \end{minipage}
    \hfill
    \begin{minipage}{0.45\textwidth}
    \begin{align*}
    p_{j}^1 &: c_{j}^1 \; z_{j} \\
    p_{j}^2 &: c_{j}^2 \; z_{j} \\
    p_{j}^3 &: c_{j}^3 \; z_{j} \\
    q_{j} &: c_{j}^1 \; c_{j}^2 \; c_{j}^3
    \end{align*}
    \end{minipage}
    \]
    where the $x(c_j^s)$ entries will be specified below. The construction is illustrated in Figure \ref{fig:gadgetcsmi}. Notice that each $G_{C_j}$ gadget only admits three different perfect matchings: 
    \begin{align*}
        M^1_j&=\{\{c_{j}^1,q_{j}\},\{c_{j}^2,p_{j}^2\},\{c_{j}^3,p_{j}^3\},\{p_{j}^1,z_{j}\}\},\\
        M^2_j&=\{\{c_{j}^1,p_{j}^1\},\{c_{j}^2,q_{j}\},\{c_{j}^3,p_{j}^3\},\{p_{j}^2,z_{j}\}\}, \text{ and}\\
        M^3_j&=\{\{c_{j}^1,p_{j}^1\},\{c_{j}^2,p_{j}^2\},\{c_{j}^3,q_{j}\},\{p_{j}^3,z_{j}\}\},
    \end{align*}
    which admit the blocking pairs $\{c_{j}^1,p_{j}^1\}$, $\{c_{j}^2,p_{j}^2\}$ and $\{c_{j}^3,p_{j}^3\}$ within the gadget (not counting blocking pairs that may also involve $x(c_j^s)$ entries), respectively. Intuitively, these matchings correspond to the first, second, and third literals of $C_j$ being true, respectively.

    \begin{figure}[!tbh]
        \centering
        \includegraphics[width=7.3cm]{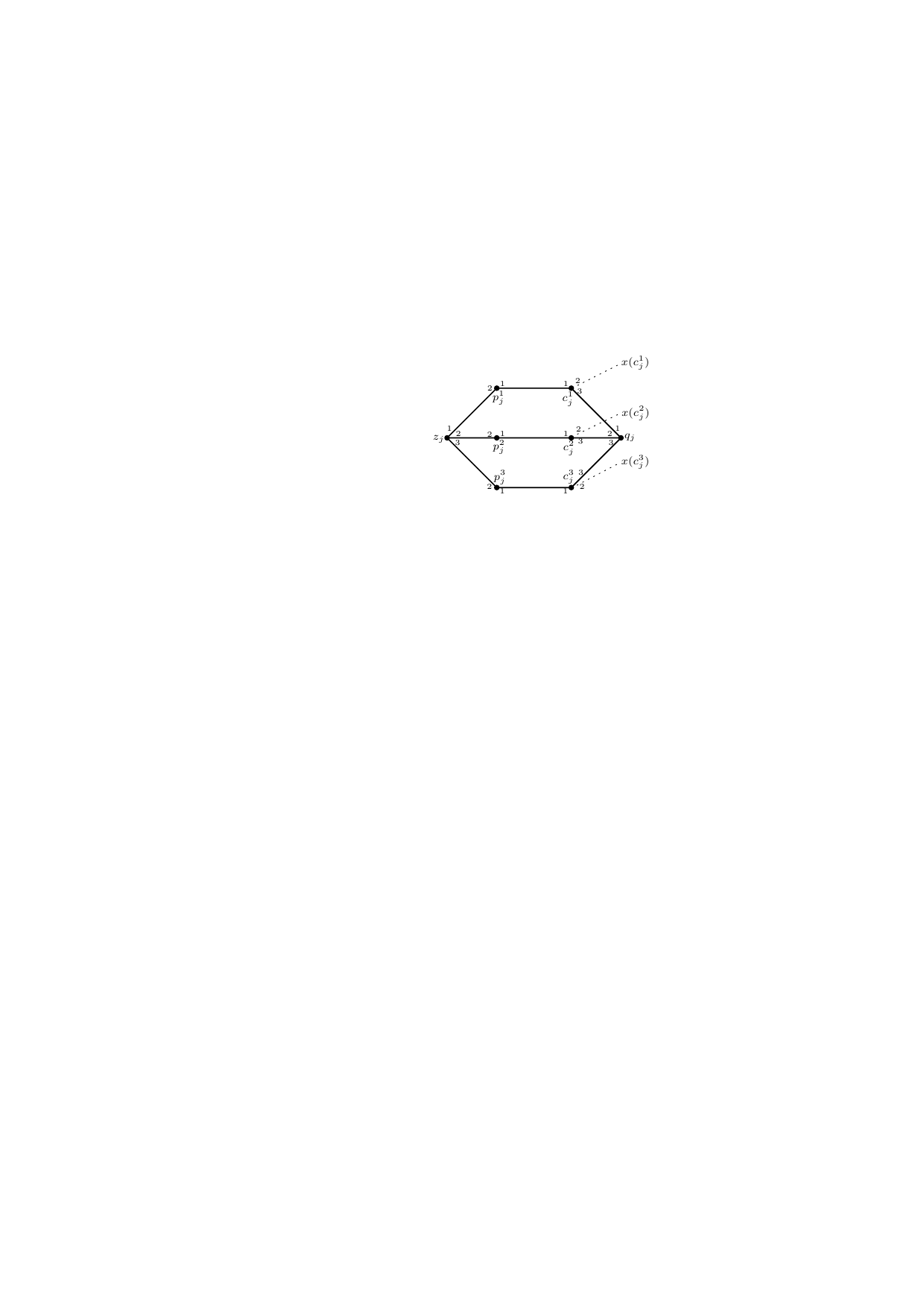}
        \caption{Illustration of the clause gadget construction}
        \label{fig:gadgetcsmi}
        \Description{Eight agents - forming an eight-figure.}
    \end{figure}

    The preference list entries that we previously left to be specified are completed as follows: let $C_j^s$ correspond to the $s$th literal in $C_j$ ($1\leq j\leq m, 1\leq s\leq 3$), which is either an unnegated or a negated occurrence of some variable $V_i$. If, processing $B$ from left to right, $c_j^s$ is the
    \begin{itemize}
        \item first unnegated occurrence of $V_i$, then let $x(c_j^s)=x_i^1$ and let $c(x_i^1)=c_j^s$;
        \item second unnegated occurrence of $V_i$, then let $x(c_j^s)=x_i^2$ and let $c(x_i^2)=c_j^s$;
        \item first negated occurrence of $V_i$, then let $x(c_j^s)=x_i^3$ and let $c(x_i^3)=c_j^s$;
        \item second negated occurrence of $V_i$, then let $x(c_j^s)=x_i^4$ and let $c(x_i^4)=c_j^s$.
    \end{itemize} 
    
    We also refer to these connections as the communication edges between variable and clause gadgets.

    Recall that this construction is the same as given by \citet{biro_sm_10}. The authors proved the following correspondence between satisfiability and blocking pairs.

    \begin{claim}[\cite{biro_sm_10}]
        \label{claim:birosm}
        $B$ is satisfiable if and only if $I$ admits a perfect matching $M$ such that $\vert bp(M)\vert\leq n+m$.
    \end{claim}
    
    Now recall that we already pointed out the structure of perfect matchings in the clause and variable gadgets, none of which can involve a communication edge. Furthermore, we noted the necessary blocking pairs incurred from each of the possible matchings. Specifically, we pointed out that every perfect matching must incur at least one blocking pair within each clause and within each variable gadget. Hence, for any perfect matching $M$ of $I$, $\vert bp(M)\vert\geq n+m$ by construction. In particular, notice that if $I$ admits a perfect matching $M$ with at most $n+m$ blocking pairs, then no communication edge is blocking $M$. Observe that, in this case, $\max_{a_r\in A}\vert bp_{a_r}(M)\vert\leq 1$. Hence, if $B$ is satisfiable, then $I$ admits a perfect matching $M$ with $\max_{a_r\in A}\vert bp_{a_r}(M)\vert\leq 1$.

    Now, conversely, suppose that $I$ admits a perfect matching $M$ with $\max_{a_r\in A}\vert bp_{a_r}(M)\vert\leq 1$. Then, because $M$ is a perfect matching, within every $G_{C_j}$ gadget there exists exactly one $c_j^s$ agent who is matched to agent $q_j$, and every other agent $c_j^{s'}$ (for $s\neq s'$) is matched to $p_j^{s'}$. Clearly no $q_j$ and no $z_j$ agent is part of a blocking pair of $M$. Thus, notice that $\vert bp(M)\vert>n+m$ only if some communication edge is blocking. However, $p_j^{s'}$ is the first choice of $c_j^{s'}$, so $c_j^{s'}$ does not block with their communication edge. Furthermore, $c_j^s$ does block with $p_j^s$ by construction, but we assume that $\max_{a_r\in A}\vert bp_{a_r}(M)\vert\leq 1$, so $c_j^s$ cannot block with their communication edge. Hence, no communication edge is blocking, in which case $\vert bp(M)\vert \leq n+m$, in which case $B$ is satisfiable as \citet{biro_sm_10} proved.
    
    Therefore, we have also established the following claim.

    \begin{claim}
        \label{claim:nocommblocking}
        $I$ admits a perfect matching $M$ such that $\vert bp(M)\vert\leq n+m$ if and only if $I$ admits a perfect matching in which no communication edge is blocking if and only if $I$ admits a perfect matching $M$ with $\max_{a_r\in A}\vert bp_{a_r}(M)\vert\leq 1$.
    \end{claim}
    
    Together with Claim \ref{claim:birosm}, we can conclude that $B$ is satisfiable if and only if $I$ admits a perfect matching $M$ such that $\max_{a_r\in A}\vert bp_{a_r}(M)\vert\leq 1$. This establishes that {\sc 1-Max-AlmostStable-Perfect-smi} is {\sf NP-hard}, even if all preference lists are of length at most 3, and thus completes the proof that {\sc $k$-Max-AlmostStable-Perfect-smi} is {\sf NP-complete}.
\end{proof}

\begin{corollary}
\label{cor:maxsmihard}
    The problem {\sc Minimax-AlmostStable-Max-smi} is {\sf para-NP-hard} with respect to parameter $\kappa=\min_{M\in \mathcal{M}^+}\max_{a_i\in A}\vert bp_{a_i}(M)\vert$ even if all preference lists are of length at most 3. By generalisation, this also applies to {\sc Minimax-AlmostStable-Max-sri}.
\end{corollary}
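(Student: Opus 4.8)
The plan is to obtain this corollary directly from Theorem \ref{thm:1-Maxsmi}, the only additional ingredient being the observation that the {\sc smi} instances built there have the property that their maximum-cardinality matchings are precisely their perfect matchings. First I would recall that in that reduction every variable gadget admits the two perfect matchings $M^1_i,M^2_i$ and every clause gadget admits the three perfect matchings $M^1_j,M^2_j,M^3_j$, and that none of these uses a communication edge. Selecting one perfect matching within each gadget and taking their union therefore yields a perfect matching of the whole instance $I$. Consequently $I$ always admits a perfect matching, so $\mathcal{M}^+=\mathcal{M}^p$ and, on such instances, {\sc Minimax-AlmostStable-Max-smi} is the same problem as minimising $\max_{a_i\in A}\vert bp_{a_i}(M)\vert$ over $\mathcal{M}^p$.

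Next I would pass to the decision version. Any polynomial-time algorithm for {\sc Minimax-AlmostStable-Max-smi} would in particular compute $\kappa=\min_{M\in\mathcal{M}^+}\max_{a_i\in A}\vert bp_{a_i}(M)\vert$ and hence decide whether $\kappa\leq 1$. Since $\mathcal{M}^+=\mathcal{M}^p$ on the constructed instances, this is exactly the question posed by {\sc $1$-Max-AlmostStable-Perfect-smi}, which Theorem \ref{thm:1-Maxsmi} shows to be {\sf NP-hard} even when all preference lists have length at most 3. Thus no such algorithm can exist unless {\sf P}$=${\sf NP}. Because satisfiable formulas are mapped to instances of optimum exactly $1$ while unsatisfiable formulas give optimum at least $2$, the hardness is witnessed already at the constant parameter value $\kappa=1$, which is precisely para-NP-hardness with respect to $\kappa$.

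Finally, for {\sc Minimax-AlmostStable-Max-sri} I would simply note that every {\sc smi} instance is an {\sc sri} instance whose acceptability graph happens to be bipartite, and that maximum-cardinality matchings, blocking pairs, and the parameter $\kappa$ are defined identically in both settings; hence the hard instances above are immediately hard instances of the more general problem, and the result carries over verbatim.

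I expect no real obstacle here: the single step needing care is verifying $\mathcal{M}^+=\mathcal{M}^p$ for the constructed instances, since the whole argument hinges on the optimisation over maximum-cardinality matchings collapsing to optimisation over perfect matchings. As noted, this follows at once from the gadget-wise existence of perfect matchings, so the corollary is in essence a restatement of Theorem \ref{thm:1-Maxsmi}.
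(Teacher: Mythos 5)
Your proposal is correct and follows essentially the same route as the paper, which simply derives the corollary "immediately" from Theorem \ref{thm:1-Maxsmi}; you merely make explicit the step the paper leaves implicit, namely that the constructed instances admit perfect matchings gadget-by-gadget, so $\mathcal{M}^+=\mathcal{M}^p$ and the hardness of {\sc $1$-Max-AlmostStable-Perfect-smi} transfers to the maximum-cardinality optimisation problem at parameter value $\kappa=1$.
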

\begin{proof}
    This follows immediately from Theorem \ref{thm:1-Maxsmi}. 
\end{proof}

Notice that this result cannot be extended to {\sc smi} instances with complete preference lists, as all stable matchings of such instances are maximum-cardinality matchings (notice: any two unmatched agents on opposite sides are acceptable and thus would be blocking).

\subsection{Further Intractability Results}
\label{sec:furtherhard}

In Theorem \ref{thm:1-Maxbounded}, we proved that, even in the restricted case where $k=1$, {\sc $k$-Max-AlmostStable-sri} is {\sf NP-complete}, and we could therefore conclude immediately in Corollary \ref{cor:srihard} that {\sc Minimax-AlmostStable-sri} is {\sf para-NP-hard} with respect to the optimal solution value. Here, this problem differs from other almost-stable matching problems such as {\sc MinBP-AlmostStable-sri} and {\sc MinBA-AlmostStable-sri}, which are in {\sf XP} with respect to the optimal solution value \cite{abraham06,chen17}. Given this strong intractability frontier, it is interesting to know whether {\sc Minimax-AlmostStable-sri} becomes efficiently solvable for larger optimal solution values. However, we will show below that our original reduction can be extended to show that {\sc $k$-Max-AlmostStable-sri} remains {\sf NP-complete} for any positive integer $k$ (although with preference list lengths increasing proportionally to $k$).

\begin{theorem}
\label{thm:k-Maxhardsri}
    {\sc $k$-Max-AlmostStable-sri} is {\sf NP-complete} for any fixed positive integer $k$.
\end{theorem}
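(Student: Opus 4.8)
The plan is to reduce from {\sc (2,2)-e3-sat} and to reuse the overall architecture of the proof of Theorem~\ref{thm:1-Maxbounded} (variable, forcing, and clause gadgets), but to \emph{lift the instability threshold} from $1$ to $k$. Membership in {\sf NP} is immediate from the same verification argument as before, so the work lies entirely in the {\sf NP}-hardness reduction. The key observation driving the construction is that, once $k\ge 2$, the forcing gadget of Theorem~\ref{thm:1-Maxbounded} is no longer useful: it only guarantees that a wrongly matched agent lies in at least \emph{two} blocking pairs, and an agent can now comfortably absorb two blocking pairs without violating the bound. I therefore need two new ingredients: a \emph{strengthened forcing gadget} that penalises a misplacement by at least $k+1$ blocking pairs, and a \emph{padding mechanism} that inflates the baseline blocking-pair count of the decision-critical clause agents to exactly $k-1$.

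For the strengthened forcing gadget I would replace the nine-agent gadget (which is, in effect, the instance $I_2$ of Proposition~\ref{prop:lbsri} with one apex whose list is prefixed by $P_{v_i^z}$) by the $(k+1)$-level nested construction $I_{k+1}$ of Proposition~\ref{prop:lbsri}, again designating one apex agent as $v_i^z$ and prefixing its list with $P_{v_i^z}$. The generalisation of Claim~\ref{claim:mmsri1} would then read: if $M(v_i^z)\notin P_{v_i^z}$ then $v_i^z$ participates fully in the nested cycles and, by the lower-bound argument of Proposition~\ref{prop:lbsri}, some gadget agent is forced into at least $k+1$ blocking pairs; conversely, if $M(v_i^z)\in P_{v_i^z}$ then the remaining $3^{k+1}-1$ agents form a ``punctured'' copy that, by an adaptation of the upper-bound construction of Proposition~\ref{prop:lbsri}, can be arranged so that no gadget agent is in more than $k$ blocking pairs and $v_i^z$ is in none within the gadget. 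Note that for each fixed $k$ this gadget has constant size $3^{k+1}$, so the preference list lengths grow only with $k$, as claimed.

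The padding mechanism exploits the fact that an agent whose partner is \emph{forced} can be made to sit in unconditional blocking pairs. For each clause agent $x_j^s$ I would prepend $k-1$ fresh agents $g_{j,s,1},\dots,g_{j,s,k-1}$ to the top of its list, where each $g_{j,s,t}$ is equipped with its own strengthened forcing gadget forcing it to be matched to a partner it ranks below $x_j^s$. In any matching meeting the bound $k$, every such $g_{j,s,t}$ must take its forced partner (otherwise it incurs at least $k+1$ blocking pairs); since $x_j^s$ prefers each $g_{j,s,t}$ to every attainable partner while each $g_{j,s,t}$ prefers $x_j^s$ to its forced partner, every pair $\{x_j^s,g_{j,s,t}\}$ blocks. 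This contributes exactly $k-1$ blocking pairs to each $x_j^s$, and exactly one to each $g_{j,s,t}$ (harmless, as $1\le k$), without otherwise disturbing the clause logic of Claim~\ref{claim:mmsri3}.

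With these ingredients the two directions of correctness mirror Claims~\ref{claim:mmsri4} and~\ref{claim:mmsri5}, but with the threshold shifted by $k-1$: a satisfying assignment yields a matching in which the single intra-clause blocking pair of each satisfied clause raises its witness from $k-1$ to exactly $k$, while every other agent stays at most $k$; conversely, in any matching with maximum per-agent blocking-pair count at most $k$, the strengthened forcing fixes a consistent truth assignment, and the $k-1$ padding pairs leave each $x_j^s$ room for only one further blocking pair, so an all-false clause would drive its witness to $k+1$, a contradiction. I expect the main obstacle to be the careful re-proof of the strengthened forcing gadget: establishing robustly that \emph{every} misplacement of the apex (not merely leaving it unmatched) incurs at least $k+1$ blocking pairs, while the punctured gadget simultaneously realises the $\le k$ bound, since the nested-cycle bookkeeping of Proposition~\ref{prop:lbsri} must be redone in the presence of the prefixed entries $P_{v_i^z}$ and the cross-gadget communication edges.
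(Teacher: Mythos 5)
Your proposal is correct and follows essentially the same route as the paper's proof: the paper likewise generalises the forcing gadget to the $3^{k+1}$-agent nested-cycle construction of Proposition~\ref{prop:lbsri} and pads each clause agent $x_j^s$ with $k-1$ pendant agents $u_\beta^1(x_j^s)$, each pulled down to a partner below $x_j^s$ via an intermediate agent $u_\beta^2(x_j^s)$ carrying its own forcing gadget (the same two-link indirection your ``forced partner'' device requires), so that an unsatisfied clause drives its witness to $k+1$ blocking pairs. The only cosmetic difference is that the paper appends the padding agents to the \emph{bottom} of $\succ_{x_j^s}$, so the $k-1$ extra blocking pairs arise only for the clause agent left unmatched in the 3-cycle, whereas your top-of-list placement makes them unconditional for all three clause agents; both placements produce the same $k$ versus $k{+}1$ gap.
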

\begin{proof}
    In the proof of Theorem \ref{thm:1-Maxbounded}, we already noted that {\sc $k$-Max-AlmostStable-sri} is in {\sf NP}. To prove this new intractability result, we modify the reduction in the remainder of that proof slightly. Specifically, we need to generalise the forcing gadget $G_{F(a_r)}$. Recall that the forcing gadget attaches to an agent $a_r$ with original preference list $P_{a_r}$, contributes eight new agents, and enforces that in any matching $M$ where no agent is in more than one blocking pair, there exists some agent $a_s\in P_{a_r}$ such that $\{a_r,a_s\}\in M$. Conversely, whenever no such agent $a_s$ exists on $P_{a_r}$, it must be the case that, for any matching, at least one agent is in more than one blocking pair. 

    We introduce the following parametrised version of the forcing gadget: let $G_{F_\omega(a_r)}$ (for some $\omega\in\mathbb N$) denote the generalisation of $G_{F(a_r)}$ which introduces $3^w-1$ additional agents (rather than 8) and has the nested preference cycles constructed as in Proposition \ref{prop:lbsri}. Specifically, $a_r$ retains $P_{a_r}$ at the top of their preference list, followed by the nested preference list construction. We highlight that $G_{F_2(a_r)}=G_{F(a_r)}$. 
    
    Now, for an arbitrary $G_{F_\omega(a_r)}$, if $a_r$ is not matched to some agent on $P_{a_r}$, then the result from Proposition \ref{prop:lbsri} applies, in which case $\min_{M\in\mathcal M}\max_{a_r\in A}\vert bp_r(M)\vert\geq \omega$. The contrapositive of this immediately implies that if $\max_{a_r\in A}\vert bp_r(M)\vert< \omega$ for some matching $M$, then $\{a_r,a_s\}\in M$ for some $a_s\in P_{a_r}$. Notice that this naturally extends Claim \ref{claim:mmsri1}.

    Now, for the reduction, let $k\geq 1$ be fixed and let $B$ be a boolean formula given as an instance of the {\sf NP-complete} problem {\sc (2,2)-e3-sat}, where $V=\{V_1,V_2,\dots,V_n\}$ is the set of variables and $C=\{C_1,C_2,\dots,C_m\}$ is the set of clauses. We construct an instance $I=(A,\succ)$ of {\sc $k$-Max-AlmostStable-sri} as follows. Let $A=A_{F_1}\cup A_{F_2} \cup A_U \cup A_V \cup A_C$ be a set of $2n(3^{k+1}+1)+m((k-1)3^{k+2}+3k)$ agents, where
    
    \begin{align*}
        A_{F_1}&=\{f_{1,i}^{\alpha,z} \;;\; 1\leq i\leq n \land 1\leq \alpha\leq 3^{k+1}-1\land 1\leq z\leq 2\},\\
        A_{F_2}&=\{f_{2,j}^{\alpha,z,\beta} \;;\; 1\leq j\leq m \land 1\leq \alpha\leq 3^{k+1}-1\land 1\leq z\leq 3\land 1\leq\beta\leq k-1\},\\
        A_U &=\{u_\beta^1(x_j^s),\;u_\beta^2(x_j^s)\;;\;1\leq j\leq m \land 1\leq s\leq 3\land 1\leq\beta\leq k-1\},\\
        A_V&=\{ v_i^T, v_i^F,v_i^1,v_i^2 \;;\; 1\leq i\leq n\}, \text{ and}\\
        A_C&=\{ x_j^s \;;\; 1\leq j\leq m \land 1\leq s\leq 3\}.
    \end{align*}
    These disjoint sets of agents correspond to those from forcing gadgets attached to agents in the variable gadgets, forcing gadgets attached to agents in the clause gadgets, blocking-pair introducing gadgets in the clause gadgets, the variable gadgets, and the clause gadgets, respectively. Notice that, although the size of $I$ is exponential in $k$, it is polynomial in $n$ and $m$ (the size of $B$), and we treat $k$ as a fixed constant. Therefore, this is a polynomial-time reduction (in the size of $B$).

    We start by constructing the variable and clause gadgets as in the proof of Theorem \ref{thm:1-Maxbounded}, but replacing the use of forcing gadgets $G_{F(a_r)}$ in the variable gadgets with copies of $G_{F_{k+1}(a_r)}$ (the agents from these copies make up $A_{F_1}$). We refer back to Figures \ref{fig:gadgetV}-\ref{fig:gadgetC} for the original constructions. Next, for every $x_j^s$ agent in each clause gadget, we connect the $2(k-1)$ agents $\{u_\beta^1(x_j^s),\;u_\beta^2(x_j^s)\;\vert\;1\leq\beta\leq k-1\}$ (which collectively make up $A_U$) and transform $x_j^s$'s preference list from the original construction 
    $$ x_j^s: x_j^{s+1}\; x_j^{s-1} \; v(x_j^s) $$
    (where addition and subtraction is taken modulo 3) to a new preference list
    $$ x_j^s: x_j^{s+1}\; x_j^{s-1} \; v(x_j^s)\; u_1^1(x_j^s)\;\;u_2^1(x_j^s)\;\dots\; u_{k-1}^1(x_j^s). $$ 
    We define the preferences of the $u_\beta^1(x_j^s)$ and $u_\beta^2(x_j^s)$ agents as follows:
    \begin{align*}
        u_\beta^1(x_j^s)&: x_j^s\; u_\beta^2(x_j^s)\; \text{ , and }\\
        u_\beta^2(x_j^s)&: u_\beta^1(x_j^s).
    \end{align*}
    Finally, we attach one copy of the $G_{F_{k+1}(u_\beta^2(x_j^s))}$ gadget to each $u_\beta^2(x_j^s)$ agent (the agents of these gadgets make up $A_{F_2}$). The construction is illustrated in Figure \ref{fig:fixedKhardSRI}.

    \begin{figure}[!tbh]
        \centering
        \includegraphics[width=8.5cm]{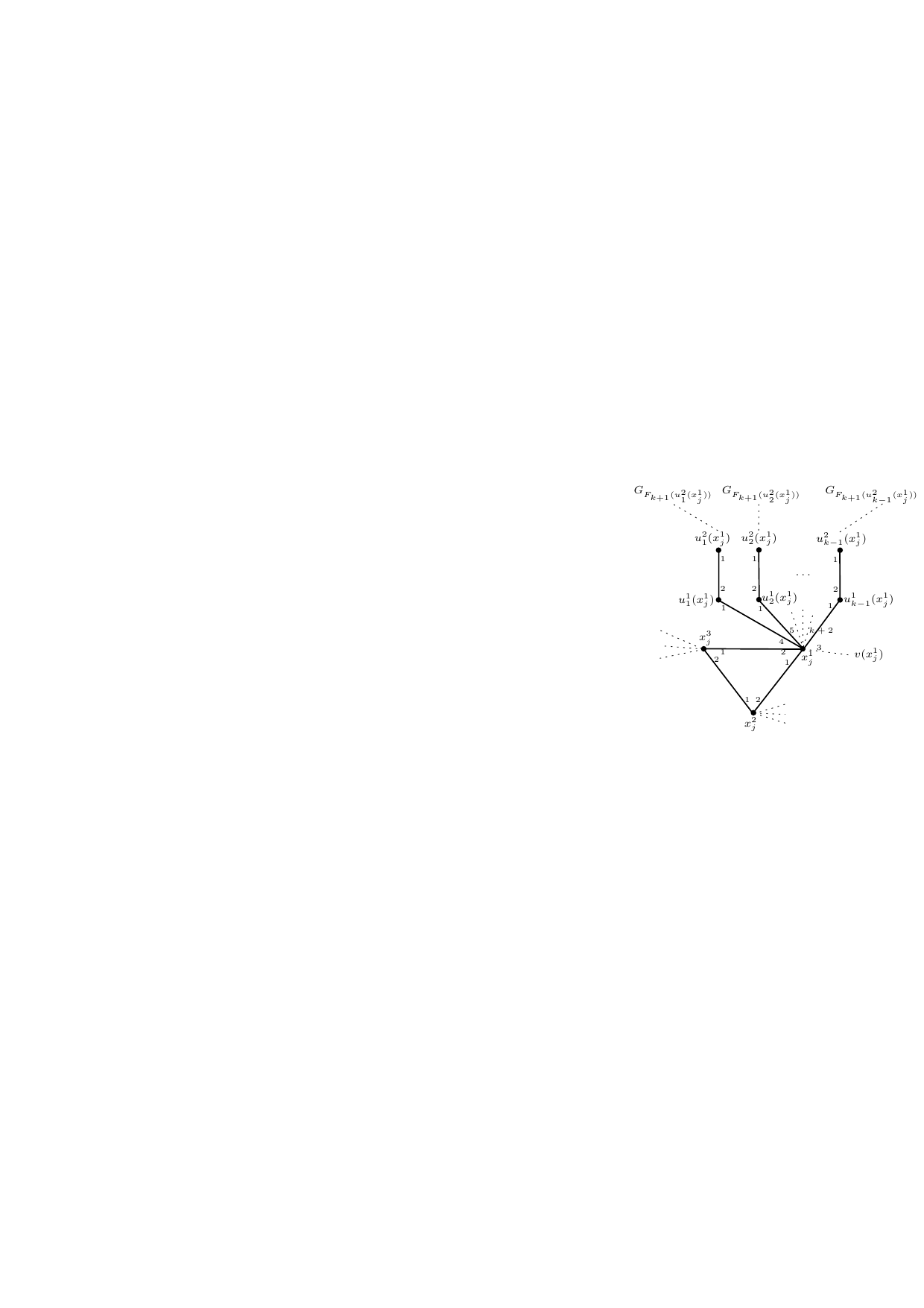}
        \caption{Schematic illustration of the modified clause gadget construction on $x_j^1$}
        \label{fig:fixedKhardSRI}
        \Description{A preference cycle consisting of three agents, with additional paths hanging off one of these agents.}
    \end{figure}

    Notice that Claim \ref{claim:mmsri3}, which argued how agents in the clause gadget must be matched for the matching to have desirable properties, can be extended to this new construction as follows.

    \begin{claim}
    \label{claim:mmKsri3}
        Let $G_{C_j}^*$ be an occurrence of the clause gadget together with its attached agents in $A_U$ and $A_{F_2}$\footnote{We use the $^*$ notation to indicate that this is a generalised clause gadget compared to the construction in the proof of Theorem \ref{thm:1-Maxbounded}.}, and let $M$ be a matching of $I$. If it is true that $\max_{a_r\in A[G_{C_j}^*]}\vert bp_{a_r}(M)\vert\leq k$, then there exists at least one $x_j^d$ in $G_{C_j}^*$ such that $M(v(x_j^d))\succeq_{v(x_j^d)}x_j^d$. Conversely, if there exists at least one $x_j^d$ in $G_{C_j}^*$ such that $M(v(x_j^d))\succeq_{v(x_j^d)}x_j^d$ and, furthermore, $\{x_{j}^{d-1},x_{j}^{d+1}\}\in M$ (addition modulo 3), then the remaining agents in $G_{C_j}^*$ can be matched among each other such that $\max_{a_r\in A[G_{C_j}^*]}\vert bp_{a_r}(M)\vert\leq k$.
    \end{claim}
    \begin{proof}[Proof of Claim \ref{claim:mmKsri3}]
    \renewcommand{\qedsymbol}{$\blacksquare$}
        For the first part of the statement, suppose that $\max_{a_r\in A[G_{C_j}^*]}\vert bp_{a_r}(M)\vert\leq k$ and notice that, for at least one $x_j^d$ agent in $G_{C_j^*}$, it must be the case that $v(x_j^d)\succeq_{x_j^d}M(x_j^d)$ (i.e., $x_j^d$ must either be matched to their corresponding agent in the variable gadget, one of the attached agents in $A_U$, or remain unmatched). Then, due to the preference cycle $(x_j^1,x_j^2,x_j^3)$ in the first and second preference list positions of all three involved agents, it must necessarily be the case that $\{x_{j}^d,x_{j}^{d-1}\}\in bp(M)$. If $M(x_j^d)=v(x_j^d)$, then $v(x_j^d)$ is matched to someone exactly as good as $x_j^d$, satisfying our requirement. Otherwise, it must be the case that $v(x_j^d)\succ_{x_j^d}M(x_j^d)$, in which case $x_j^d$ must either be matched to a $u_\beta^1(x_j^d)$ agent or remain unmatched. However, by having a forcing gadget $G_{F_{k+1}(u_\beta^2(x_j^d))}$ attached to every $u_\beta^2(x_j^d)$ agent, every $u_\beta^2(x_j^d)$ is either matched to someone on $P_{u_\beta^2(x_j^d)}$ (which consists of only agent $u_\beta^1(x_j^d)$ by construction), or some agent must be in at least $k+1$ blocking pairs otherwise. Thus, by assumption, we must have $\{u_\beta^1(x_j^d),u_\beta^2(x_j^d)\}\in M$ for every $1\leq \beta\leq k-1$. Therefore, $x_j^d$ must remain unmatched. Because every $u_\beta^1(x_j^d)$ agent prefers $x_j^d$ to $u_\beta^2(x_j^d)$, though, it is the case that $\{\{x_{j}^d,x_{j}^{d-1}\},\{x_j^d,u_1^1(x_j^d)\},\dots,\{x_j^d,u_{k-1}^1(x_j^d)\}\}\subseteq bp_{x_{j}^d}(M)$, so $\vert bp_{x_{j}^d}(M)\vert=k$ unless $x_j^d\succ_{v(x_j^d)}M(v(x_j^d))$, in which case also $\{x_j^d,v(x_j^d)\}\in bp_{x_{j}^d}(M)$ and so $\vert bp_{x_{j}^d}(M)\vert=k+1$, a contradiction. Thus, $M(v(x_j^d))\succ_{v(x_j^d)}x_j^d$ in this case, proving our statement.

        For the second part of the statement, notice from the preference construction within $G_{C_j}^*$ that if $\{x_{j}^{d-1},x_{j}^{d+1}\}\in M$, then $M$ admits the blocking pair $\{x_j^d,x_{j}^{d-1}\}$, but $\{x_j^d,x_{j}^{d+1}\}$ is not a blocking pair because $x_{j}^{d+1}$ prefers $x_{j}^{d-1}$ to $x_{j}^d$. Similarly, both of $x_{j}^{d-1}$ and $x_{j}^{d+1}$ prefer each other to $v(x_j^{d-1})$ and $v(x_j^{d+1})$, respectively. Lastly, if $v(x_j^d)$ is matched to someone at least as good as $x_j^d$, then either $\{v(x_j^d),x_j^d\}$ is a match or $M(v(x_j^d))\succ_{v(x_j^d)}x_j^d$ -- either way, $\{v(x_j^d),x_j^d\}$ is not a blocking pair. Now, if we match $\{u_\beta^1(x_j^z),u_\beta^2(x_j^z)\}\in M$ for every $1\leq \beta\leq k-1$ and $1\leq z\leq 3$, and match the agents in the forcing gadgets as described in Proposition \ref{prop:lbsri}, then if $\{v(x_j^d),x_j^d\}$ is a match in $M$, $x_j^d$ does not block with any $u_\beta^1(x_j^d)$ agent. Otherwise, we incur the blocking pairs $\{x_j^d,u_1^1(x_j^d)\},\dots,\{x_j^d,u_{k-1}^1(x_j^d)\}$. $x_j^{d+1}$ and $x_j^{d-1}$ do not block with any agents in $A_U$ as they are matched to each other and prefer each other to any agent in $A_U$. We also incur blocking pairs within each forcing gadget, but no more than $k$ by construction, as every $u_\beta^2(x_j^z)$ agent is matched to agent $u_\beta^1(x_j^z)$ on $P_{u_\beta^2(x_j^z)}$. Thus, indeed, no agent is in more than $k$ blocking pairs as required.
    \end{proof}

    We now establish the relationship between satisfiability and minimax almost-stable matchings.

    \begin{claim}
        \label{claim:satmaxksri}
        If $B$ is satisfiable then there exists a matching $M$ of $I$ such that $\max_{a_r\in A}\vert bp_{a_r}(M)\vert \leq k$.
    \end{claim}    
    \begin{proof}[Proof of Claim \ref{claim:satmaxksri}]
    \renewcommand{\qedsymbol}{$\blacksquare$}
        Let $f: V \rightarrow \{T,F\}$ be a satisfying truth assignment of $B$ and consider the matching $M$ constructed as follows (starting from the empty set). 

        For every variable $V_i\in V$, if $f(V_i)=T$ then, in the corresponding variable gadget $G_{V_i}$, add $\{v_i^T,v_i^1\}$ and $\{v_i^F,v_i^2\}$ to $M$. If $f(V_i)=F$ instead, then add $\{v_i^T,v_i^2\}$ and $\{v_i^F,v_i^1\}$ to $M$.

        For every forcing gadget $G_{F_{k+1}(v_i^z)}$ that was used in the construction of the variable gadgets, add $\{f_{1,i}^{1,z},f_{1,i}^{2,z}\}$, $\{f_{1,i}^{3,z},f_{1,i}^{4,z}\},\dots, \{f_{1,i}^{3^{k+1}-2,z},f_{1,i}^{3^{k+1}-1,z}\}$ to $M$.

        For every clause $C_j=(C_j^1\lor C_j^2 \lor C_j^3)$ in $B$, pick one literal $C_j^d$ such that either $C_j^d$ is an unnegated variable $V_s$ and $f(V_s)=T$, or $C_j^d$ is a negated variable $\bar{V_s}$ and $f(V_s)=F$ (i.e., $C_j^d$ satisfies the clause $C_j$; by the assumption that $B$ is satisfiable at least one such literal must exist in each clause). In the corresponding clause gadget $G_{C_j}$, add $\{x_j^{d-1},x_j^{d+1}\}$ (addition and subtraction taken modulo 3) to $M$ and leave $x_j^d$ unmatched. 

        Match all agents $\{u_\beta^1(x_j^z),u_\beta^2(x_j^z)\}$ for every $1\leq \beta\leq k-1$, $1\leq j\leq m$ and $1\leq z\leq 3$. Furthermore, for every forcing gadget $G_{F_{k+1}(u_\beta^2(x_j^s))}$ that is attached to an agent $u_\beta^2(x_j^s)$, add $\{f_{2,j}^{1,s,\beta},f_{2,j}^{2,s,\beta}\}$, $\{f_{2,j}^{3,s,\beta},f_{2,j}^{4,s,\beta}\},$ $\dots,$ $\{f_{2,j}^{3^{k+1}-2,s,\beta},f_{2,j}^{3^{k+1}-1,s,\beta}\}$ to $M$. This completes the construction of $M$.

        By our earlier discussions regarding the generalised forcing gadget, because each $v_i^1$ and each $v_i^2$ is matched to someone on $P_{v_i^1}$ and $P_{v_i^2}$ in $M$, respectively, no agent in any forcing gadget $G_{F_{k+1}(v_i^z)}$ attached to one of these two agents is in more than $k$ blocking pairs.

        By our previous Claim \ref{claim:mmsri2}, within all variable gadgets where $\{\{v_i^T,v_i^1\},\{v_i^F,v_i^2\}\}\subseteq M$, $M$ admits only the blocking pairs coming from the forcing gadgets. For those variable gadgets where, instead, it is the case that $\{\{v_i^T,v_i^2\},\{v_i^F,v_i^1\}\}\subseteq M$, $M$ admits the blocking pair $\{v_i^T,v_i^1\}$ in addition to the blocking pairs coming from the forcing gadgets (which cannot involve $v_i^1$ and $v_i^2$). Either way, no agent is contained in more than $k$ blocking pairs within any variable gadget. 
        
        As shown in the proof of Claim \ref{claim:mmKsri3}, within each clause gadget (including the respective $A_U$ and $A_{F_2}$ agents), $M$ is constructed such that no agent is contained in more than $k$ blocking pairs. 

        Hence, $M$ is a matching such that no agent from $I$ is contained in more than $k$ blocking pairs, i.e., $\max_{a_r\in A}\vert bp_{a_r}(M)\vert\leq k$ as required.
    \end{proof}

    Now, let us prove the converse direction.
    
    \begin{claim}
        \label{claim:maxksatsri}
        If there exists a matching $M$ of $I$ such that $\max_{a_r\in A}\vert bp_{a_r}(M)\vert \leq k$, then $B$ is satisfiable.
    \end{claim}    
    \begin{proof}[Proof of Claim \ref{claim:maxksatsri}]
    \renewcommand{\qedsymbol}{$\blacksquare$}
        Suppose that there exists a matching $M$ of $I$ such that $\max_{a_r\in A}\vert bp_{a_r}(M)\vert \leq k$. We will construct a satisfying truth assignment $f: V \rightarrow \{T,F\}$. By a simple extension of Claim \ref{claim:mmsri2}, for every variable $V_i\in V$ of $B$, it must be the case for the corresponding variable gadget that either $\{\{v_i^T,v_i^1\},\{v_i^F,v_i^2\}\}\subseteq M$ or $\{\{v_i^T,v_i^2\},\{v_i^F,v_i^1\}\}\subseteq M$ (and clearly not both because $M$ is a matching), since $\max_{a_r\in A}\vert bp_{a_r}(M)\vert\leq k$, and otherwise some agent in either $G_{F_{k+1}(v_i^1)}$ or $G_{F_{k+1}(v_i^1)}$ (for any $1\leq i\leq n$) must be in at least $k+1$ blocking pairs by construction. In the former case, set $f(V_i)=T$ and, in the latter case, set $f(V_i)=F$.

        Clearly, $f$ is a truth assignment. Now to show that $f$ satisfies $B$, consider any clause $C_j=(C_j^1\lor C_j^2 \lor C_j^3)$ in $B$. Because $\max_{a_r\in A}\vert bp_{a_r}(M)\vert\leq k$, by Claim \ref{claim:mmKsri3}, it must be the case that, for at least one literal $C_j^d$, agent $v(x_j^d)$ must be matched to someone at least as good as $x_j^d$; otherwise, no matter how the agents are matched, at least one agent must be in more than $k$ blocking pairs. However, we know that $\{x_j^d,v(x_j^d)\}\notin M$, because we already argued above that each $v(x_j^d)$ is matched to either $v_i^1$ (where $V_i$ is the variable such that either $C_j^d=V_i$ or $C_j^d=\bar{V_i}$) or $v_i^2$. Because $v(x_j^d)$ prefers $x_j^d$ to $v_i^2$ (again, where $V_i$ is the variable such that either $C_j^d=V_i$ or $C_j^d=\bar{V_i}$), the only better options than $x_j^d$ on $\succ_{v(x_j^d)}$ are $v_i^1$ and possibly the first occurrence of the literal at $x_1(v(x_j^d))$ (if $C_j^d$ corresponds to the second occurrence of the variable, i.e., $x_j^d=x_2(v(x_j^d))$). However, due to the forcing gadgets on the variable agents, $\{v(x_j^d),x_1(v(x_j^d))\}$ cannot be in the matching without causing at least one agent to be in at least $k+1$ blocking pairs, so the only possible match for $v(x_j^d)$ better than $x_j^d$ is $v_i^1$. Thus, if $v(x_j^d)=v_i^T$ (for some $1\leq i\leq n$), then $\{v_i^T,v_i^1\}\in M$, in which case we must also have $\{v_i^F,v_i^2\}\in M$ (as otherwise $v_i^2$ is not matched to an agent on $P_{v_i^2}$, which results in too many blocking pairs as argued above), and if $v(x_j^d)=v_i^F$ (for some $1\leq i\leq n$), then $\{v_i^F,v_i^1\}\in M$, in which case we must also have $\{v_i^T,v_i^2\}\in M$. In the former case, we defined $f(V_i)=T$, and in the latter case we defined $f(V_i)=F$, so either way the literal $C_j^d$ satisfies the clause $C_j$. Thus, there cannot exist any unsatisfied clause $C_j$ with respect to truth assignment $f$, so $f$ is a satisfying truth assignment of $B$, i.e., $B$ is satisfiable as required. 
    \end{proof}

    Together, Claims \ref{claim:satmaxksri}-\ref{claim:maxksatsri} establish that {\sc $k$-Max-AlmostStable-sri} is {\sf NP-hard}. This finishes the proof of Theorem \ref{thm:k-Maxhardsri}.
\end{proof}

We can also extend our reduction from Theorem \ref{thm:1-Maxsmi} for {\sc $k$-Max-AlmostStable-Max-smi} to any positive integer $k$.

\begin{theorem}
\label{thm:k-Maxhardsmi}
    {\sc $k$-Max-AlmostStable-Max-smi} is {\sf NP-complete} for any fixed positive integer $k$.
\end{theorem}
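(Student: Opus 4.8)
The plan is to reuse the Biró--Manlove--Mittal construction from the proof of Theorem~\ref{thm:1-Maxsmi} verbatim and augment it with \emph{padding gadgets} attached to the clause-gadget agents $c_j^s$, in direct analogy to the $A_U$ gadget used in the proof of Theorem~\ref{thm:k-Maxhardsri}. Membership in {\sf NP} is immediate and identical to Theorem~\ref{thm:1-Maxsmi}: one verifies in polynomial time that a candidate matching has maximum cardinality and that no agent lies in more than $k$ blocking pairs. The crucial structural observation I would exploit is that the unaugmented construction always admits a perfect matching (each variable and each clause gadget does, and communication edges are never needed), and the padding gadgets I add are also always matchable; hence the augmented instance $I$ always admits a perfect matching, so that $\mathcal{M}^+(I)=\mathcal{M}^p(I)$ and the maximum-cardinality question coincides with a perfect-matching question.

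For the padding, for each clause agent $c_j^s$ (which lies on one fixed side of the bipartition, say the side of the $z_j$ and $y_i^r$ agents) I would introduce $k-1$ pairs of new agents $u_\beta(c_j^s)$ and $u_\beta'(c_j^s)$ for $1\le\beta\le k-1$, placing $u_\beta(c_j^s)$ on the opposite side and $u_\beta'(c_j^s)$ on the same side as $c_j^s$, so as to preserve bipartiteness. Their preferences would be $u_\beta(c_j^s): c_j^s \; u_\beta'(c_j^s)$ and $u_\beta'(c_j^s): u_\beta(c_j^s)$, and I would splice the $u_\beta(c_j^s)$ into $c_j^s$'s list strictly between the communication edge $x(c_j^s)$ and $q_j$, so that the list becomes $p_j^s \; x(c_j^s) \; u_1(c_j^s) \cdots u_{k-1}(c_j^s) \; q_j$. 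Since $u_\beta'(c_j^s)$ ranks only $u_\beta(c_j^s)$, any perfect matching must contain $\{u_\beta(c_j^s),u_\beta'(c_j^s)\}$; consequently $c_j^s$ is never matched to a padding agent in a perfect matching, the restriction of any perfect matching of $I$ to the original agents is again a perfect matching of the original instance, and each $u_\beta(c_j^s)$ (matched to its second choice) blocks with $c_j^s$ precisely when $c_j^s$ is matched to $q_j$, i.e.\ to someone strictly worse than $u_\beta(c_j^s)$. The net effect is that whenever $c_j^s$ is the unique clause agent matched to $q_j$, it is forced into exactly $k$ blocking pairs (one with $p_j^s$ and $k-1$ with the padding agents), while the padding introduces no blocking pairs not incident to such a $c_j^s$.

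With this in place the correspondence reduces to a single threshold argument. If $B$ is satisfiable, I would take the perfect matching produced by the forward direction of Theorem~\ref{thm:1-Maxsmi} (in which no communication edge blocks and each gadget carries exactly one internal blocking pair), match every padding pair, and observe that the only agents in $k$ blocking pairs are the $q_j$-matched clause agents, so $\max_{a}\vert bp_a(M)\vert=k\le k$. Conversely, given a maximum-cardinality (hence perfect) matching $M$ with $\max_a\vert bp_a(M)\vert\le k$, perfectness forces exactly one $c_j^s$ per clause to be matched to $q_j$; that agent already sits in $k$ blocking pairs from $p_j^s$ and the padding, so its communication edge cannot additionally block without pushing it to $k+1$, while the two clause agents matched to their first choice $p_j^{s'}$ trivially cannot block along any edge. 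Hence no communication edge blocks, so the restriction of $M$ to the original instance is a perfect matching with no blocking communication edge; by Claim~\ref{claim:nocommblocking} it admits at most $n+m$ blocking pairs, and by Claim~\ref{claim:birosm}, $B$ is satisfiable.

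The main obstacle, and the part requiring the most care, is verifying that the padding inflates the blocking-pair count of \emph{exactly} the critical agent to \emph{exactly} $k$ without any leakage: I must check that the $u_\beta$-pairs are genuinely forced by maximum cardinality (so that the blocking pairs with $c_j^s$ really do appear), that they never create blocking pairs among themselves or with agents outside the clause gadget, and that the $c_j^{s'}$ matched to their first choice remain blocking-pair-free, so that the clean $k$-versus-$(k+1)$ dichotomy on communication edges survives. Once this bookkeeping is complete, the threshold argument and the appeal to the original correspondence are routine, and the preference lists have length at most $\max(3,k+2)$, which is constant for fixed $k$.
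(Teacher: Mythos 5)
Your proposal is correct and follows essentially the same route as the paper: the paper likewise augments each clause agent $c_j^s$ of the Theorem~\ref{thm:1-Maxsmi} construction with $k-1$ forced pendant pairs (denoted $u_\alpha^1(c_j^s),u_\alpha^2(c_j^s)$ there) spliced into $c_j^s$'s list between $x(c_j^s)$ and $q_j$, so that the $q_j$-matched clause agent accrues exactly $k-1$ extra blocking pairs and a blocking communication edge would push it to $k+1$. The threshold argument and the appeal to Claims~\ref{claim:birosm}--\ref{claim:nocommblocking} match the paper's proof in all essentials.
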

\begin{proof}
    Let $k$ be fixed. In the proof of Theorem \ref{thm:1-Maxsmi}, we already noted that {\sc $k$-Max-AlmostStable-sri} is in {\sf NP}. To show that the problem is {\sf NP-hard}, we start with the construction of {\sc smi} instance $I=(A,\succ)$ from the proof of Theorem \ref{thm:1-Maxsmi}. Now, we will modify this construction to an {\sc smi} instance $J=(A^J,\succ^J)$. We will use a similar approach as in the proof of Theorem \ref{thm:k-Maxhardsri} to force agents in clause gadgets to have a higher number of blocking pairs when they are not matched to their first choice. For every $c_j^s$ agent in each clause gadget, we introduce $2(k-1)$ additional agents $U=\{u_\alpha^1(c_j^s),\;u_\alpha^2(c_j^s)\;\vert\;1\leq\alpha\leq k-1\}$ and transform $c_j^s$'s preference list from the original construction 
    $$ c_j^s: p_j^s\; x(c_j^s)\; q_j $$
    to a new construction
    $$ c_j^s: p_j^s\;\; x(c_j^s)\;\; u_1^1(c_j^s)\;\;u_2^1(c_j^s)\;\dots\; u_{k-1}^1(c_j^s)\;\; q_j $$ 
    We also define the preferences of the $u_\alpha^1(c_j^s)$ and $u_\alpha^2(c_j^s)$ agents as follows:
    \begin{align*}
        u_\alpha^1(c_j^s)&: c_j^s\; u_\alpha^2(c_j^s)\; \text{ , and }\\
        u_\alpha^2(c_j^s)&: u_\alpha^1(c_j^s).
    \end{align*}
    The construction is illustrated in Figure \ref{fig:fixedKhardSMI}.

    \begin{figure}[!tbh]
        \centering
        \includegraphics[width=7.3cm]{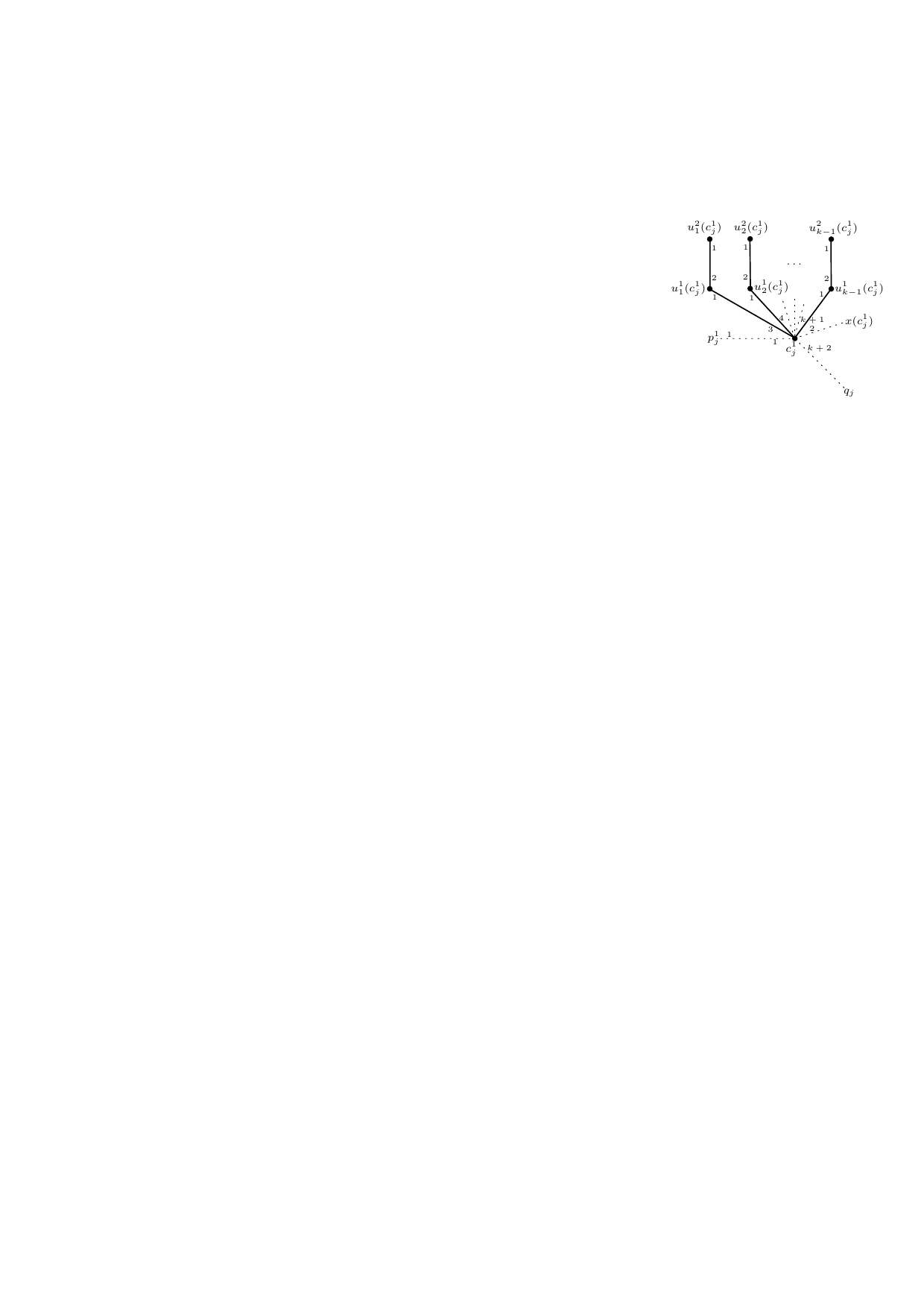}
        \caption{The modified clause gadget construction illustrated on a $c_j^1$ agent}
        \label{fig:fixedKhardSMI}
        \Description{A tree-shaped construction hanging of the $c_j^1$ agent.}
    \end{figure}

    Notice that now the size of $A^J$ increases from the original $8(n+m)$ agents (where $n$ is the number of variables and $m$ is the number of clauses) to $8n+2m(1+3k)$ agents, which is still polynomial in the size of the {\sc (2,2)-e3-sat} instance $B$ that we originally reduce from (in the construction of $I$).

    Furthermore, we know from our previous reduction that $I$ admits perfect matchings. Hence, by inspection of our modification above, the modified instance $J$ admits perfect matchings, and, for every such matching $M\in\mathcal{M}^p(J)$, we must have $\{u_\alpha^1(c_j^s),u_\alpha^2(c_j^s)\}\in M$ for all $c_j^s$ and for all $\alpha$. Hence, our modification introduces no new flexibility with regard to the perfect matchings. Also, note that, for every $M\in\mathcal{M}^p(J)$, we must have exactly one $c_j^s$ agent matched to $q_j$ in every clause, in which case $\{c_j^s,u_\alpha^1(c_j^s)\}\in bp^J(M)$. Hence, $\vert bp_{c_j^s}(M)\vert\geq k-1$, where each of the $k-1$ blocking pairs come from the $u_\alpha^1(c_j^s)$ agents.

    From Claims \ref{claim:birosm}-\ref{claim:nocommblocking} we know that $B$ is satisfiable $\Leftrightarrow$ $I$ admits a perfect matching with at most $n+m$ blocking pairs $\Leftrightarrow$ $I$ admits a perfect matching in which no communication edge between clause and variable gadgets is blocking $\Leftrightarrow$ $I$ admits a perfect matching $M$ with $\max_{a_r\in A}\vert bp_{a_r}^I(M)\vert\leq 1$. 

    It remains to show that $I$ admits a perfect matching $M$ with $\max_{a_r\in A}\vert bp_{a_r}^I(M)\vert\leq 1$ if and only if $J$ admits a perfect matching $M$ with $\max_{a_r\in A^J}\vert bp_{a_r}^J(M)\vert\leq k$.

    First, suppose that $I$ admits a perfect matching $M$ with $\max_{a_r\in A}\vert bp_{a_r}^I(M)\vert\leq 1$. Consider the matching $M'=M\cup\{\{u_\alpha^1(c_j^s),u_\alpha^2(c_j^s)\}\;\vert\; \forall c_j^s\;\forall\alpha \}$. Clearly, $M'$ is a perfect matching of $J$. Furthermore, because we add no more than $k-1$ entries to any agent's preference list and maintain the relative preference order of $\succ$ in $\succ^J$, we have that, for all $a_r\in A$, $\vert bp_{a_r}^J(M')\vert\leq \vert bp_{a_r}^I(M)\vert+k-1$. Thus, $\vert bp_{a_r}^J(M')\vert\leq k$ as required.

    For the converse direction, suppose that $J$ admits a perfect matching $M$ with $\max_{a_r\in A^J}\vert bp_{a_r}^J(M)\vert\leq k$. Notice that no communication edge can be blocking in $M$: suppose, for the sake of contradiction, that there exists some $c_j^s$ agent that blocks with $x(c_j^s)$. Then because $M$ is perfect, $c_j^s$ must be matched to $q_j$ and, by construction of the preference lists and our observation above, we must have $bp_{c_j^s}^J(M)=\{\{c_j^s,p_j^s\}, \{c_j^s,x(c_j^s)\},\{c_j^s,u_1^1(c_j^s)\},\{c_j^s,u_2^1(c_j^s)\},\dots,\{c_j^s,u_{k-1}^1(c_j^s)\}\}$, in which case, $\vert bp_{c_j^s}^J(M)\vert=k+1>k$, a contradiction. Thus, it is easy to see that $M'=M\vert_I$ (i.e., $M$ restricted to pairs that involve agents in $A$) is a perfect matching of $I$ in which no communication edge is blocking, in which case $\max_{a_r\in A}\vert bp_{a_r}^I(M')\vert\leq 1$.
\end{proof}

In fact, we can also extend this new reduction above even further to establish a remarkably tight intractability bound as follows.

\begin{theorem}
\label{thm:k-MaxhardNsmi}
    For any $\varepsilon\in(0,1)$, the problem {\sc $k$-Max-AlmostStable-Max-smi} is {\sf NP-complete} for $k=\left\lfloor N^{1-\varepsilon}\right\rfloor$, where $N$ is the number of agents.
\end{theorem}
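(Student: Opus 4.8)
The plan is to reuse the reduction of Theorem~\ref{thm:k-Maxhardsmi} essentially verbatim, but now to let the parameter $k$ grow with the instance size and then to \emph{pad} the instance so that its number of agents lands in exactly the window that forces $k=\lfloor N^{1-\varepsilon}\rfloor$. Recall that in that reduction a {\sc (2,2)-e3-sat} formula $B$ with $n$ variables and $m$ clauses yields an {\sc smi} instance $J$ with $8n+2m(1+3k)$ agents, and that $B$ is satisfiable if and only if $J$ admits a (necessarily maximum-cardinality) perfect matching $M$ with $\max_{a_r}\lvert bp_{a_r}(M)\rvert\le k$. Since every variable of a {\sc (2,2)-e3-sat} instance contributes exactly four literal occurrences while every clause uses exactly three slots, we have $3m=4n$, so $m=\tfrac43 n$ and the base instance $J$ has $N_0=8n\bigl(k+\tfrac43\bigr)$ agents; moreover $N_0$ is even and the two sides of the bipartition are balanced. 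Membership in {\sf NP} is immediate as in Theorem~\ref{thm:1-Maxsmi}.

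First I would decouple the self-reference $k=\lfloor N^{1-\varepsilon}\rfloor$ (in which $N$ itself depends on $k$) by choosing $k$ directly as a function of $n$. Writing $p=\tfrac{1}{1-\varepsilon}>1$, the target condition $k=\lfloor N^{1-\varepsilon}\rfloor$ is equivalent to $k^{p}\le N<(k+1)^{p}$. I would set $k=\bigl\lceil(c\,n)^{(1-\varepsilon)/\varepsilon}\bigr\rceil$ for a suitable constant $c>8$, so that $k^{\,p-1}=k^{\,\varepsilon/(1-\varepsilon)}\ge c\,n$ comfortably exceeds $N_0/k\approx 8n$; this guarantees $k^{p}>N_0$, i.e.\ the whole window $[k^{p},(k+1)^{p})$ lies strictly above the base size $N_0$. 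Because $\varepsilon$ is fixed, $k$ is polynomial in $n$, hence $N_0=8n\bigl(k+\tfrac43\bigr)$ and the entire instance $J$ are of size polynomial in $\lvert B\rvert$, and the reduction runs in polynomial time.

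Next I would pad $J$ with dummy agent pairs: for each pair I add one agent to each side of the bipartition, each ranking only the other, so that they are matched to one another in every perfect matching and never participate in a blocking pair. Adding $t$ such pairs increases the agent count to $N=N_0+2t$, keeps the bipartition balanced and the total even, and leaves the correspondence with satisfiability of $B$ intact. It then remains to choose $t\ge0$ so that $N$ is an even integer in $[k^{p},(k+1)^{p})$; since this window lies above $N_0$ (by the choice of $k$) and has width $(k+1)^{p}-k^{p}=\Theta\!\bigl(k^{\,p-1}\bigr)\to\infty$, for all sufficiently large $n$ it contains an even integer, and the required padding $2t\le(k+1)^{p}-N_0$ is polynomial in $n$. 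For any such $N$ we have $N^{1-\varepsilon}=N^{1/p}\in[k,k+1)$, hence $\lfloor N^{1-\varepsilon}\rfloor=k$, as needed. The finitely many small $n$ can be disposed of by hard-coding.

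The main obstacle is the self-referential target $k=\lfloor N^{1-\varepsilon}\rfloor$ combined with the fact that the construction can only ever \emph{add} agents: I must pick $k$ large enough (relative to $n$) that the forcing window $[k^{p},(k+1)^{p})$ overshoots the unavoidable base size $N_0\approx 8nk$, yet small enough (polynomial in $n$) to keep the reduction efficient. Pinning down the constant $c$ and the threshold on $n$ so that the overshoot $k^{p}>N_0$ and the condition that the window width exceeds $2$ hold simultaneously is the one genuinely delicate calculation; everything else is inherited directly from the correctness and complexity analysis of Theorem~\ref{thm:k-Maxhardsmi}.
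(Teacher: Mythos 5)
Your proposal is correct, but it reaches the target agent count by a different mechanism than the paper. The paper does not pad at all: it keeps the exact construction of Theorem~\ref{thm:k-Maxhardsmi}, writes $N=X+Y\cdot k$ with $X=8n+2m$ and $Y=6m$ fixed by the formula $B$, and then proves that some integer $k$ satisfies the self-referential condition $k^{1/(1-\varepsilon)}\le X+Y\cdot k<(k+1)^{1/(1-\varepsilon)}$ by an intermediate-value/monotonicity argument on $\phi(k)=k^{1/(1-\varepsilon)}-Y\cdot k$, together with an explicit upper bound $k\le k_{\max}=O(n^{(1-\varepsilon)/\varepsilon})$ that makes the search for this $k$ polynomial. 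You instead fix $k$ explicitly as $\bigl\lceil(cn)^{(1-\varepsilon)/\varepsilon}\bigr\rceil$ so that the window $[k^{p},(k+1)^{p})$ overshoots the base size $N_0$, and then pad with mutually-ranked dummy pairs to land inside that window; the dummies are matched to their unique acceptable partner in every perfect matching, so they contribute no blocking pairs and the satisfiability correspondence is untouched. Both routes are sound. Yours buys robustness and simplicity: padding sidesteps the fixed-point existence argument entirely and only needs the window width $(k+1)^{p}-k^{p}=\Theta(k^{p-1})$ to exceed $2$, at the cost of introducing extra agents and a constant $c$ whose threshold must be pinned down (as you note, $c>8$ alone is not quite enough because of the $\tfrac{32}{3}n$ lower-order term in $N_0$; something like $c\ge\tfrac{56}{3}$, or $c>8$ plus a largeness condition on $k$, closes this). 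The paper's route keeps the instance minimal but pays with a more delicate analysis of where the self-consistent $k$ lies and an explicit polynomial bound on the search space.
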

\begin{proof}
    The same reduction from a {\sc (2,2)-e3-sat} instance $B$ as in the proof of Theorem \ref{thm:k-Maxhardsmi} applies -- all that remains to show is that an instance with the correct number of agents exists and can be constructed in polynomial time (in the size of $B$).

    We argued previously that our construction produces an {\sc smi} instance with $N=8n+2m(1+3k)$ agents. Now substitute $k$ to see that $N=8n+2m\left(1+3\left\lfloor N^{1-\varepsilon}\right\rfloor\right)$ and notice that $N$ is an integer value (we will later show that $N$ is strictly positive under our assumptions on $n$ and $m$). We already noted that {\sc (2,2)-e3-sat} instances satisfy $m=\frac{4}{3}n$ and that $n\in3\mathbb N$. Let 
    \begin{align*}
        q&=1-\varepsilon,\\
        k&=\left\lfloor N^{1-\varepsilon}\right\rfloor=\left\lfloor N^{q}\right\rfloor,\\
        r&=n/3,\\
        X&=8n+2m=32r \text{, and}\\
        Y&=6m=24r.
    \end{align*}
    Then 
    \begin{align*}
        N&=8n+2m\left(1+3\left\lfloor N^{1-\varepsilon}\right\rfloor\right)\\
        &=X+Y\cdot k\\
        &=r(32+24k),
    \end{align*}
    where the non-negative integer $k=\left\lfloor N^{q}\right\rfloor$ must satisfy $k^{1/q}\leq X+Y\cdot k<(k+1)^{1/q}$. A suitable $k$ must exist -- consider the continuous function $\phi(k)=k^{1/q}-Y\cdot k$ for $k\in[0,\infty)$. Clearly, by $q\in(0,1)$ and $X\in\mathbb N$, $\phi(0)=0$ and $\lim_{k\rightarrow\infty}\phi(k)=\infty$. Notice, furthermore, that $\phi$ is convex, has a unique minimum and two (not necessarily distinct) $x$-axis intercepts at $k_0=0$ and $k_1\in[0,\infty)$ such that $\phi$ is non-positive on $[k_0=0,k_1]$, and strictly positive and monotonically increasing on $(k_1,\infty)$. By the intermediate value theorem, there must exist a point $t_*>0$ such that $\phi(t_*)=X$. By $X>0$, clearly $t_*>k_1$. Let $k_*=\lfloor t_*\rfloor$. By definition, $k_*\leq t_*<k_*+1$, so by being monotonically increasing on $(k_1,\infty)$, we must have that $\phi(k_*)\leq \phi(t_*)=X<\phi(k_*+1)$, which is equivalent to $k_*^{1/q}-Y\cdot k_*\leq X<(k_*+1)^{1/q}-Y\cdot (k_*+1)$. Adding $Y\cdot k_*$ to all parts of the inequality, and keeping in mind that $Y=6m>0$, yields $k_*^{1/q}\leq X+Y\cdot k_*<(k_*+1)^{1/q}-Y< (k_*+1)^{1/q}$ as above. This establishes the existence of an appropriate construction with $N$ agents. 

    We set out to argue that the reduction can be performed in polynomial time (in the size of $B$), so let us show how to choose such a suitable value of $k$ that ensures that $k=\lfloor N^{1-\varepsilon}\rfloor$ efficiently. Note that if $k\geq \frac{X}{Y}=\frac{32r}{24r}=\frac{4}{3}$, then $X+Y\cdot k\leq 2Y\cdot k$, so from $k^{1/q}\leq X+Y\cdot k$ we get that $k\leq (X+Y\cdot k)^q\leq (2Y\cdot k)^q$. Rearranging, multiplying both sides by $1/k^q$ and keeping in mind that $k>0$, we get $k^{1-q}\leq (2Y)^q$ and so $k\leq (2Y)^{\frac{q}{1-q}}$. Thus, finally, we get that $0\leq k\leq k_{\max}$, where $k_{\max}=\left\lceil \max\left\{\frac{X}{Y},(2Y)^{\frac{q}{1-q}}\right\} \right\rceil$. Thus, the search for $k$ (and hence for $N$) requires at most $O(k_{\max})=O(r^{\frac{q}{1-q}})=O(n^{\frac{1-\varepsilon}{\varepsilon}})$ time, by considering at all integers between 0 and $k_{\max}$, which is polynomial in $n$ (as $\varepsilon$ is fixed) and thus also in the size of $B$ as required.    
\end{proof}

\section{Efficient Algorithms for Very Short Preference Lists}
\label{sec:short}

While we have shown strong intractability results even in cases where preference lists are of bounded length, we will now consider the case where preference lists are very short. As has been shown for similar optimisation problems such as {\sc MinBP-AlmostStable-sri} and {\sc MinBP-AlmostStable-Max-smi}, preference lists of length at most 2 guarantee a sufficiently simple instance structure to allow for optimal algorithms that can be executed in polynomial time. In Section \ref{sec:srishort}, we will give such an algorithm for minimax matchings in {\sc sri}. Then, in Section \ref{sec:smishort}, we will extend our algorithm to the {\sc Max-smi} case. Although the setting where preference lists are of length at most 2 is very constrained, we highlight that these algorithms apply whenever choices are very limited. Furthermore, the algorithms can also be useful in the following scenario: agents may indicate their top 2 choices, then we compute a minimax almost-stable maximum-cardinality matching, and then we find a maximum-cardinality matching among the remaining unmatched agents. This ensures desirable stability guarantees with respect to the top 2 choices of agents, and, subject to this, minimises the number of unmatched agents.

\subsection{Minimax Matchings in General Graphs with Max-Degree 2}
\label{sec:srishort}

We now give an algorithm for {\sc Minimax-AlmostStable-sri} that solves the problem to optimality in linear time whenever preference lists are of length at most 2. The procedure is given in Algorithm \ref{alg:sriexact2} and works as follows: determine whether the instance $I$ admits a stable matching $M_S$ using Irving's algorithm \cite{irving_sr}. If yes, immediately return $M_S$. If not, compute a maximum-cardinality matching $M_C$ as follows. Notice that because preference lists are of length at most 2, the acceptability graph of $I$ has maximum degree at most 2, i.e., it consists only of paths and cycles. Thus, to compute $M_C$, we can simply include every second edge of every path, starting from one of the endpoints, and every second edge of every cycle (such that no included edges have a vertex in common). It is easy to see that this results in a maximum-cardinality matching for graphs with maximum degree at most 2. Next, we perform a rematching procedure on $M_C$ that ensures that no agent is contained in more than one blocking pair, and return this modified matching.

\begin{algorithm}[!htb]
\renewcommand{\algorithmicrequire}{\textbf{Input:}}
\renewcommand{\algorithmicensure}{\textbf{Output:}}

\begin{algorithmic}[1]
\Require{$I$ : an {\sc sri} instance}
\Ensure{$M$ : a matching}

\State $M_S \gets $ {\sf Irving}$(I)$ \Comment{Compute a stable matching if one exists}

\If{$M_S$ exists}
    \State\Return{$M_S$}
\EndIf

\State $M_C \gets $ {\sf MaxCard}$(I)$ \Comment{Compute a maximum-cardinality matching}

\While{$\exists \;a_i$ such that $\vert bp_{a_i}(M_C)\vert=2$} \Comment{Rematching procedure}
    \State pick one agent $a_r$ that is in a blocking pair with $a_i$
    \State $a_k\gets M_C(a_r)$
    \State $M_C(a_i)\gets a_r \;;\; M_C(a_r)\gets a_i \;;\; M_C(a_k)\gets a_k$
\EndWhile
\State\Return{$M_C$}

\end{algorithmic}
\caption{Exact algorithm for {\sc Minimax-AlmostStable-sri} with lists of length at most 2}
\label{alg:sriexact2}
\end{algorithm}

\begin{theorem}
\label{thm:sriexact2}
    Let $I$ be an {\sc sri} instance with $n$ agents. If all preference lists of $I$ are of length at most 2, then $\min_{M\in \mathcal{M}}\max_{a_i\in A}\vert bp_{a_i}(M)\vert\leq 1$ and Algorithm \ref{alg:sriexact2} computes an optimal solution to {\sc Minimax-AlmostStable-sri} in $O(n)$ time.
\end{theorem}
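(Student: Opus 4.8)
The plan is to split the analysis into the two regimes handled by Algorithm~\ref{alg:sriexact2}, and to exploit the fact that when every preference list has length at most $2$, acceptability is symmetric and so the acceptability graph $G$ has maximum degree at most $2$; hence $G$ is a disjoint union of paths and (even or odd) cycles. The single structural observation driving everything is that \emph{any matched agent is in at most one blocking pair}: if $a$ is matched then $M(a)$ lies on $a$'s list of length at most $2$, so at most one list entry is strictly preferred to $M(a)$, and a blocking pair involving $a$ requires such a strictly-preferred entry. Consequently the only agents that can ever be in two blocking pairs are \emph{unmatched} agents whose (at most two) list entries both prefer them.

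First I would dispose of the solvable case: if Irving's algorithm returns a stable matching $M_S$, then $\max_{a_i}\vert bp_{a_i}(M_S)\vert=0$, which is trivially optimal, and the computation is linear \cite{irving_sr}. For the remaining case, note that if $I$ is unsolvable then every matching admits a blocking pair and hence some agent in at least one blocking pair, so $\min_{M}\max_{a_i}\vert bp_{a_i}(M)\vert\ge 1$; together with the upper bound below this pins the optimum at exactly $1$. To establish the upper bound $\min_{M}\max_{a_i}\vert bp_{a_i}(M)\vert\le 1$ (and hence the first claim of the theorem) it suffices to show that the rematching branch returns a matching in which no agent is in two blocking pairs.

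The core of the argument analyses $M_C$ and the while loop. Because {\sf MaxCard} takes alternate edges along each path starting at an endpoint and along each cycle, every path component leaves at most its terminal degree-$1$ endpoint unmatched and every even cycle is matched perfectly; such gaps lie at degree-$1$ vertices or do not exist, so by the structural observation no agent in a path or even-cycle component is in two blocking pairs. The only candidates are the \emph{single} unmatched vertex of each odd cycle. I would then analyse one loop iteration on such a gap $a_i$: writing $a_r$ for the chosen blocking neighbour and $a_k=M_C(a_r)$ (the vertex beyond $a_r$ in the cycle), the rematch sets $M_C(a_i)=a_r$ and frees $a_k$, moving the gap two steps along the cycle. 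Crucially, the new gap $a_k$ has $a_r$ as a neighbour, and $a_r$ is now matched to $a_i$ whom $a_r$ strictly prefers to $a_k$ — this is exactly the condition that made $\{a_i,a_r\}$ blocking, and it holds for whichever blocking neighbour the loop selects. Hence $\{a_k,a_r\}$ is \emph{not} blocking and $a_k$ is in at most one blocking pair, while every other agent is either matched (hence $\le 1$) or a gap in a different component. Thus a single rematch removes the only two-blocking-pair agent of that odd cycle without creating a new one anywhere. This is the step I expect to be the main obstacle: ruling out a cascade of freshly created two-blocking-pair agents, which the ``one neighbour of the new gap is already satisfied'' observation resolves.

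Finally I would bound the running time. The loop performs at most one rematch per odd cycle, i.e.\ $O(n)$ iterations. Maintaining a worklist of agents currently in two blocking pairs, initialised in $O(n)$ time by scanning each agent's constant-length list, lets each iteration run in $O(1)$: a rematch changes the partners of only $a_i,a_r,a_k$, so only the blocking-pair counts of these agents and their $O(1)$ neighbours need recomputing before updating the worklist. With Irving's algorithm and {\sf MaxCard} both linear on a bounded-degree graph, the total is $O(n)$. Since the loop terminates precisely when no agent is in two blocking pairs, the returned matching attains $\max_{a_i}\vert bp_{a_i}(M_C)\vert\le 1=\min_{M}\max_{a_i}\vert bp_{a_i}(M)\vert$, establishing both the claimed bound and optimality.
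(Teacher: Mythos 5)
Your proposal is correct and follows essentially the same route as the paper's proof: dispose of the solvable case via Irving's algorithm, build the maximum-cardinality matching from the path/cycle decomposition, and then argue that each local rematch fixes the unique over-blocked (unmatched) agent without creating a new one, because the freed agent's neighbour $a_r$ is now matched to someone it strictly prefers and every matched agent with a list of length at most $2$ is in at most one blocking pair. The only cosmetic difference is the termination accounting -- you charge one rematch per odd cycle, whereas the paper argues that the number of agents in more than one blocking pair strictly decreases per iteration -- and both yield the same $O(n)$ bound.
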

\begin{proof}
    Clearly, if $M_S$ exists, then $\max_{a_i\in A}\vert bp_{a_i}(M_S)\vert\leq 0$, so $M_S$ is optimal. Furthermore, if $M_S$ does not exist, then $\min_{M\in \mathcal{M}}\max_{a_i\in A}\vert bp_{a_i}(M)\vert> 0$.
   
    Now suppose that $\max_{a_i\in A}\vert bp_{a_i}(M_C)\vert> 1$, then, because all preference lists are of length at most 2, there exists an agent $a_i$ who is unmatched in $M_C$ and who is contained in exactly two blocking pairs, say $bp_{a_i}(M_C)=\{\{a_i,a_r\},\{a_i,a_s\}\}$ (and suppose that $a_r\succ_i a_s$). Clearly, both $a_r$ and $a_s$ must be matched in $M_C$, otherwise $M_C$ would not be a maximum-cardinality matching as $M_C\cup\{\{a_i,a_r\}\}$ or  $M_C\cup\{\{a_i,a_s\}\}$ would be a matching of larger size. Also, without loss of generality, let $a_k$ denote the current partner of $a_r$ in $M_C$, then $a_i\succ_r a_k$, otherwise $a_i,a_r$ would not block. Now we update $M_C$ such that $a_i$ and $a_r$ are matched to each other (instead of $a_r$ to $a_k$). Due to the bound on the preference list lengths, $a_k$'s preference list either contains just $a_r$, or two agents $a_r$ and $a_l$ (for some agent $a_l$). Note, though, that $a_l$ must be matched in $M_C$, otherwise $M_C'= (M_C\setminus\{\{a_r,a_k\}\})\cup\{\{a_i,a_r\},\{a_k,a_l\}\}$ would exceed $M_C$ in size and thus $M_C$ would not be a maximum-cardinality matching.

    After this rematching procedure completes, the size of $M_C$ remains the same as previously, but $\vert bp_{a_i}(M_C)\vert\leq 1$. Furthermore, because $a_i\succ_r a_k$, $\vert bp_{a_r}(M_C)\vert= 0$. Finally, $\vert bp_k(M_C)\vert\leq 1$ because while $a_k$ can generally block with at most two agents due to its preference list of length at most two, it cannot block with $a_r$ because $a_i\succ_r a_k$. Thus, this rematching procedure strictly decreases the number of agents who are contained in more than one blocking pair at each iteration of the main while loop. While at least one such agent remains, we repeat this rematching procedure with this agent taking the role of $a_i$ in the argument above.

    Asymptotically, Irving's algorithm requires at most linear time in the sum of the preference list lengths of $I$ \cite{irving_sr}, which, due to bounded preference list lengths, is linear in the number of agents of $I$. The maximum-cardinality matching algorithm we described simply walks over paths and cycles, so it requires $O(m)$ time for a graph with $m$ edges. We execute this on the acceptability graph of $I$, which has $n$ vertices and, again due to bounded preference list lengths, $m=O(n)$ edges. Hence, computing $M_C$ requires $O(n)$ time. The rematching procedure, if invoked, is executed at most $O(n)$ times because the number of agents in more than one blocking pair strictly decreases with each iteration (as argued above), and each execution requires constant time. We can implement the while loop efficiently by building a stack of unmatched agents (every agent involved in two blocking pairs must clearly be unmatched) and, at every iteration of the loop, popping an agent from this stack and verifying whether this agent is indeed blocking with both agents on its preference list (this can clearly be performed in constant time). Because we do the computation of the initial $M_C$ and the rematching sequentially, and assuming that matching size comparisons can be implemented in constant time, we arrive at an overall time complexity of $O(n)$.
\end{proof}

This result identifies a tractable frontier. Even though minimax almost-stability is generally intractable, instances with very short preference lists admit an efficient computation. Such instances can arise in practical scenarios where agents may only choose between very limited options.

\subsection{Minimax Maximum Matchings in Bipartite Graphs with Max-Degree 2}
\label{sec:smishort}

We now show how the high-level technique of Algorithm \ref{alg:sriexact2} for low-degree acceptability graphs in {\sc sri} carries over to {\sc Minimax-AlmostStable-Max-smi}. Specifically, the procedure we give in Algorithm \ref{alg:smiexact} solves {\sc Minimax-AlmostStable-Max-smi} to optimality when preference lists are of length at most 2. It works as follows: compute a stable matching $M_S$ using the Gale-Shapley algorithm \cite{gale_shapley} and a maximum-cardinality matching $M_C$ using the same procedure as the one that we described for Algorithm \ref{alg:sriexact2} (as the acceptability graph still only consists of paths and cycles). If $\vert M_S\vert = \vert M_C\vert$ then return $M_S$, otherwise perform the same rematching procedure on $M_C$ as in Algorithm \ref{alg:sriexact2} and return this modified matching.

\begin{algorithm}[!htb]
\renewcommand{\algorithmicrequire}{\textbf{Input:}}
\renewcommand{\algorithmicensure}{\textbf{Output:}}

\begin{algorithmic}[1]
\Require{$I$ : an {\sc smi} instance}
\Ensure{$M$ : a matching}

\State $M_S \gets $ {\sf GaleShapley}$(I)$ \Comment{Compute a stable matching}
\State $M_C \gets $ {\sf MaxCard}$(I)$ \Comment{Compute a maximum-cardinality matching}

\If{$\vert M_S\vert=\vert M_C\vert$}
    \State\Return{$M_S$}
\EndIf

\While{$\exists \;a_i$ such that $\vert bp_{a_i}(M_C)\vert=2$} \Comment{Rematching procedure}
    \State pick one agent $a_r$ that is in a blocking pair with $a_i$
    \State $a_k\gets M_C(a_r)$
    \State $M_C(a_i)\gets a_r \;;\; M_C(a_r)\gets a_i \;;\; M_C(a_k)\gets a_k$
\EndWhile
\State\Return{$M_C$}

\end{algorithmic}
\caption{Exact algorithm for {\sc Minimax-AlmostStable-Max-smi} with lists of length at most 2}
\label{alg:smiexact}
\end{algorithm}

\begin{theorem}
\label{thm:smiexact}
    Let $I$ be an {\sc smi} instance with $n$ agents. If all preference lists of $I$ are of length at most 2, then $\min_{M\in \mathcal{M}^+}\max_{a_i\in A}\vert bp_{a_i}(M)\vert\leq 1$ and Algorithm \ref{alg:smiexact} computes an optimal solution to {\sc Minimax-AlmostStable-Max-smi} in $O(n)$ time.
\end{theorem}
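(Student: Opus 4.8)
The plan is to mirror the structure of the proof of Theorem~\ref{thm:sriexact2}, splitting into the two branches of Algorithm~\ref{alg:smiexact} and reusing the rematching analysis wherever possible. First I would handle the branch where $\vert M_S\vert=\vert M_C\vert$. The Gale--Shapley algorithm always produces a stable matching $M_S$, and since all stable matchings of an {\sc smi} instance share the same size (the cited Gale--Sotomayor result), $\vert M_S\vert=\vert M_C\vert$ forces $M_S\in\mathcal{M}^+$. As a stable matching, $\max_{a_i\in A}\vert bp_{a_i}(M_S)\vert=0$, which is trivially optimal among maximum-cardinality matchings, so $\min_{M\in\mathcal{M}^+}\max_{a_i\in A}\vert bp_{a_i}(M)\vert=0\leq 1$ and the returned $M_S$ is optimal.

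Second, I would treat the branch $\vert M_S\vert<\vert M_C\vert$. The same size-invariance of stable matchings now tells us that \emph{no} maximum-cardinality matching is stable, so $\min_{M\in\mathcal{M}^+}\max_{a_i\in A}\vert bp_{a_i}(M)\vert\geq 1$; it therefore suffices to show the matching returned after rematching attains value at most $1$. The rematching loop is syntactically identical to that of Algorithm~\ref{alg:sriexact2}, and since a bipartite acceptability graph with maximum degree $2$ is a special case of the general degree-$2$ graphs handled there, the structural argument transfers directly: any agent $a_i$ with $\vert bp_{a_i}(M_C)\vert=2$ must be unmatched (an agent with a partner can block with at most its one remaining list entry), both blocking partners must be matched (else $M_C$ would not be maximum), and the swap that matches $a_i$ to its preferred blocker $a_r$ while freeing $a_r$'s old partner $a_k$ is size-neutral because any agent $a_l$ to which $a_k$ could otherwise augment is already matched. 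Each iteration strictly reduces the number of agents in more than one blocking pair, so the loop terminates with a matching of unchanged (hence maximum) cardinality in which no agent is in more than one blocking pair, giving value exactly $1$ and establishing optimality.

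The main obstacle, though a mild one, is to confirm that the cardinality-preservation argument of Theorem~\ref{thm:sriexact2} is not disturbed by the added bipartiteness constraint and that the maximum-cardinality status survives every iteration; this reduces to observing that each swap is an augmenting-path-style exchange that is size-neutral, which holds because the graph remains a disjoint union of paths and cycles irrespective of bipartiteness. For the running time I would note that Gale--Shapley runs in linear time in the total preference-list length, hence $O(n)$ under bounded lists; the {\sf MaxCard} routine walks the paths and cycles in $O(n)$; the size comparison is $O(1)$; and the rematching loop executes $O(n)$ constant-time iterations, maintained via a stack of unmatched agents exactly as in Theorem~\ref{thm:sriexact2}. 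Performing these steps sequentially yields an overall complexity of $O(n)$ as required.
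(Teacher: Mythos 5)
Your proposal is correct and follows essentially the same route as the paper's own proof: the same two-branch analysis via size-invariance of stable matchings, the same rematching argument transferred from Theorem~\ref{thm:sriexact2} (unmatched doubly-blocking agent, matched blockers, size-neutral swap, strict decrease in the number of offending agents), and the same $O(n)$ accounting. No gaps to report.
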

\begin{proof}
    Notice that all stable matchings of $I$ have the same size \cite{galesoto85}. Clearly, $\vert M_C\vert \geq \vert M_S\vert$ and $\max_{a_i\in A}\vert bp_{a_i}(M_S)\vert= 0$, so if $\vert M_S\vert = \vert M_C\vert$, then $M_S$ is simultaneously a maximum-cardinality matching and a stable matching of $I$, so it is also a solution to {\sc Minimax-AlmostStable-Max-smi}. However, if $\vert M_S\vert \neq \vert M_C\vert$, then necessarily $\vert M_S\vert < \vert M_C\vert$, so no solution to {\sc Minimax-AlmostStable-Max-smi} is stable, i.e., for all maximum cardinality matchings $M'$ of $I$, $\max_{a_i\in A}\vert bp_{a_i}(M')\vert> 0$. 
    
    Now suppose that $\max_{a_i\in A}\vert bp_{a_i}(M_C)\vert> 1$, then, because all preference lists are of length at most 2, there exists an agent $a_i$ who is unmatched in $M_C$ and who is contained in exactly two blocking pairs, say $bp_{a_i}(M_C)=\{\{a_i,a_r\},\{a_i,a_s\}\}$. Clearly, both $a_r$ and $a_s$ must be matched in $M_C$, otherwise $M_C$ would not be a maximum-cardinality matching as $M_C\cup\{\{a_i,a_r\}\}$ or  $M_C\cup\{\{a_i,a_s\}\}$ would be a matching of larger size. Also, without loss of generality, let $a_k$ denote the current partner of $a_r$ in $M_C$, then $a_i\succ_r a_k$, otherwise $a_i,a_r$ would not block in $M_C$. Now we update $M_C$ such that $a_i$ and $a_r$ are matched to each other (instead of $a_r$ and $a_k$). Furthermore, due to the bound on the lengths of preference lists, $a_k$'s preference list either contains just $a_r$, or two agents $a_r$ and $a_l$ (for some agent $a_l$). Agent $a_l$ must be matched in $M_C$, otherwise $M_C'= (M_C\setminus\{\{a_r,a_k\}\})\cup\{\{a_i,a_r\},\{a_k,a_l\}\}$ would exceed $M_C$ in size and thus $M_C$ would not be a maximum-cardinality matching.

    After this rematching procedure completes, the size of $M_C$ remains the same as previously, but $\vert bp_{a_i}(M_C)\vert\leq 1$. Furthermore, because $a_i\succ_r a_k$, $\vert bp_{a_r}(M_C)\vert= 0$. Finally, $\vert bp_k(M_C)\vert\leq 1$ because while $a_k$ can generally block with at most two agents due to its preference list of length at most two, it cannot block with $a_r$ because $a_i\succ_r a_k$. Thus, this rematching procedure strictly decreases the number of agents who are contained in more than one blocking pair. While at least one such agent remains, we repeat this rematching procedure with this agent taking the role of $a_i$ in the argument above.

    Asymptotically, the Gale-Shapley algorithm requires at most linear time in the sum of the preference list lengths of $I$ \cite{gale_shapley}, which, due to bounded preference list lengths, is linear in the number of agents of $I$. The maximum-cardinality matching we described simply walks over all paths and cycles, so due to the maximum degree 2 assumption on the acceptability graph, computing $M_C$ requires $O(n)$ time. The rematching procedure, if invoked, is executed at most $O(n)$ times because the number of agents in more than one blocking pair strictly decreases with each iteration (as argued above), and each execution requires constant time. We can implement the while loop efficiently by building a stack of unmatched agents (every agent involved in two blocking pairs must clearly be unmatched) and, at every iteration of the loop, popping an agent from this stack and verifying whether this agent is indeed blocking with both agents on its preference list (this can clearly be performed in constant time). Because we do the computation of the initial $M_S$, $M_C$ and the rematching sequentially, and assuming that matching size comparisons can be implemented in constant time, we arrive at an overall time complexity of $O(n)$.
\end{proof}

This positive result implies a tight tractability frontier: {\sc Minimax-AlmostStable-Max-smi} can be solved very efficiently when preference lists are of length at most 2. However, the problem is {\sf NP-hard} when preference lists are of length at most 3, as previously established in Corollary \ref{cor:maxsmihard}.

\section{Approximating Minimax Almost-Stable Matchings}
\label{sec:approx}

At this point, we have established the practical relevance of our central optimisation problems, but provided strong intractability results for their computation (except in very restricted cases). Apart from considering restricted cases, another standard technique of dealing with {\sf NP-hard} problems of practical relevance and theoretical interest is to design efficient approximation algorithms \cite{Vazirani2001}. However, we can immediately state the following negative observation based on our intractability results.

\begin{corollary}
\label{cor:inapprox2}
    The problems {\sc Minimax-AlmostStable-sri} and {\sc Minimax-AlmostStable-Max-smi} admit no polynomial-time approximation algorithms with a guarantee better than 2, unless {\sf P}={\sf NP}.
\end{corollary}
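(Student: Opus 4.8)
The plan is to derive the inapproximability bound directly from the integrality gap implicit in the $k=1$ hardness results of Theorems \ref{thm:1-Maxbounded} and \ref{thm:1-Maxsmi}. The two crucial facts are that the objective $\max_{a_i\in A}\vert bp_{a_i}(M)\vert$ is always a non-negative integer, and that the reductions behind those theorems yield a clean dichotomy: the optimal value of each constructed instance is either at most $1$ or at least $2$, with nothing in between.

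First I would recall the structure of the constructed instances. In the reduction of Theorem \ref{thm:1-Maxbounded}, the forcing gadgets contain odd preference cycles that always admit at least one blocking pair, so the optimum is always at least $1$; by Claims \ref{claim:mmsri4}--\ref{claim:mmsri5}, the optimum is exactly $1$ when the {\sc (2,2)-e3-sat} formula $B$ is satisfiable, and at least $2$ otherwise. Hence distinguishing $\mathrm{OPT}=1$ from $\mathrm{OPT}\geq 2$ is {\sf NP-hard} for {\sc Minimax-AlmostStable-sri}. The identical dichotomy for {\sc Minimax-AlmostStable-Max-smi} follows from the perfect-matching reduction of Theorem \ref{thm:1-Maxsmi}.

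Next, I would argue by contradiction. Suppose some polynomial-time algorithm $\mathcal{A}$ attains an approximation ratio $\rho<2$ for one of the two problems. Running $\mathcal{A}$ on an instance $I$ produces a matching $M$ whose value $\mathrm{ALG}=\max_{a_i\in A}\vert bp_{a_i}(M)\vert$ satisfies $\mathrm{ALG}\leq \rho\cdot\mathrm{OPT}$. On a satisfiable (yes) instance, where $\mathrm{OPT}\leq 1$, this forces $\mathrm{ALG}\leq\rho<2$, and by integrality $\mathrm{ALG}\leq 1$; on an unsatisfiable (no) instance, where $\mathrm{OPT}\geq 2$, we trivially have $\mathrm{ALG}\geq\mathrm{OPT}\geq 2$. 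Therefore testing whether $\mathrm{ALG}\leq 1$ decides the {\sf NP-complete} problem in polynomial time, which yields {\sf P}={\sf NP}.

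The one point requiring care, and the main conceptual obstacle, is the treatment of the boundary value $\mathrm{OPT}=0$ (a solvable instance). Under the standard multiplicative definition of an approximation ratio, any valid $\rho$-approximation must output a stable matching of value $0$ whenever $\mathrm{OPT}=0$, so such instances are still classified correctly as ``$\leq 1$''; in any case the gadgets force $\mathrm{OPT}\geq 1$ on all hard instances, so this degenerate case never arises in the reduction itself. Combined with the integrality of the objective and the unit-width gap straddling the threshold between $1$ and $2$, the argument applies uniformly to both problems.
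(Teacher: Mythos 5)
Your proposal is correct and follows essentially the same route as the paper: both arguments use the {\sf NP}-hardness of the $k=1$ decision problems (Theorems \ref{thm:1-Maxbounded} and \ref{thm:1-Maxsmi}) together with the integrality of the objective to conclude that a $\rho$-approximation with $\rho<2$ would distinguish instances with optimum at most $1$ from those with optimum at least $2$. Your additional remarks on the unavoidable blocking pair in the gadget instances and on the degenerate case $\mathrm{OPT}=0$ are careful elaborations of details the paper leaves implicit, but they do not change the underlying argument.
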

\begin{proof}
    Notice that an efficient algorithm with a guarantee $c<2$ would contradict that the problems {\sc $1$-Max-AlmostStable-sri} and {\sc $1$-Max-AlmostStable-Perfect-smi} are {\sf NP-hard} (as we showed in Theorems \ref{thm:1-Maxbounded} and \ref{thm:1-Maxsmi}, respectively), as we could distinguish in polynomial time between instances that admit a matching (respectively perfect matching) in which no agent is in more than $c\cdot\text{OPT}<2$ blocking pairs, i.e., every agent is in at most one blocking pair, and instances where this is not true.
\end{proof}

A trivial positive observation is the following: both {\sc Minimax-AlmostStable-sri} and {\sc Minimax-AlmostStable-Max-smi} can be efficiently approximated within the maximum preference list length $d_{\max}$, as we can efficiently decide whether the optimal (minimum) maximum number OPT of blocking pairs that any agent is contained in is 0 or at least 1 (see our discussion at the beginning of Section \ref{sec:intractability}). Hence, we can either return a stable matching (which is optimal) or the empty matching (which admits at most $d_{\max}\cdot$ OPT blocking pairs per agent). Can we do better than this? 

In Section \ref{sec:approxDsri}, we will give a natural algorithm for {\sc Minimax-AlmostStable-sri} that gives a more desirable guarantee of $\left\lceil\frac{d_{\max}}{2}\right\rceil$ (in fact, we provide a stronger guarantee: no agent blocks with more than half of their preference list, regardless of the optimal value). Then, in Section \ref{sec:inappproxsmi}, we will strengthen the inapproximability result from Corollary \ref{cor:inapprox2} to rule out (subject to {\sf P$\neq$NP}) many classes of desirable approximation algorithms for {\sc Minimax-AlmostStable-Max-smi}. 


\subsection{Approximating Minimax Matchings in General Graphs}
\label{sec:approxDsri}

Beyond the trivial upper bound OPT $\leq d_{\max}\leq n-1$ that we stated at the beginning of the section, we can make the following observation. Let $I=(A,\succ)$ be an {\sc sri} instance with complete preference lists and $n$ agents. Then, for any cut of $A$ (i.e., a partition of $A$ into two sets) consisting of (roughly) equally sized sets $A_1$ and $A_2$ (if $n$ is odd, let $\vert A_1\vert=\vert A_2\vert+1$), the {\sc smi} instance $I'=(A,\succ')$, where $\succ'$ is constructed from $\succ$ by making all agents within $A_1$ and all agents within $A_2$ mutually unacceptable, admits a stable matching $M$. Clearly, $\max_{a_i\in A} \vert bp_i^{I'}(M)\vert=0\leq\max_{a_i\in A} \vert bp_i^{I}(M)\vert\leq \vert A_1\vert-1=\left\lceil\frac{n}{2} \right\rceil-1$.

It turns out that we can generalise this observation to arbitrary {\sc sri} instances with bounded preference lists of length $d_{\max}$ by connecting this problem to a restricted local-search version of the classical {\sc Max-Cut} problem \cite{garey_johnson}. Our approximation algorithm below for {\sc Minimax-AlmostStable-sri} is inspired by folklore heuristics and procedures using similar flip-based operations, such as \cite{kl70,JOHNSON198879}. 

\begin{algorithm}[!htb]
\renewcommand{\algorithmicrequire}{\textbf{Input:}}
\renewcommand{\algorithmicensure}{\textbf{Output:}}

\begin{algorithmic}[1]
\Require{$I=(A,\succ)$ : an {\sc sri} instance}
\Ensure{$M$ : a matching}

\State $M \gets $ {\sf Irving}$(I)$ \Comment{Compute a stable matching or determine that none exists}
\If{$M$ is not None}
    \State\Return $M$
\EndIf

\State side$(a_i)\gets 0\;\;\forall a_i\in A$ 
\State cross$(a_i)\gets 0\;\;\forall a_i\in A$ 
\State $Q\gets \{a_i\in A\;\vert\; d_i > 2\;\cdot$ cross$(a_i)\}$ \Comment{As $d_i>0$, this yields $Q\gets A$, but we maintain this for clarity}

\While{$Q\neq \varnothing$}
    \State $a_i\gets Q$.pop()
    \State oldside $\gets$ side$(a_i)$
    \State side$(a_i)\gets$ $1\;-$ oldside
    \For{$a_j$ on $\succ_i$}
        \If{side$(a_j)$ is oldside}
            \State cross$(a_j)\gets$ cross$(a_j)+1$
        \Else
            \State cross$(a_j)\gets$ cross$(a_j)-1$
        \EndIf
        \If{$a_j\in Q$ and $d_j \leq 2\;\cdot$ cross$(a_j)$}
            \State $Q$.remove$(a_j)$
        \ElsIf{$a_j\notin Q$ and $d_j > 2\;\cdot$ cross$(a_j)$}
            \State $Q$.add$(a_j)$
        \EndIf
    \EndFor
    \State cross$(a_i)\gets$ $d_i\;-$ cross$(a_i)$  \Comment{We have flipped crossing and non-crossing entries}
\EndWhile

\State $I'\gets(A,\succ')$ \Comment{Where initially $\succ'_i$ is empty for all $a_i\in A$}

\For{$a_i\in A$}
    \For{$a_j$ on $\succ_i$ (in decreasing order of preference)}
        \If{side$(a_i)\neq$ side$(a_j)$}
            \State $\succ_i'$.append$(a_j)$
        \EndIf
    \EndFor
\EndFor

\State $M \gets $ {\sf GaleShapley}$(I')$ \Comment{Compute a stable matching of this new bipartite instance}

\State\Return{$M$}

\end{algorithmic}
\caption{Approximation algorithm for {\sc Minimax-AlmostStable-sri}}
\label{alg:localapprox}
\end{algorithm}

The full procedure is given in Algorithm \ref{alg:localapprox} and we provide the following intuition. Let $I=(A,\succ)$ be an {\sc sri} instance with preference lists of length at most $d_{\max}$ and $n$ agents. If $I$ is unsolvable, then we aim to find a cut $(A_1,A_2)$ of the acceptability graph $G$ associated with $I$ such that, informally, no agent has ``too many'' edges removed. More specifically, let cross$(a_i)$ denote the number of mutually acceptable agents of $a_i\in A$ in the opposite set (i.e., if $a_i\in A_1$, then cross$(a_i)$ is the number of acceptable agents in $A_2$, and vice versa). The local search algorithm starts with an arbitrary cut (we choose $A_1=A$ and $A_2=\varnothing$, but any partition is valid) represented by an array ``side'' (that assigns either a 0 or 1 to every agent), and initialises cross$(a_i)=0$ for all agents. We then keep track of all agents whose preference list length strictly exceeds twice their crossing number throughout the execution (which is all agents to start with). While this set is non-empty, we remove one such agent from the set and flip its set membership in the cut. Consequently, we make necessary adjustments to the crossing numbers and memberships of its neighbours. Once no such agent remains in the set, we have a desirable cut in which every agent ranks at least half as many agents from the opposite set as they rank in total (in $\succ$). Thus, we can consider the {\sc smi} instance $I'$ induced by the spanning bipartite subgraph associated with this final cut $(A_1,A_2)$, and compute a stable matching $M$ of $I'$ using the Gale-Shapley algorithm \cite{gale_shapley}. Then, by construction, $M$ has the property that, for all $a_i\in A$, $\vert bp_i^{I'}(M)\vert=0\leq \vert bp_i^{I}(M)\vert\leq d_i-\left\lfloor\frac{d_i}{2}\right\rfloor=\left\lceil\frac{d_i}{2}\right\rceil$ (where $d_i$ is the length of $a_i$'s preference list $\succ_i$ in $\succ$). Thus, necessarily, $\max_{a_i\in A} \vert bp_i^{I'}(M)\vert=0\leq\max_{a_i\in A} \vert bp_i^{I}(M)\vert\leq \left\lceil\frac{d_{\max}}{2}\right\rceil$. The guarantees of this algorithm and its time complexity are summarised in the following statement. 

\begin{theorem}
\label{thm:approxsri}
    Let $I=(A,\succ)$ be an {\sc sri} instance with preference lists of length at most $d_{\max}$ and $n$ agents. Algorithm \ref{alg:localapprox} returns a matching $M$ of $I$ such that $\vert bp_i(M)\vert\leq \left\lceil\frac{d_i}{2}\right\rceil$ and thus $\max_{a_i\in A} \vert bp_i(M)\vert\leq \left\lceil\frac{d_{\max}}{2}\right\rceil$. Furthermore, the algorithm returns a $\left\lceil\frac{d_{\max}}{2}\right\rceil$-approximation for {\sc Minimax-AlmostStable-sri} and terminates in at most $O(n\cdot d_{\max}^2)$ time.
\end{theorem}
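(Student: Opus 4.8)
The plan is to decompose the statement into three parts: the per-agent blocking-pair bound $\vert bp_i(M)\vert\le\lceil d_i/2\rceil$ (from which the minimax bound is immediate), the approximation ratio, and the running-time bound. The first two follow fairly directly from the cut structure the algorithm enforces; the running-time analysis of the local-search loop is where I expect the real work to lie.

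First I would dispose of the solvable case: if {\sf Irving}$(I)$ returns a stable matching it has value $0$ and is returned, which is trivially optimal, so I may assume $I$ is unsolvable. The correctness of the loop rests on an invariant, maintained by induction on iterations, that cross$(a_i)$ always equals the number of acceptable neighbours of $a_i$ on the opposite side of the current cut. This holds at initialisation (all agents on side $0$, so every edge is same-side and every cross value is $0$), and each flip of an agent $a_i$ converts its same-side edges into crossing edges and vice versa, which is exactly mirrored by the neighbour updates and by the final reassignment cross$(a_i)\gets d_i-$cross$(a_i)$. When the loop ends, $Q=\varnothing$, so every agent satisfies $d_i\le 2\,$cross$(a_i)$, whence the number of same-side acceptable neighbours of $a_i$ is $d_i-$cross$(a_i)\le\lfloor d_i/2\rfloor\le\lceil d_i/2\rceil$.

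Next I would bound $\vert bp_i(M)\vert$. Since $M$ is a stable matching of the bipartite instance $I'$ retaining exactly the crossing edges, no crossing pair blocks $M$ in $I$: for any crossing pair $\{a_i,a_j\}$, both $a_j$ and $M(a_i)$ are crossing neighbours (or $M(a_i)=a_i$), so the preference comparisons deciding whether $\{a_i,a_j\}$ blocks are identical in $I$ and $I'$, and stability of $M$ in $I'$ rules this out. Hence every blocking pair involving $a_i$ is a same-side pair, so $\vert bp_i(M)\vert\le\lceil d_i/2\rceil$ and $\max_{a_i\in A}\vert bp_i(M)\vert\le\lceil d_{\max}/2\rceil$. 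For the approximation guarantee I would note that, $I$ being unsolvable, $\text{OPT}\ge 1$ (as argued at the start of Section \ref{sec:intractability}), so $\max_{a_i\in A}\vert bp_i(M)\vert\le\lceil d_{\max}/2\rceil\le\lceil d_{\max}/2\rceil\cdot\text{OPT}$, yielding the claimed ratio.

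The main obstacle is proving termination together with the $O(n\cdot d_{\max}^2)$ bound, which I would handle via the potential given by the cut size (the total number of crossing edges). When an agent $a_i\in Q$ is flipped it has strictly more same-side than crossing incident edges, since $d_i>2\,$cross$(a_i)$; a direct count shows the cut size then increases by exactly $d_i-2\,$cross$(a_i)\ge 1$. As this is a strictly monotone integer potential bounded above by the edge count $m\le n\,d_{\max}/2$, the loop performs at most $O(n\,d_{\max})$ flips in total (regardless of how often a single agent re-enters $Q$). Each flip scans $\succ_i$ to update neighbours' cross values and their $Q$-membership, costing $O(d_{\max})$ with a constant-time-membership structure for $Q$, for $O(n\,d_{\max}^2)$ overall. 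The remaining steps — the initial {\sf Irving} call, the construction of $I'$, and the final {\sf GaleShapley} call — are each linear in $\sum_i d_i=O(n\,d_{\max})$, so the local search dominates and the total running time is $O(n\,d_{\max}^2)$, as required.
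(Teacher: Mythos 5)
Your proposal is correct and follows essentially the same route as the paper's proof: the same reduction to the unsolvable case, the same cut-size potential argument (the paper's $\Phi=\frac{1}{2}\sum_i\mathrm{cross}(a_i)$ is exactly your crossing-edge count) with a per-flip increase of at least $1$ bounding the number of flips by $O(n\,d_{\max})$, and the same observation that stability of $M$ in $I'$ confines all blocking pairs of $I$ to same-side edges. Your explicit statement of the invariant that $\mathrm{cross}(a_i)$ tracks the true crossing number is a minor presentational addition, not a different argument.
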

\begin{proof}
    We start by noting that, if $I$ is solvable, then we return a stable matching right away, which we compute using Irving's algorithm \cite{irving_sr}. Otherwise, we know that the optimal solution value of the problem is at least 1. Notice, also, that when we pop an agent $a_i$ from $Q$ in the while loop, we never need to add it back into $Q$ in the same iteration. To see this, notice that, at the start of the loop iteration, $d_i>2\;\cdot$ cross$(a_i)$ and at the end of the iteration, we update $a_i$'s crossing number to be $d_i\;-$ cross$(a_i)$. Hence, the new crossing number is at least $d_i-\frac{d_i}{2}=\frac{d_i}{2}$, in which case $a_i$ does not satisfy the requirement to be part of $Q$.

    Next, we argue why the algorithm must terminate. Let $\Pi$ be a potential function of the cut $(A_1,A_2)$ (represented by the ``side'' array in the pseudocode) defined as $\Phi(A_1,A_2)=\frac{1}{2}\sum_{a_i\in A}\text{cross}(a_i)$, and, for every $a_i\in A$, let int$(a_i)=d_i\;-$ cross$(a_i)$ denote the number of mutually acceptable agents of $a_i$ in the same set. Notice that $0\leq \Phi(A_1,A_2)\leq \frac{1}{2}\sum_{a_i\in A}d_i\leq \frac{1}{2}n\cdot d_{\max}$. Now, if some $a_i$ satisfies int$(a_i)>$ cross$(a_i)$, i.e., whenever $d_i>2\;\cdot$ cross$(a_i)$, then flipping the membership of $a_i$ from one set to the opposite set converts $a_i$'s int$(a_i)$ many internal incident edges into crossing edges, and its cross$(a_i)$ many crossing edges into internal edges. Hence, this causes a net increase of at least 1 for $\Phi$, as we will now show. Let $(A_1',A_2')$ be the cut after the flip, and let cross$'$ be the crossing numbers after the flip. Then
    {\allowdisplaybreaks
    \begin{align*}
        \Phi(A_1',A_2')&=\frac{1}{2}\left(\text{cross}'(a_i)+\sum_{a_j\in A\;:\; a_j\text{ on }\succ_i}\text{cross}'(a_j)+\sum_{a_k\in A\setminus\{a_i\}\;:\; a_k\text{ not on }\succ_i}\text{cross}'(a_k)\right)\\
        &=\frac{1}{2}\left(\text{cross}'(a_i)+\sum_{a_j\in A\;:\; a_j\text{ on }\succ_i}\text{cross}'(a_j)+\sum_{a_k\in A\setminus\{a_i\}\;:\; a_k\text{ not on }\succ_i}\text{cross}(a_k)\right).
    \end{align*}}
    
    Therefore, notice that
    {\allowdisplaybreaks
    \begin{align*}
        &\Phi(A_1',A_2')-\Phi(A_1,A_2)\\
        &=\frac{1}{2}\left(\text{cross}'(a_i)+\sum_{a_j\in A\;:\; a_j\text{ on }\succ_i}\text{cross}'(a_j)+\sum_{a_k\in A\setminus\{a_i\}\;:\; a_k\text{ not on }\succ_i}\text{cross}(a_k)\right)\\
        &\;\;\; -\frac{1}{2}\left(\text{cross}(a_i)+\sum_{a_j\in A\;:\; a_j\text{ on }\succ_i}\text{cross}(a_j)+\sum_{a_k\in A\setminus\{a_i\}\;:\; a_k\text{ not on }\succ_i}\text{cross}(a_k)\right)\\
        &=\frac{1}{2}\left( \text{cross}'(a_i)-\text{cross}(a_i) +\sum_{a_j\in A\;:\; a_j\text{ on }\succ_i}(\text{cross}'(a_j)-\text{cross}(a_j)) \right).
    \end{align*}}
    
    By assumption, int$(a_i)>$ cross$(a_i)$, and both of these values are integers. Hence, int$(a_i)\geq$ cross$(a_i)+1$. Furthermore, at the end of the while loop, we update the crossing numbers such that $\text{cross}'(a_i)=$ int$(a_i)$. Therefore, $\text{cross}'(a_i)-\text{cross}(a_i)\geq 1$. Furthermore, flipping set membership of $a_i$ has the following effects on the crossing numbers of the agents on $\succ_i$: every agent $a_j$ that was in the original set of $a_i$ now has their crossing number increased by 1 (because $a_i$-$a_j$ are now crossing), and every agent $a_j$ that was in the opposite set of $a_i$ and is now in the same set of $a_i$ has their crossing number decreased by 1. Clearly, there are no side effects between other agents, as only $a_i$'s set membership is changed. Hence, $\sum_{a_j\in A\;:\; a_j\text{ on }\succ_i}(\text{cross}'(a_j)-\text{cross}(a_j))=\text{int}(a_i)-\text{cross}(a_i)\geq 1$. Thus
    \begin{align*}
        &\Phi(A_1',A_2')-\Phi(A_1,A_2)\\
        &\geq\frac{1}{2}\left( 1 +1 \right)=1.
    \end{align*}
    Therefore, any flip strictly increases the value of $\Phi$ by an integer amount. Because $\Phi$ is bounded from above by $\frac{1}{2}n\cdot d_{\max}$, the while loop must terminate.

    The stated bounds follow easily: when the algorithm terminates, we know that, for every agent, their crossing number is at least half their original preference list length. By definition of the crossing number, this means that at least half of the agents on their preference list are part of the opposite set. Thus, by construction of $\succ'$, every agent in $I'$ has a preference list of length at least half the length of their preference list in $\succ$. Now, $M$ is a stable matching of $I'$ and, thus, does not admit any blocking pairs with respect to $I'$. Therefore, any blocking pair admitted by $M$ with respect to $I$ must be a pair that was made unacceptable in the construction of $\succ'$ from $\succ$. The bound $\vert bp_i(M)\vert\leq \left\lceil\frac{d_i}{2}\right\rceil$ for any $a_i\in A$ follows immediately. Therefore, $\max_{a_i\in A} \vert bp_i(M)\vert\leq \left\lceil\frac{d_{\max}}{2}\right\rceil$ also follows immediately. Because the optimal solution value of {\sc Minimax-AlmostStable-sri} is known to be at least 1, the approximation result follows.

    Finally, we justify the stated time complexity of the algorithm. Irving's algorithm can be implemented to run in $O(n \cdot d_{\max})$ time \cite{irving_sr}. We argued already that the while loop terminates after at most $\frac{1}{2}n\cdot d_{\max}=O(n\cdot d_{\max})$ iterations. In each iteration of the while loop, we execute an inner for loop. By the bounded preference list length $d_{\max}$, the for loop is executed at most $O(d_{\max})$ many times. Each other operation within the two loop layers can be implemented to run in constant time. $Q$ can be implemented as a stack for efficient addition of new agents and removal of arbitrary agents, together with a Boolean array (indexed by agent IDs) to check whether an agent is, indeed, still in $Q$ (for efficient removal of agents from $Q$). The crossing number and side functions can be implemented through integer arrays indexed by agent IDs. The construction of $I'$ can clearly be implemented in $O(n\cdot d_{\max})$ time. Finally, it is well-known that the Gale-Shapley algorithm can be implemented in $O(n\cdot d_{\max}')$ time (where $d_{\max}'$ is the maximum preference list within $\succ'$). By construction, $d_{\max}'\leq d_{\max}$. Thus, we conclude that Algorithm \ref{alg:localapprox} runs in $O(n\cdot d_{\max}^2)$ time.
\end{proof}

\subsection{Inapproximability of Minimax Maximum Matchings in Bipartite Graphs}
\label{sec:inappproxsmi}

For {\sc Minimax-AlmostStable-Max-smi}, we will prove a gap-introducing reduction that is a modification of the reduction from the proof of Theorem \ref{thm:1-Maxsmi} to show that the problem cannot be approximated efficiently within a multiplicative square-root factor of the agents. Specifically, we replace the communication edges with sufficiently many copies of a new gadget that we introduce below and show that approximating {\sc Minimax-AlmostStable-Max-smi} well enough would imply that {\sf P=NP}. Note that this result also holds for {\sc Minimax-AlmostStable-Max-sri} by generalisation.

\begin{theorem}
\label{thm:inapproxsmi}
    {\sc Minimax-AlmostStable-Max-smi} is not approximable within $N^{1/2}$, where $N$ is the number of agents, unless {\sf P=NP}.
\end{theorem}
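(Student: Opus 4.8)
The plan is to give a gap-introducing reduction from {\sc (2,2)-e3-sat}, building directly on the reduction in the proof of Theorem~\ref{thm:1-Maxsmi}. Recall that there, a formula $B$ is satisfiable if and only if the constructed {\sc smi} instance admits a perfect matching in which no \emph{communication edge} is blocking, which is in turn equivalent to admitting a perfect matching $M$ with $\max_{a_r}\vert bp_{a_r}(M)\vert\leq 1$; moreover every perfect matching necessarily admits a blocking pair inside each variable and each clause gadget, so the optimum is always at least $1$. To turn this ``$\leq 1$ versus $\geq 2$'' dichotomy into a polynomially large multiplicative gap, I would introduce a parameter $t$ (a polynomial in the number of variables $n$ and clauses $m$, fixed later) and replace every communication edge by $t$ copies of an \emph{amplifier gadget}, all attached to a single shared \emph{hub} agent, so that in the ``no'' case one hub is forced into $t$ blocking pairs simultaneously.

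The amplifier gadget should behave as an AND-gate combined with a fan. Writing $c=c_j^s$ and $x=x(c_j^s)$ for the endpoints of a communication edge, recall that in the base reduction this edge blocks exactly when $c$ is ``exposed'' (matched to $q_j$, hence preferring its communication target) and $x$ is ``exposed'' (the variable is set so that the literal is false). The gadget I want has the property that, whenever the channel is \emph{inactive} (at least one of $c,x$ not exposed), it admits a perfect internal matching with no blocking pair incident to any gadget agent; whereas whenever the channel is \emph{active} (both $c$ and $x$ exposed, i.e.\ the original edge would have blocked), every perfect matching forces one blocking pair incident to the hub from each of the $t$ copies, placing the hub in $t$ blocking pairs. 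Realising this AND-gating is the technical heart: a single exposure bit of $x$ must be propagated to all $t$ copies, which I would arrange either via a shift/cascade structure (an alternating path whose two perfect matchings correspond to the ``good'' and ``bad'' configurations of the $t$ spokes, switched by the exposures at its endpoints) or by replicating the relevant variable-occurrence port $t$ times inside an enlarged even variable cycle that still admits exactly two perfect matchings. Throughout, the gadget must admit perfect matchings, so that the maximum-cardinality matchings of the whole instance are exactly the perfect ones.

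Given such a gadget, the two directions mirror Claims~\ref{claim:birosm}--\ref{claim:nocommblocking}. If $B$ is satisfiable, I route each clause gadget toward a true literal, so no channel is active; then the only blocking pairs are the one forced inside each variable gadget and each clause gadget, each touching its agents at most once, giving a perfect matching with $\max_{a_r}\vert bp_{a_r}(M)\vert\leq 1$, hence $\mathrm{OPT}=1$. If $B$ is unsatisfiable, some clause has all literals false; in any perfect matching exactly one of its $c_j^s$ is matched to $q_j$, so the corresponding channel is active and forces its hub into $t$ blocking pairs, giving $\mathrm{OPT}\geq t$. Counting agents, the variable and clause gadgets contribute $O(n+m)$ agents and the $O(m)$ communication edges each contribute $O(t)$ agents, so $N=\Theta(mt)$. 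Choosing $t$ to be a sufficiently large polynomial (e.g.\ $t=(n+m)^2$, using $m=\Theta(n)$) guarantees $t>N^{1/2}$. A polynomial-time $N^{1/2}$-approximation would then return, in the satisfiable case, a matching of value at most $N^{1/2}<t$, while in the unsatisfiable case every perfect (hence maximum-cardinality) matching has value at least $t>N^{1/2}$; comparing the returned value against $t$ decides satisfiability of $B$ in polynomial time, forcing {\sf P=NP}.

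The main obstacle is the gadget design and its verification: I must ensure the amplifier is a faithful AND-gate (it fires if and only if both endpoints are exposed, so that no hub is overloaded in the satisfiable case), that firing concentrates $t$ blocking pairs onto a \emph{single} agent rather than spreading them, that it introduces no spurious blocking pairs at its interface with $c$, $x$ and the rest of the instance, and that it always admits the required perfect internal matchings. Establishing that \emph{every} perfect matching of the unsatisfiable instance places $\geq t$ blocking pairs on one hub---robustly against all alternative ways of matching the gadget---is the most delicate part, and is where the cascade/replication mechanism together with a careful case analysis of the gadget's perfect matchings will be needed.
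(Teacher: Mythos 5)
Your high-level strategy is the same as the paper's: amplify the ``$\mathrm{OPT}\leq 1$ versus $\mathrm{OPT}\geq 2$'' dichotomy of Theorem~\ref{thm:1-Maxsmi} by replacing each communication edge with polynomially many copies of a connector gadget, then choose the multiplicity so the gap exceeds $N^{1/2}$. Your parameter accounting ($N=\Theta(mt)$, $t=(n+m)^2>N^{1/2}$) and the outline of the two directions are fine. But the proof has a genuine gap exactly where you flag it: the amplifier gadget is never constructed, and the specific gadget you describe --- an AND-gate that propagates the exposure bit of $x$ to all $t$ copies and concentrates all $t$ blocking pairs onto a single new \emph{hub} agent --- is both unrealised and harder than what is needed. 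You give no preference lists, no argument that such a fan-in/fan-out structure admits perfect matchings in both states, and no verification that the hub is not overloaded in the satisfiable case; since you yourself identify this as ``the technical heart,'' the proposal is a plan rather than a proof.

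The paper's resolution shows that the AND-gate and the hub are unnecessary. Its connector gadget is a fixed $12$-agent structure with exactly two perfect matchings, $M^1$ and $M^2$: under $M^1$ the port agent $t^{r,7}_{i,\gamma}$ (adjacent to $x_i^r$) is matched below $x_i^r$ on its list, and under $M^2$ the port agent $t^{r,1}_{i,\gamma}$ (adjacent to $c_j^s$) is matched below $c_j^s$. Each copy therefore independently ``chooses'' which of the two \emph{original} endpoints to threaten; when both endpoints are exposed (the edge would have blocked in $I$), every copy creates a blocking pair with $x_i^r$ or with $c_j^s$, and a pigeonhole argument puts at least $\gamma_*/2$ of them on one of these two agents --- no propagation of a shared bit across copies and no new collector agent is required. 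When at most one endpoint is exposed, each copy can be matched so that neither port blocks externally, and the one internal blocking pair of $M^2$ touches each gadget agent at most once, preserving $\mathrm{OPT}\leq 1$ in the yes case. If you replace your AND-gate/hub mechanism with this per-copy ``choose a side'' gadget plus pigeonhole, the rest of your argument goes through.
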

\begin{proof}
    Similar to the proof of Theorem \ref{thm:1-Maxsmi}, we will start with the reduction from {\sc (2,2)-e3-sat} to {\sc Minimax-AlmostStable-Max-smi}. While we use the same vertex and clause gadgets that we outlined in the proof of Theorem \ref{thm:1-Maxsmi}, we will also require an additional new \emph{connector gadget} that will replace the communication edges between vertex and clause gadgets.

    Formally, let $B$ be a {\sc (2,2)-e3-sat} instance with variables $V=\{V_1,V_2,\dots,V_n\}$ and clauses $C=\{C_1,C_2,\dots,C_m\}$. Without loss of generality, we can assume that $n\geq 3$. We will construct an {\sc smi} instance $J=(A,\succ)$ from $B$, where the set of agents $A$ has size
    \begin{equation}
        \label{eq:numagentssmi}
        \vert A\vert=\left\lceil\frac{8}{3}\left(1728n^2+43n+24\sqrt{6}\sqrt{864n^4+43n^3}\right)\right\rceil.
    \end{equation}
    The set $A$ consists of the disjoint union of agents 
    \begin{align*}
        A_v&=\{x_i^r,y_i^r\;;\;1\leq i\leq n\land 1\leq r\leq 4\},\\
        A_c&=\{c_j^s,p_j^s\;;\;1\leq j\leq m\land 1\leq s\leq 3\}\cup\{q_j,z_j\;;\;1\leq j\leq m\}, \text{ and}\\
        A_t&=\{t_{i,\gamma}^{\beta,k}\;;\; 1\leq i\leq n \land 1\leq \gamma\leq \gamma_* \land 1\leq \beta\leq 4\land 1\leq k\leq 12\}
    \end{align*}
    from the variable, clause, and connector gadgets (where $\gamma_*=2(\left\lfloor\vert A\vert^{1/2}\right\rfloor+1)$. We also have some remaining agents $A_r$ due to rounding in the floor and ceiling operations. We remark that, although the instance $J$ contains a large number of agents $\vert A\vert$, the construction of $J$ remains polynomial in the size of $B$.

    The following intuition may help understand the number of agents in Equation \ref{eq:numagentssmi}: every one of the $n$ variable gadgets contributes $8$ agents, every one of the $m$ clause gadgets contributes $8$ agents, and every variable gadget will be incident to four sets of $\gamma_*=2(\left\lfloor\vert A\vert^{1/2}\right\rfloor+1)$ connector gadgets, each contributing $12$ agents. We determined $\gamma_*$ to achieve the desired inapproximability result in the construction, as will become clear at the end of the proof. We already noted in a previous reduction that $B$ has a lot of structure and that $m=\frac{4}{3}n$. $A_r$ acts as a garbage collector for unnecessary agents created due to rounding issues, as will become clear below. Putting this information together, we have 
    \begin{align*}
        \vert A\vert&=\vert A_v\vert+\vert A_c\vert+\vert A_t\vert+\vert A_r\vert\\
        &=8n+8m+4\cdot 12\cdot n\cdot\gamma_*+\vert A_r\vert \\
        & = 8\left(n+\frac{4}{3}n\right)+48\cdot n\cdot 2\left(\left\lfloor\vert A\vert^{1/2}\right\rfloor+1\right)+\vert A_r\vert \\
        & = \frac{56}{3}n+96n\left(\left\lfloor\vert A\vert^{1/2}\right\rfloor+1\right)+\vert A_r\vert
    \end{align*}
    Then, as an intermediary step, we solve the equation $x=\frac{56}{3}n+96n(x^{1/2}+1)$ for $x$, which yields the function $x(n)=\frac{8}{3}(1728n^2+43n+24\sqrt{6}\sqrt{864n^4+43n^3})$. Here, even when $n\in\mathbb N$, $x(n)$ might not be an integer value. Hence, we take the ceiling of this function to arrive at the size of $A$ that we specified in Equation \ref{eq:numagentssmi}.
    Now, we can calculate $\gamma_*$ explicitly from $\vert A\vert$, i.e., 
    \begin{align*}
        \gamma_*&=2\left(\left\lfloor\vert A\vert^{1/2}\right\rfloor+1\right)\\
        &= 2\left(\left\lfloor\sqrt{\left\lceil \frac{8}{3}\left(1728n^2+43n+24\sqrt{6}\sqrt{864n^4+43n^3}\right) \right\rceil}\right\rfloor+1\right),
    \end{align*} 
    and thus 
    \begin{align*}
        \vert A_t\vert &= 48\cdot n\cdot\gamma_*\\
        &= 96 n\left(\left\lfloor\sqrt{\left\lceil \frac{8}{3}\left(1728n^2+43n+24\sqrt{6}\sqrt{864n^4+43n^3}\right) \right\rceil}\right\rfloor+1\right).
    \end{align*}
    Now, it remains to specify the size of $A_r$, which we can do as follows:    
    \begin{align*}
        \vert A_r\vert&=\vert A\vert-(\vert A_v\vert+\vert A_c\vert+\vert A_t\vert) \\
        &= \left\lceil \frac{8}{3}\left(1728n^2+43n+24\sqrt{6}\sqrt{864n^4+43n^3}\right) \right\rceil \\
        &\;\;\;\;\;\;- \left(\frac{56}{3}n + 96 n\left(\left\lfloor\sqrt{\left\lceil \frac{8}{3}\left(1728n^2+43n+24\sqrt{6}\sqrt{864n^4+43n^3}\right) \right\rceil}\right\rfloor+1\right)\right).
    \end{align*} 
    Notice that all components in the equation except of $-\frac{56}{3}n$ are integers and that, by the structure of {\sc (2,2)-e3-sat} instances, it must be that $\frac{2}{3}n\in \mathbb N$. Thus, because $m=\frac{4}{3}n$ must also be a positive integer, we can assume that $n\in3\mathbb N$, in which case $\vert A_r\vert$ is an integer value. We need to verify that $\vert A_r\vert$ is non-negative.

    \begin{claim}
        \label{claim:arnonneg}
        For any integer $n\geq 1$, $\vert A_r\vert> 0$.
    \end{claim}
    \begin{proof}[Proof of Claim \ref{claim:arnonneg}]
    \renewcommand{\qedsymbol}{$\blacksquare$}

    First, we define 
    \begin{align*}
    R(n) &=24\sqrt{6}\sqrt{864n^4+43n^3},\\
    a(n) &=1728n^2+43n\text{, and}\\
    C(n)&=\left\lceil\frac{8}{3}\left(a(n)+R(n)\right)\right\rceil.
    \end{align*}
    Notice that 
    \begin{align*}
    R(n)^2 &= 2985984n^4+148608n^3 \text{, and}\\
    a(n)^2&=2985984n^4+148608n^3+1849n^2.
    \end{align*}
    Therefore, 
    $$R(n)^2=a(n)^2-1849n^2,$$
    and thus, by $n\geq 1$,
    \begin{align*}
        (a(n)-1)^2 &=a(n)^2-2a(n)+1 \\
        &= 2985984n^4+148608n^3-1607n^2 -86n+1\\
        &= R(n)^2-1607n^2 -86n+1\\
        &<R(n)^2\\
        &< a(n)^2,
    \end{align*}
    so
    $$a(n)-1<R(n)<a(n).$$
    Clearly, there must exist some $0<\varepsilon<1$ such that 
    $$ R(n)=a(n)-1+\varepsilon.$$
    Thus, substituting this expression for $R(n)$ in $C(n)$, we get
    \begin{align*}
    C(n) &= \left\lceil\frac{8}{3}\left(2a(n)+\varepsilon-1\right)\right\rceil\\
    &= \left\lceil
    9216n^2+229n+\frac{n}{3}+\frac{8}{3}\varepsilon-\frac{8}{3}\right\rceil.
    \end{align*}
    We define 
    \begin{align*}
    M(n)&=9216n^2+229n \text{, and}\\
    k(n)&=\left\lceil \frac{1}{3}(n+8\varepsilon-8)\right\rceil.
    \end{align*}
    Clearly $M(n)\in\mathbb N$ because $n$ is a positive integer, so we can take $M(n)$ out of the ceiling in $C(n)$ and simply state
    $$C(n)=M(n)+k(n).$$
    Notice that, for $n\geq 1$ and $0<\varepsilon<1$, we have that 
    $$-2\leq k(n)\leq \left\lceil\frac{n}{3}\right\rceil.$$
    Also, observe that 
    \begin{align*}
    (96n+1)^2 &= 9216n^2+192n+1 \text{, and}\\
    (96n+2)^2 &= 9216n^2+384n+4.
    \end{align*}
    Hence, as $n\geq 1$,
    $$(96n+1)^2 < C(n) < (96n+2)^2 \text{, and}$$
    $$\left\lfloor\sqrt{C(n)}\right\rfloor=96n+1.$$
    Thus, finally, 
    \begin{align*}
    \vert A_r\vert &= C(n)-\frac{56}{3}n-96n\left\lfloor\sqrt{C(n)}\right\rfloor-96n\\
    &= M(n)+k(n)-\frac{56}{3}n-96n(96n+1)-96n\\
    &= 9216n^2+229n+k(n)-\frac{56}{3}n-9216n^2-96n-96n\\
    &= \frac{55}{3}n+k(n)\\
    &>18n-2 \text{, by $k(n)\geq-2$}\\
    &>0 \text{, by $n\geq 1$}.
    \end{align*}    
    \end{proof}
    
    For each variable $V_i\in V$, we construct the variable gadget $G_{V_i}$ as in the proof of Theorem \ref{thm:1-Maxsmi}, and, for each clause $C_j\in C$, we construct the clause gadget $G_{C_j}$ as in the proof of Theorem \ref{thm:1-Maxsmi}.

    Now, furthermore, for each $x_i^r$ agent in each variable gadget $G_{V_i}$, we construct $\gamma_*$ many \textbf{connector gadgets} $G_{T_{i,\gamma}^r}$ (where $1\leq \gamma\leq \gamma_*$) consisting of the twelve agents $t_{i,\gamma}^{r,1}, t_{i,\gamma}^{r,2},\dots,t_{i,\gamma}^{r,12}$, with preferences as follows:
    \[
    \begin{minipage}{0.45\textwidth}
    \begin{align*}
    t_{i,\gamma}^{r,1} &: t_{i,\gamma}^{r,2} \; c(t_{i,\gamma}^{r,1}) \; t_{i,\gamma}^{r,12} \\
    t_{i,\gamma}^{r,2} &: t_{i,\gamma}^{r,3} \; t_{i,\gamma}^{r,1} \\
    t_{i,\gamma}^{r,3} &: t_{i,\gamma}^{r,4} \; t_{i,\gamma}^{r,2} \\
    t_{i,\gamma}^{r,4} &: t_{i,\gamma}^{r,5} \; t_{i,\gamma}^{r,3} \\
    t_{i,\gamma}^{r,5} &: t_{i,\gamma}^{r,6} \; t_{i,\gamma}^{r,4} \\
    t_{i,\gamma}^{r,6} &: t_{i,\gamma}^{r,5} \; t_{i,\gamma}^{r,7}
    \end{align*}
    \end{minipage}
    \hfill
    \begin{minipage}{0.45\textwidth}
    \begin{align*}
    t_{i,\gamma}^{r,7} &: t_{i,\gamma}^{r,6} \; x_{i,\gamma}^r \; t_{i,\gamma}^{r,8} \\
    t_{i,\gamma}^{r,8} &: t_{i,\gamma}^{r,7} \; t_{i,\gamma}^{r,9} \\
    t_{i,\gamma}^{r,9} &: t_{i,\gamma}^{r,8} \; t_{i,\gamma}^{r,10} \\
    t_{i,\gamma}^{r,10} &: t_{i,\gamma}^{r,9} \; t_{i,\gamma}^{r,11} \\
    t_{i,\gamma}^{r,11} &: t_{i,\gamma}^{r,10} \; t_{i,\gamma}^{r,12} \\
    t_{i,\gamma}^{r,12} &: t_{i,\gamma}^{r,11} \; t_{i,\gamma}^{r,1}
    \end{align*}
    \end{minipage}
    \]
    where the $c(t_{i,\gamma}^{r,1})$ entry will be specified later. The construction is illustrated in Figure \ref{fig:connectorsmi}. $G_{T_{i,\gamma}^r}$ only admits two perfect matchings: 
    \begin{align*}
        M^1_{i,\gamma,r}&=\{\{t_{i,\gamma}^{r,1},t_{i,\gamma}^{r,2}\},\{t_{i,\gamma}^{r,3},t_{i,\gamma}^{r,4}\},\{t_{i,\gamma}^{r,5},t_{i,\gamma}^{r,6}\},\{t_{i,\gamma}^{r,7},t_{i,\gamma}^{r,8}\},\{t_{i,\gamma}^{r,9},t_{i,\gamma}^{r,10}\},\{t_{i,\gamma}^{r,11},t_{i,\gamma}^{r,12}\}\} \text{ and}\\
        M^2_{i,\gamma,r}&=\{\{t_{i,\gamma}^{r,1},t_{i,\gamma}^{r,12}\},\{t_{i,\gamma}^{r,2},t_{i,\gamma}^{r,3}\},\{t_{i,\gamma}^{r,4},t_{i,\gamma}^{r,5}\},\{t_{i,\gamma}^{r,6},t_{i,\gamma}^{r,7}\},\{t_{i,\gamma}^{r,8},t_{i,\gamma}^{r,9}\},\{t_{i,\gamma}^{r,10},t_{i,\gamma}^{r,11}\}\},
    \end{align*}
    where $M^1_{i,\gamma,r}$ admits no blocking pairs within the gadget and $M^2_{i,\gamma,r}$ admits the blocking pair $\{t_{i,\gamma}^{r,5},t_{i,\gamma}^{r,6}\}$ within the gadget (not counting blocking pairs that may also involve $c(t_{i,\gamma}^{r,1})$ or $x_i^r$ entries).

    \begin{figure}[!tbh]
        \centering
        \includegraphics[width=11cm]{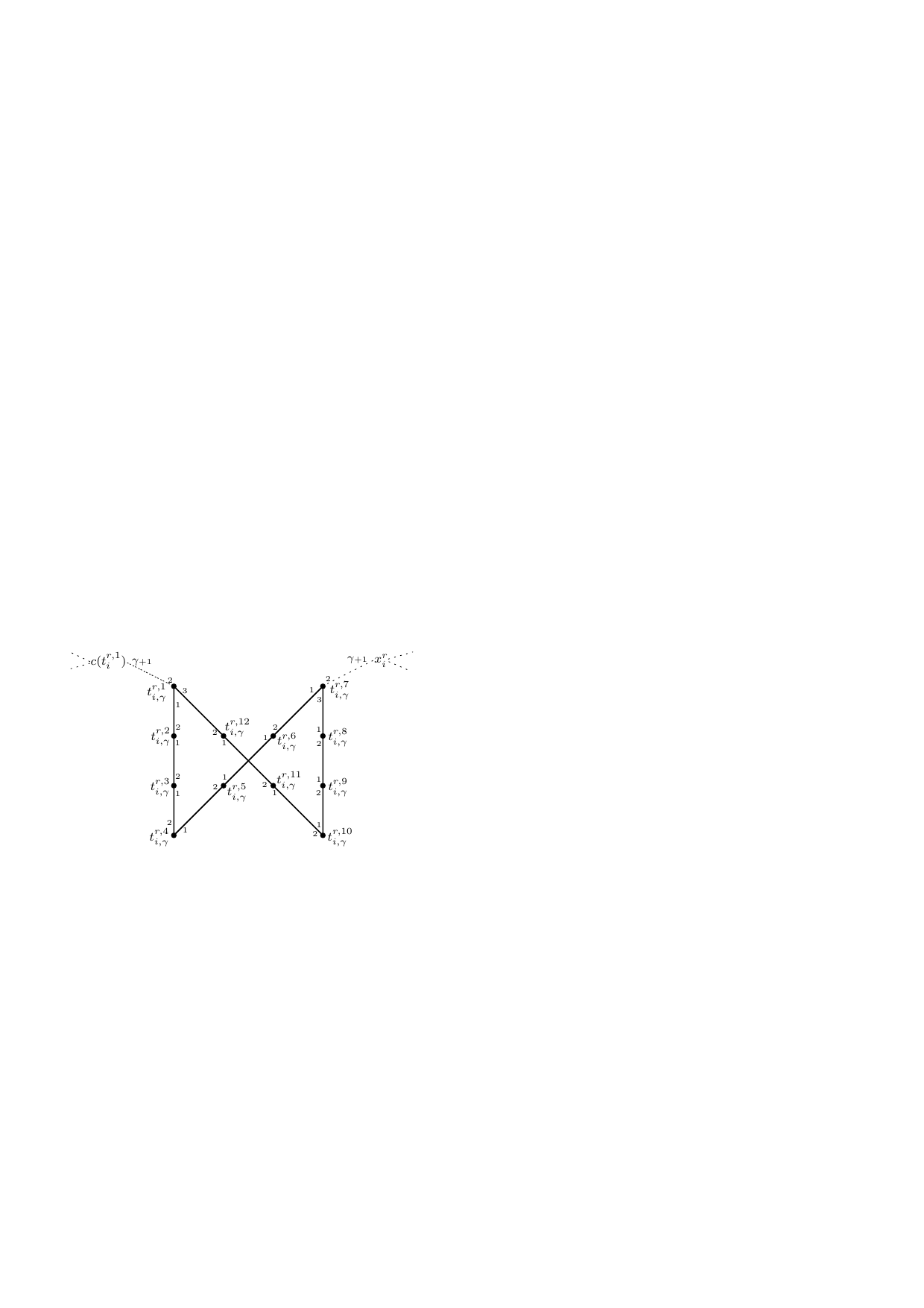}
        \caption{Illustration of the connector gadget construction for $G_{T_{i,\gamma}^r}$}
        \label{fig:connectorsmi}
        \Description{Twelve agents forming an eight-figure.}
    \end{figure}
    
    We now connect the gadgets as follows. For each vertex gadget $G_{V_i}$, we replace the entry $c(x_i^r)$ in the preference list of $x_i^r$ (from the previous construction in the proof of Theorem \ref{thm:1-Maxsmi}) with $\gamma_*$ many entries $t_\gamma(x_i^r)$ (for $1\leq \gamma\leq \gamma_*$). Specifically, let $t_\gamma(x_i^r)=t_{i,\gamma}^{r,7}$. In the construction of $G_{T_{i,\gamma}^r}$, we already included $x_i^r$ as the second-ranked entry in the preference list of $t_{i,\gamma}^{r,7}$. Thus, the edge that previously connected the variable gadget to the clause gadget was replaced by edges that now connect the variable gadget to the $\gamma_*$ many connector gadgets.
    
    For the links between connector gadgets and clause gadgets, let $C_j^s$ correspond to the $s$th literal in $C_j\in C$, which is either an unnegated or a negated occurrence of some variable $V_i$. Replace each $x(c_j^s)$ entry from the previous construction in the proof of Theorem \ref{thm:1-Maxsmi} with $\gamma_*$ many entries $t_\gamma(c_j^s)$ (for $1\leq \gamma\leq \gamma_*$). Now, if processing $B$ from left to right, $C_j^s$ is the
    \begin{itemize}
        \item first unnegated occurrence of $V_i$, then let $t_\gamma(c_j^s)=t_{i,\gamma}^{1,1}$ (for all $1\leq \gamma\leq \gamma_*$) and let $c(t_{i,\gamma}^{1,1})=c_j^s$;
        \item second unnegated occurrence of $V_i$, then let $t_\gamma(c_j^s)=t_{i,\gamma}^{2,1}$ (for all $1\leq \gamma\leq \gamma_*$) and let $c(t_{i,\gamma}^{2,1})=c_j^s$;
        \item first negated occurrence of $V_i$, then let $t_\gamma(c_j^s)=t_{i,\gamma}^{3,1}$ (for all $1\leq \gamma\leq \gamma_*$) and let $c(t_{i,\gamma}^{3,1})=c_j^s$;
        \item second negated occurrence of $V_i$, then let $t_\gamma(c_j^s)=t_{i,\gamma}^{4,1}$ (for all $1\leq \gamma\leq \gamma_*$) and let $c(t_{i,\gamma}^{4,1})=c_j^s$.
    \end{itemize} 

    Informally, we simply replace the communication edges between the vertex and the clause gadgets in the previous construction with $\gamma_*$ occurrences of a connector gadget copy in this construction. Lastly, we leave the preference lists of all agents in $A_r$ (if any exist) empty (essentially discarding them as they can never be matched). This is illustrated in Figure \ref{fig:gapgadgets}.

    \begin{figure}[!tbh]
        \centering
        \includegraphics[width=11cm]{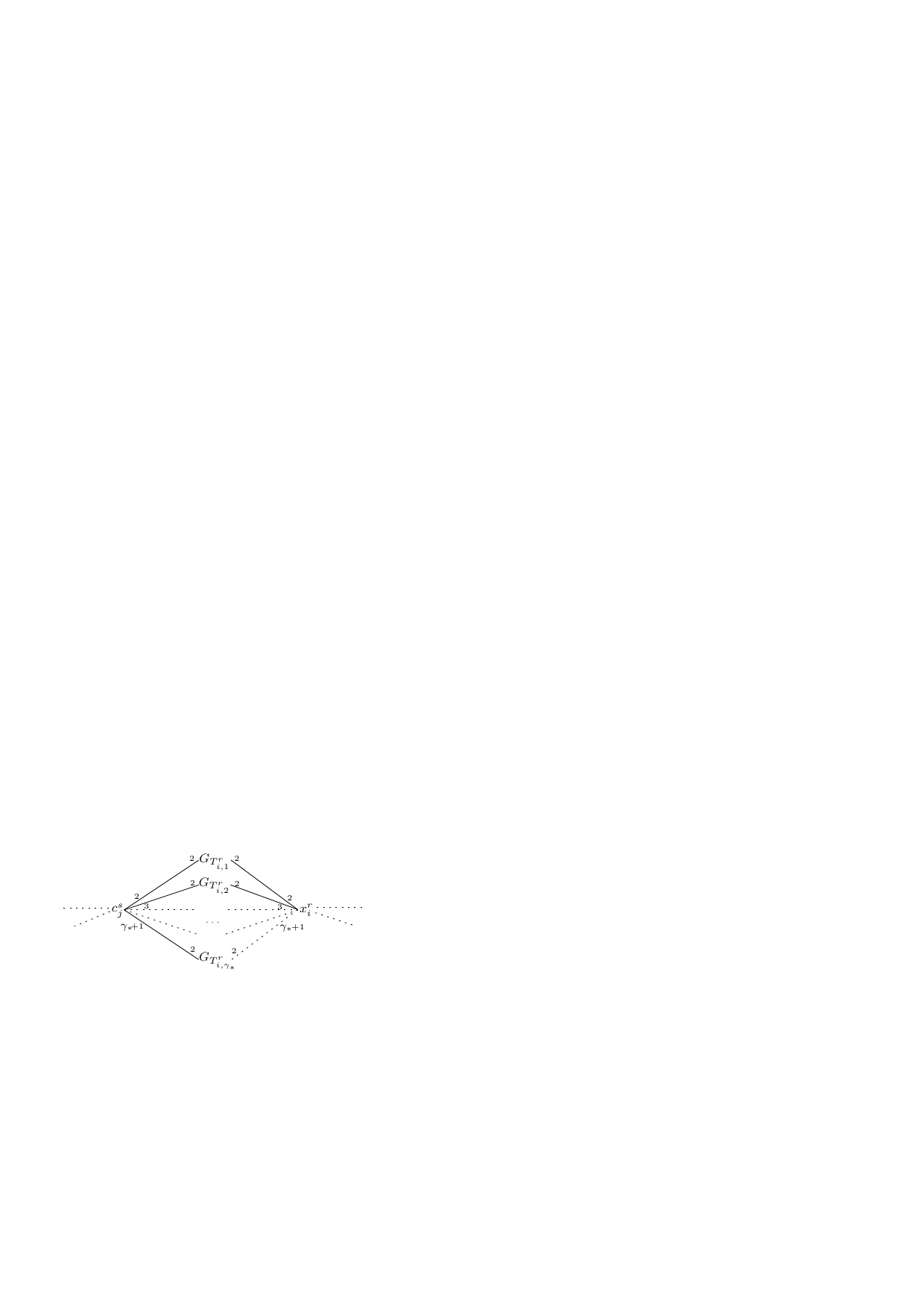}
        \caption{Illustration of the gap-introducing gadget connections}
        \label{fig:gapgadgets}
        \Description{$\gamma_*$ gadgets placed between $c_j^s$ and $x_i^r$ agents.}
    \end{figure}

    Now, for the sake of argument below, we also construct the {\sc smi} instance $I=(A^I,\succ^I)$ from $B$ as in the proof of Theorem \ref{thm:1-Maxsmi}. From Claims \ref{claim:birosm}-\ref{claim:nocommblocking} we know that $B$ is satisfiable $\Leftrightarrow$ $I$ admits a perfect matching with at most $n+m$ blocking pairs $\Leftrightarrow$ $I$ admits a perfect matching in which no communication edge between clause and variable gadgets is blocking $\Leftrightarrow$ $I$ admits a perfect matching $M$ with $\max_{a_r\in A^I}\vert bp_{a_r}^I(M)\vert\leq 1$. We will now establish the following claims.

    \begin{claim}
        \label{claim:gapsmi1}
        If $I$ admits a perfect matching $M$ with $\max_{a_r\in A^I}\vert bp_{a_r}^I(M)\vert\leq 1$ then $J$ admits a maximum-cardinality matching $M'$ such that $\max_{a_r\in A}\vert bp_{a_r}^J(M')\vert\leq 1$.
    \end{claim}
    \begin{proof}[Proof of Claim \ref{claim:gapsmi1}]
    \renewcommand{\qedsymbol}{$\blacksquare$}
        Suppose that $I$ admits a perfect matching $M$ with $\max_{a_r\in A^I}\vert bp_{a_r}^I(M)\vert\leq 1$. To start, let $M'=M$. Now, for every communication edge $\{x_i^r,c_j^{s}\}$ in $I$, either $M(x_i^r)\succ_{x_i^r}^Ic_j^{s}$ or $M(c_j^{s})\succ_{c_j^{s}}^Ix_i^r$, otherwise they would be a blocking pair (and we know by Claim \ref{claim:nocommblocking} that this cannot be the case). If $M(x_i^r)\succ_{x_i^r}^Ic_j^{s}$, then, for every $1\leq \gamma\leq\gamma_*$, add $M_{i,\gamma,r}^1$ to $M'$, where $M_{i,\gamma,r}^1$ is as defined below the construction of the connector gadgets. Otherwise, add $M_{i,\gamma,r}^2$ (also defined below the connector gadget) to $M'$. Consider the final matching $M'$. Clearly, $M'$ is a maximum-cardinality matching of $J$ ($M$ is a perfect matching of $I$, and the only unmatched agents in $J$, if any, are the agents in $A_r$ who can never be matched due to empty preference lists). Furthermore, for every agent $t_{i,\gamma}^{r,1}$, either $t_{i,\gamma}^{r,2}=M'(t_{i,\gamma}^{r,1})\succ_{t_{i,\gamma}^{r,1}} c_j^{s}$ or $p_j^{s}=M'(c_j^{s})\succ_{c_j^{s}} t_{i,\gamma}^{r,1}$. Similarly, for every agent $t_{i,\gamma}^{r,7}$, either $M'(t_{i,\gamma}^{r,7})=t_{i,\gamma}^{r,6}\succ_{t_{i,\gamma}^{r,7}} x_i^{r}$ or $M'(x_i^{r})\succ_{x_i^{r}} t_{i,\gamma}^{r,7}$. Hence, no $t_{i,\gamma}^{r,1}$ and no $t_{i,\gamma}^{r,7}$ can block with an agent outside the connector gadget, and we already argued below the construction of the connector gadgets that no such agent blocks with agents within the connector gadget. Hence, $\max_{a_r\in A}\vert bp_{a_r}^J(M')\vert=\max_{a_r\in A^I}\vert bp_{a_r}^I(M)\vert\leq 1$ as required.
    \end{proof}

    \begin{claim}
        \label{claim:gapsmi2}
        If, for every perfect matching $M$ of $I$, $\max_{a_r\in A^I}\vert bp_{a_r}^I(M)\vert> 1$, then, for every maximum-cardinality $M'$ of $J$, $\max_{a_r\in A}\vert bp_{a_r}^J(M')\vert\geq \frac{\gamma_*}{2}$.
    \end{claim}
    \begin{proof}[Proof of Claim \ref{claim:gapsmi2}]
    \renewcommand{\qedsymbol}{$\blacksquare$}
        Clearly, every maximum-cardinality matching of $J$ consists of a perfect matching of $I$ together with either $M_{i,\gamma,r}^1$ or $M_{i,\gamma,r}^2$ for every $1\leq i\leq n$, $1\leq\gamma\leq \gamma_*$ and $1\leq r\leq 4$.
        
        We know from Claims \ref{claim:birosm}-\ref{claim:nocommblocking} that $I$ admits a perfect matching $M$ with $\max_{a_r\in A^I}\vert bp_{a_r}^I(M)\vert\leq 1$ if and only if $I$ admits a perfect matching in which no communication edge between clause and variable gadgets is blocking. Thus, because $\max_{a_r\in A^I}\vert bp_{a_r}^I(M)\vert> 1$ for all perfect matchings $M$ of $I$, then, for every such $M$, there must exist a communication edge that is blocking. Formally, for some communication edge $\{x_i^r,c_j^{s}\}$ (in $I$), both $c_j^{s}\succ_{x_i^r}^IM(x_i^r)$ and $x_i^r\succ_{c_j^{s}}^IM(c_j^{s})$. However, by construction of $\succ_{x_i^r}$ and $\succ_{c_j^{s}}$ (where $c_j^{s}$ and $x_i^r$ are replaced by $\gamma_*$ many copies of $t_{i,\gamma}^{r,7}$ and $t_{i,\gamma}^{r,1}$, respectively), we must also have that $t_{i,\gamma}^{r,7}\succ_{x_i^r}M'(x_i^r)$ and $t_{i,\gamma}^{r,1}\succ_{c_j^{s}}M'(c_j^{s})$ for any maximum-cardinality matching $M'$ of $J$ where $M\subset M'$, and for any $1\leq \gamma\leq\gamma_*$. Now, to minimise $\max_{a_i\in\{x_i^r,c_j^{s}\}}\vert bp_{a_i}^J(M')\vert$, we can do no better than to add $M_{i,\gamma,r}^1$ for $1\leq\gamma\leq\frac{\gamma_*}{2}$ and $M_{i,\gamma,r}^2$ for $\frac{\gamma_*}{2}<\gamma\leq\gamma_*$ (recall that $\gamma_*$ is a multiple of 2 by construction). However, this necessarily yields at least $\frac{\gamma_*}{2}$ blocking pairs for each of $x_i^r$ and $c_j^{s}$, and therefore $\max_{a_i\in\{x_i^r,c_j^{s}\}}\vert bp_{a_i}^J(M')\vert\geq\frac{\gamma_*}{2}$. Thus, for every maximum-cardinality $M'$ of $J$, $\max_{a_r\in A}\vert bp_{a_r}^J(M')\vert\geq \max_{a_i\in\{x_i^r,c_j^{s}\}}\vert bp_{a_i}^J(M')\vert\geq\frac{\gamma_*}{2}$ as required.
    \end{proof}

    Note that, because we assume that $n\geq 3$, $\frac{\gamma_*}{2}>290>1$, so we always introduce a gap. Our claims above establish that if $B$ is a yes-instance (i.e., it is satisfiable), then the optimal value of {\sc Minimax-AlmostStable-Max-smi} is at most 1. On the other hand, if $B$ is a no-instance (i.e., it is not satisfiable), then the optimal value of {\sc Minimax-AlmostStable-Max-smi} is at least 
    $$\frac{\gamma_*}{2}=\frac{2(\lfloor\vert A\vert^{1/2}\rfloor+1)}{2}=\left\lfloor\vert A\vert^{1/2}\right\rfloor+1> \vert A\vert^{1/2}.$$
    Thus, any approximation algorithm that returns a solution that is at most a square root of the total number of agents away from the optimal solution could decide the {\sf NP-complete} problem {\sc (2,2)-e3-sat}. This establishes the inapproximability result as required.
\end{proof}

\section{Exact Integer Linear Programs and Experiments}
\label{sec:ilp}

In the previous section, we explored approximation algorithms as a way to deal with the strong intractability results for minimax almost-stability. However, we also encountered a strong intractability frontier there when seeking approximation guarantees with respect to the number of agents, as seen in Theorem \ref{thm:inapproxsmi}. Our final contribution consists of two compact integer linear programming formulations (ILP) that solve our central optimisation problems of interest to optimality and some experimental results showing that the minimum maximum number of blocking pairs per agent can be expected to be small.

\subsection{ILP Formulations}

Let $I=(A,\succ)$ be a problem instance and consider the ILP below.
{\allowdisplaybreaks
\begin{align*} 
\min &\; r \\
\text{s.t.} \sum_{a_j\in A\setminus\{a_i\}}x_{ij} &\leq 1 & \forall a_i\in A \\ 
\sum_{a_j : a_j\text{ not on }\succ_i}x_{ij} &= 0 & \forall a_i\in A \\ 
x_{ij}&=x_{ji}   & \forall a_i,a_j\in A \\
b_{ij}&=b_{ji}   & \forall a_i,a_j\in A \\
\sum_{a_k:a_k\succeq_ia_j}x_{ik}+\sum_{a_k:a_k\succeq_ja_i}x_{jk} + b_{ij} &\geq 1 &\forall a_i,a_j\in A\\
\sum_{a_j\in A\setminus\{a_i\}}b_{ij} &\leq r & \forall a_i\in A\\
x_{ij}, b_{ij} &\in\{0,1\} &\forall a_i,a_j\in A\\
r &\in\mathbb Z^{\geq 0}
\end{align*}}

It is easy to see that this ILP corresponds to {\sc Minimax-AlmostStable-sri}: we introduce binary variables $x_{ij}\in\{0,1\}$ to indicate matches between pairs of agents $\{a_i,a_j\}\in A$, as well as binary variables $b_{ij}\in\{0,1\}$ to indicate blocking pairs between agents $a_i$ and $a_j$. We enforce that the $x_{ij}$ variables, collectively, represent a matching among acceptable agents in $I$ (allowing for the possibility of an agent being matched to themselves), and enforce the almost-stability constraint $\sum_{a_k:a_k\succeq_ia_j}x_{ik}+\sum_{a_k:a_k\succeq_ja_i}x_{jk} + b_{ij} \geq 1 $ to indicate that either there exists some agent $a_k$ who is at least as good as $a_j$ (according to $\succ_i$) that $a_i$ is matched to, or vice versa $a_j$ is matched to someone at least as good as $a_i$, or finally the agents are blocking (i.e., $b_{ij}=1$). Now the minimax constraint is straightforward: no agent finds themself in more than $r$ blocking pairs. Finally, we minimise $r$ subject to these constraints.

When requiring, furthermore, that solution corresponds to a maximum-cardinality matching (e.g., for {\sc Minimax-AlmostStable-Max-smi}), we can change the objective function to maximise (rather than minimise) the function 
$$\max \; \left((\vert A\vert+1)\cdot \sum_{a_i\in A}\sum_{a_j\in A\setminus \{a_i\}}x_{ij}\right)-r$$
instead. Clearly $r\leq \vert A\vert$, so maximising the matching size is always the primary objective. Subject to finding a maximum-cardinality matching, we then minimise $r$ as required. We note that this also works for {\sc Minimax-AlmostStable-Max-sri}.

\subsection{Experimental Results}

In this paper, we set out to answer how and when we can find matchings that distribute the amount of instability across the set of agents, minimising the concentration of instability on individual agents and thus avoiding unfairness and a greater risk of unravelling. We will now verify using randomly generated {\sc sri} and {\sc smi} instances whether these matchings satisfying this distributional objective are likely to exist in practice, and whether our ILP models can be used to find them in a reasonable amount of time.

For our random experiments, we generated instances with preference lists sampled uniformly at random, a common practice in algorithmic experiments in this area (e.g., see references \cite{delorme2019mathematical, pettersson2021improving,glitzner2025empirics}). All implementations were written in Python, and all computations were performed on the {\tt fataepyc} cluster.\footnote{See \href{https://ciaranm.github.io/fatanodes.html}{https://ciaranm.github.io/fatanodes.html}.} We used four compute nodes at a time, each equipped with dual AMD EPYC 7643 CPUs and 2TB RAM. All code is publicly available, see reference \cite{experimentsCode}. The experiments are seeded and can thus be easily replicated. All ILP models are solved using the PuLP package with the GurobiPy solver (under a Gurobi academic license). 

In the random instance generation, we controlled the number of agents $n\in\{50,100\}$, the preference list length $l\in\{5,15,25\}$, and whether the instance is bipartite or not (i.e., {\sc sri} or {\sc smi} instances).\footnote{For non-bipartite instances, it cannot be avoided in some configurations that the preference list lengths are not exactly $l$. For bipartite instances, we forced one side of agents to have a preference list length $l$, but the preference list length of agents on the other side depends on the preferences that are randomly generated and might also not be exactly $l$. An alternative approach would have been to generate random regular bipartite graphs and then sample preferences over neighbours.} All results are averaged over 3000 instances per configuration. Table \ref{table:expresults} summarises our key findings, indicating the matching size, ratio of stable matchings, and the maximum number of blocking pairs per agent across different settings and for different problems. Note that we did not include any results for the {\sc Minimax} ILP in the {\sc smi} setting, as any {\sc smi} instance admits a stable matching when not requiring a maximum-cardinality matching. 

\begin{table}[!htb]
    \small
    \centering
    \caption{Experimental results}
    \begin{tabular}{c c c c c c c c c c c}
        \toprule
        & & \multicolumn{6}{c}{\sc sri} & \multicolumn{3}{c}{\sc smi} \\
        \cmidrule(lr){3-8} \cmidrule(lr){9-11}
        & & \multicolumn{3}{c}{\sc Minimax-Max} & \multicolumn{3}{c}{\sc Minimax} & \multicolumn{3}{c}{\sc Minimax-Max} \\
        \cmidrule(lr){3-5} \cmidrule(lr){6-8} \cmidrule(lr){9-11}
        &  & size & stable & max-bp & size & stable & max-bp & size & stable & max-bp \\
        \midrule
        \multirow{3}{*}{$n=50$} 
        & $l=5$ & 25.00 & 4.30\% & 0.96 & 23.48 & 77.37\% & 0.23 & 24.89 & 6.03\% & 1.01 \\
        & $l=15$ & 25.00 & 48.17\% & 0.52 & 24.76 & 60.63\% & 0.39 & 25.00 & 89.00\% & 0.11 \\
        & $l=25$ & 25.00 & 66.30\% & 0.34 & 24.98 & 67.17\% & 0.33 & 25.00 & 100.00\% & 0.00 \\
        \midrule
        \multirow{3}{*}{$n=100$}
        & $l=5$ & 50.00 & 0.07\% & 1.45 & 47.02 & 78.20\% & 0.22 & 49.74 & 0.20\% & 1.50 \\
        & $l=15$ & 50.00 & 13.20\% & 1.08 & 49.40 & 51.77\% & 0.48 & 50.00 & 21.07\% & 0.90 \\
        & $l=25$ & 50.00 & 18.27\% & 0.85 & 49.88 & 54.30\% & 0.46 & 50.00 & 21.37\% & 0.79 \\
        \midrule
        \multirow{3}{*}{$n=200$}
        & $l=5$ & 100.00 & 0.00\% & 1.68 & 94.03 & 77.97\% & 0.22 & 99.38 & 0.00\% & 1.80 \\
        & $l=15$ & 100.00 & 2.17\% & 1.84 & 98.78 & 47.77\% & 0.52 & 100.00 & 2.50\% & 1.76 \\
        & $l=25$ & 100.00 & 6.67\% & 1.61 & 99.66 & 41.47\% & 0.59 & 100.00 & 5.70\% & 1.61 \\
        \bottomrule
    \end{tabular}
    \label{table:expresults}
\end{table}

We find that, on average, all {\sc sri} instances admit perfect matchings, even when $l=5$. However, there is a significant trade-off in stability: only around $4.3\%$ of these matchings are stable, with the remaining matchings requiring some agent to be in a blocking pair. Notably, though, when $n=50$, then, in expectation, it is always possible to find a perfect matching in which no agent is in more than one blocking pair. The expected maximum number of blocking pairs per agent decreases as $l$ increases, which can be expected as the ratio of stable maximum-cardinality matchings greatly increases. For $n=100$, the ratio of stable maximum-cardinality matchings is much lower, starting at only $0.07\%$ when $l=5$ and going up to only $18.27\%$ when all agents rank roughly a quarter of the other agents. Still, on average, it is possible to find a maximum-cardinality matching where every agent is in fewer than two blocking pairs. When eliminating the maximum-cardinality requirement, the results greatly improve -- with only a minor trade-off in size, more than half of the instances become solvable across all settings, and the expected maximum number of blocking pairs per agent becomes much fewer than 1.

For the {\sc smi} instances, we also find that stable maximum-cardinality matchings are rare when preference lists are short -- only around $0.2\%$ of instances admit such matchings when $n=100$ and $l=5$, and none of the generated instances do when $n=200$ and $l=5$. However, the situation improves significantly when the preference list length increases. Notice the special case when $n=50$ and $l=25$: in the {\sc smi} setting, this means that each side has $\frac{n}{2}=25=l$ agents, so preference lists are complete and thus a stable perfect matching must always exist. We highlight that, as expected, the ratio of instances that admit stable maximum-cardinality matchings is higher for {\sc smi} than for {\sc sri}, but interestingly the average maximum number of blocking pairs per agent is larger when $l=5$; this is likely because the edges are more concentrated in {\sc smi} instances compared to {\sc sri} instances with the same number of agents and the same preference list length, so maximising for size might require a stronger sacrifice in stability.

Beyond average numbers, it might be interesting to know the maximum of the minimax number of blocking pairs per agent under the maximum-cardinality requirement. In the {\sc smi} setting, for $n=50$, these were $(3,1,0)$ for $l=(5,15,25)$, respectively, across all 3000 instances. For $n=100$, these were $(4,2,2)$, and for $n=200$, they were $(5,5,4)$. In the {\sc sri} setting, these were $(2,1,1)$, $(3,2,2)$, and $(4,5,5)$ for $n=50$, $n=100$, and $n=200$, respectively. When eliminating the maximum-cardinality requirement, none of the 27,000 {\sc sri} instances across the nine different settings required any agent to be in more than one blocking pair.

The results also proved that our ILP models can be solved very quickly in the parameter regimes that we considered. Naturally, the number of variables and constraints grows proportionally to the number of agents and the preference list lengths, and thus, the required time to solve the ILP models grows proportionally too. However, even when $l=25$, the average time to solve the models corresponding to {\sc Minimax-AlmostStable-Max-sri} to optimality was 0.38 seconds, 1.07 seconds, and 4.36 seconds for $n=50$, $n=100$, and $n=200$, respectively. For {\sc smi} instances, these times drop to 0.32 seconds, 0.82 seconds, and 3.23 seconds, respectively. For the models corresponding to {\sc Minimax-AlmostStable-sri}, these times are 0.35 seconds, 1.02 seconds, and 4.63 seconds, respectively.

Overall, we can draw two positive conclusions from these experiments: in practice, it is unlikely to encounter instances that require more than one blocking pair per agent when preference lists are sufficiently large, and our ILP formulations are a very viable option to solve our {\sf NP-hard} optimisation problems of interest in practice, despite the strong intractability results that we encountered throughout this paper.

\section{Conclusion}
\label{sec:conclusion}

\begin{table*}[!tbh]
    \centering
    \small
    \setlength{\aboverulesep}{0pt}
    \setlength{\belowrulesep}{0pt}
    \renewcommand{\arraystretch}{1.1}
    \renewcommand\cellgape{\Gape[2pt]}  
    \caption{Overview of key complexity results. Here, $d$ denotes the maximum preference list length (also denoted by $d_{\max}$ in other parts of the paper), and CL indicates complete preference lists. Our contributions are coloured in \textcolor{blue}{blue}. The following symbols are used: ${\top}$ (result holds trivially), $^*$ (we newly introduced these problems).}
    \begin{tabular}{c | c | c}
        \toprule
        & {\sc sri} & {\sc Max-smi}  \\
        \midrule
        \makecell{{\sc MinBP}\\($\kappa=\min \vert bp(M)\vert$)} 
                   & \makecell{{\sf P} ($d\leq 2$) \cite{biro12}, {\sf XP} ($\kappa$) \cite{abraham06} \\ 
                   {\sf NP-h} ($d\leq 3$ and CL) \cite{abraham06,biro12}, {\sf W[1]-h} ($\kappa$) \cite{chen17}}
                   & \makecell{{\sf P} ($d\leq 2$ and CL) \cite{biro_sm_10}, {\sf XP} ($\kappa$) \cite{biro_sm_10}\\
                   {\sf NP-h} ($d\leq 3$) \cite{biro_sm_10}} \\
        \midrule
        \makecell{{\sc MinBA}\\($\kappa=\min \vert ba(M)\vert$)} 
                   & \makecell{{{\sf P} ($d\leq 2$) \cite{glitznermanlovepref}}, {\sf XP} ($\kappa$) \cite{chen17}\\
                   {\sf NP-h} ($d\leq 5$ \cite{chen17} {and CL \cite{glitznermanlovepref}}), {\sf W[1]-h} ($\kappa$) \cite{chen17}}
                   & \makecell{{\sf P} ($d\leq 2$ and CL) \cite{biro_sm_10}, {\sf XP} ($\kappa$) \cite{biro_sm_10}\\
                   {\sf NP-h} ($d\leq 3$) \cite{biro_sm_10}} \\
        \midrule
        \makecell{{\sc Minimax}$^*$\\($\kappa=\min\max \vert bp_i(M)\vert$)} 
                   & \textcolor{blue}{\makecell{{\sf P} ($d\leq 2$) [T\ref{thm:sriexact2}]\\
                   {\sf NP-h} ($\kappa=1$, $d\leq 10$ and CL) [T\ref{thm:1-Maxbounded}-\ref{thm:1-Maxcomplete}]}}
                   & \textcolor{blue}{\makecell{{\sf P} ($d\leq 2$ and CL) [T\ref{thm:smiexact},$\top$]\\
                   {\sf NP-h} ($\kappa=1$, $d\leq 3$) [T\ref{thm:1-Maxsmi}]}} \\
        \bottomrule
    \end{tabular}
    \label{table:results}
\end{table*}

We studied a family of natural optimisation problems arising from stable matching theory and initiated the study of a minimax notion of almost-stability. Across bipartite and general settings, our results, a subset of which are summarised in Table \ref{table:results}, reveal a sharp contrast between very strong intractability and tractable special cases, such as when preference lists are very short. Beyond their algorithmic significance, these findings also carry implications for multi-agent systems. In many practical applications where centralised coordination mechanisms are sought, solutions need to be robust to agent incentives to deviate, even when full stability is unattainable, and fairness considerations are crucial. We presented a new approach and characterised the computational possibilities and limits of achieving such robustness. Finally, we showed that although the intractability results hold even under strong restrictions, the ILP models that we proposed can be solved very efficiently in practice, thus posing a viable tool to real-world applications. Simultaneously, our experiments showed that a balanced solution that distributes the burden of instability well is usually attainable in practice, often requiring no agent to experience justified envy towards more than one other agent.

Many interesting directions remain open. For instance: 
\begin{enumerate}
    \item Does there exist a polynomial-time approximation algorithm for {\sc Minimax-AlmostStable-sri} with constant- or logarithmic-factor guarantees? We conjecture that the answer is no, but proving lower bounds appears to be a challenge. We were unable to come up with a construction that introduces arbitrarily large gaps, or a reduction that preserves sufficiently large gaps to rule out these kinds of approximation algorithms.
    \item We conjecture that {\sc Minimax-AlmostStable-sri} cannot be solved in polynomial time even when $d\leq 8$ (under standard complexity-theoretic assumptions), but that the optimal value of the problem under this restriction is at most 1. This would suggest a total-search type problem. What is the complexity of this problem under these restrictions?
\end{enumerate}

\begin{acks}
Frederik Glitzner is supported by a Minerva Scholarship from the School of Computing Science, University of Glasgow. We would like to thank the anonymous AAMAS reviewers for helpful comments and suggestions on an earlier version of this paper.
\end{acks}

\bibliographystyle{ACM-Reference-Format}
\bibliography{papers}

\end{document}